\documentclass[a4paper,12pt,reqno,english]{amsart}
\usepackage[utf8]{inputenc}
\usepackage[T1]{fontenc}
\usepackage{mathrsfs}
\usepackage{dsfont}
\usepackage{amsmath}
\usepackage{amssymb}
\usepackage{amsthm}
\usepackage{amsfonts}
\usepackage{amstext}
\usepackage{amsopn}
\usepackage{amsxtra}
\usepackage[foot]{amsaddr}
\usepackage{dsfont}
\usepackage{color}
\usepackage{esint}
\usepackage{graphicx}
\usepackage{thmtools}
\usepackage[margin=3cm]{geometry}
\usepackage[iso]{isodate}
\usepackage{xcolor}
\usepackage{todonotes}
\setuptodonotes{inline}
\usepackage[shortlabels]{enumitem}
\usepackage{bbm}
\usepackage{braket}
\usepackage{hyperref}
\usepackage[capitalise,sort]{cleveref}
\usepackage{lipsum}

\usepackage{chngcntr}
\counterwithin{figure}{section}
\counterwithin{equation}{section}
\counterwithin{table}{section}

\usepackage[style=alphabetic,
  sorting = nyt,
  sortcites=true,
  giveninits=true,
  date=year,
  isbn=false,
  maxbibnames=99,
  maxalphanames=6,
  backend=biber]{biblatex}
\DeclareFieldFormat*{titlecase}{\MakeSentenceCase{#1}}

\DeclareSourcemap{
  \maps[datatype=bibtex]{
    \map[overwrite]{
      \step[fieldsource=doi, final]
      \step[fieldset=url, null]
      \step[fieldset=eprint, null]
    }  
  }
}

\DeclareSourcemap{
  \maps[datatype=bibtex, overwrite]{
    \map{
      \step[fieldset=language, null]
      \step[fieldset=month, null]
      \step[fieldset=urldate, null]
    }
  }
}

\AtEveryBibitem{%
  \clearlist{language}%
  \clearlist{month}%
  \clearlist{urldate}%
  \clearfield{urldate}%
  \ifentrytype{inproceedings}
    {\clearlist{booktitle}%
     \clearlist{publisher}%
     \clearlist{location}}
}

\renewbibmacro{in:}{}

\DeclareFieldFormat[misc]{title}{\mkbibquote{#1}}
\DeclareFieldFormat[article,inproceedings]{volume}{\mkbibbold{#1}}
\DeclareFieldFormat[inproceedings]{series}{\mkbibitalic{#1}}
\DeclareFieldFormat{titlecase}{#1}
\renewcommand*{\bibfont}{\footnotesize}
\appto\bibfont{\setlength{\emergencystretch}{2em}}
\usepackage{seqsplit}

\theoremstyle{plain}

\newtheorem{theorem}{Theorem}[section]
\newtheorem{corollary}[theorem]{Corollary}
\newtheorem{lemma}[theorem]{Lemma}
\newtheorem{proposition}[theorem]{Proposition}

\theoremstyle{definition}

\newtheorem{remark}[theorem]{Remark}
\newtheorem{definition}[theorem]{Definition}
\newtheorem{assumption}[theorem]{Assumption}

\newtheorem{'definition}[theorem]{``Definition''}
\newtheorem{'proposition}[theorem]{``Proposition''}

\crefname{assumption}{Assumption}{Assumptions}

\newcommand{\R}{\mathbb{R}}
\newcommand{\Z}{\mathbb{Z}}
\newcommand{\C}{\mathbb{C}}
\newcommand{\N}{\mathbb{N}}

\newcommand{\bS}{\mathbb{S}}

\newcommand{\dps}{\displaystyle}
\newcommand{\ii}{\infty}

\newcommand{\cP}{\mathcal{P}}
\newcommand{\cT}{\mathcal{T}}
\newcommand{\cA}{\mathcal{A}}

\newcommand{\cV}{\mathcal{V}}

\newcommand{\cB}{\mathcal{B}}

\newcommand{\cE}{\mathcal{E}}

\newcommand{\cG}{\mathcal{G}}

\newcommand{\cH}{\mathcal{H}}

\newcommand\pscal[1]{{\ensuremath{\left\langle #1 \right\rangle}}}
\newcommand{\norm}[1]{ \left\| #1 \right\|}
\newcommand{\abs}[1]{{\left | #1 \right |}}
\newcommand{\nnorm}[1]{{\left\vert\kern-0.25ex\left\vert\kern-0.25ex\left\vert #1 
    \right\vert\kern-0.25ex\right\vert\kern-0.25ex\right\vert}}
\newcommand{\snnorm}[1]{{\vert\kern-0.25ex\vert\kern-0.25ex\vert #1 \vert\kern-0.25ex\vert\kern-0.25ex\vert}}

\let\Re\relax
\let\Im\relax
\DeclareMathOperator{\Im}{Im}
\DeclareMathOperator{\Re}{Re}

\renewcommand{\geq}{\geqslant}
\renewcommand{\leq}{\leqslant}
\renewcommand{\le}{\leqslant}
\renewcommand{\hat}{\widehat}
\renewcommand{\tilde}{\widetilde}

\newcommand{\eps}{\varepsilon}

\newcommand{\nn}{\nonumber}
\newcommand{\rd}{\mathrm{d}}
\newcommand{\dx}{\rd x}

\newcommand{\dt}{\rd t}

\title[The inverse problem of TDDFT on the torus]{The inverse problem of time-dependent density functional theory on the torus}

\date{2026-09-17.}

\author[A.~B.~Lauritsen]{Asbjørn~Bækgaard~Lauritsen$^*$}
\email{$^*$lauritsen@ceremade.dauphine.fr}

\author[M.~Lewin]{Mathieu~Lewin$^\dagger$}
\email{$^\dagger$mathieu.lewin@math.cnrs.fr}

\author[J.~Oldenburg]{Jakob~Oldenburg$^\ddagger$}
\email{$^\ddagger$jakob.oldenburg@cnrs.fr}

\address{$^*$$^\dagger$$^\ddagger$CEREMADE, CNRS, Université Paris-Dauphine, PSL Research University, Place de Lattre de Tassigny, 75016 Paris, France}

\allowdisplaybreaks

\begin{document}

\begin{abstract}
    We prove that, for any given   sufficiently regular   one-particle density, there exists a unique external potential for which the solution to the many-body Schrödinger equation has this prescribed density. This implies in particular that one can exactly reproduce the time-dependent density of an interacting system using non-interacting electrons, which is at the heart of time-dependent density functional theory.
    Our result covers the Coulomb interaction and 
    densities arising from extended nuclei.
\end{abstract}
\maketitle

\setcounter{tocdepth}{1}
\tableofcontents

\section{Introduction}

Density functional theory (DFT) is the standard method used in industrial and academic applications to approximate the solution to the many-particle Schrödinger equation~\cite{ParYan-94,CanFri-DFT}. It has enormous success for ground states~\cite{HohKoh-64,KohSha-PR-65,Burke-12,Jones-15}. 
The applicability of DFT relies on the construction of efficient empirical functionals, 
whose design 
largely relies on the solid theoretical foundations of DFT, particularly on the variational results of Lieb~\cite{Lieb-83b,LewLieSei-23_DFT}. 
For these constructions of functionals, it is important that DFT is an exact theory to guide the derivation of good approximations. On the other hand, time-dependent DFT (TDDFT) \cite{RunGro-84}, which is one of the only viable methods for describing the dynamics of many electrons in molecules and solids~\cite{MarUllNogRubBurGro-06,Ullrichs-11,MarMaiNogGroRub-12}, has not yet reached the level of ground state DFT in terms of predictive power~\cite{Fuks-16,Mai-JCP-16,LacMai-NPJCM-23}. TDDFT is also very poorly understood mathematically~\cite{FouLamLewSor-16,Penz-PhD,Lampart-21,DFT-22,MAQUI_TDDFT-26} and has so far only been rigorously justified for discrete systems with finitely many points \cite{MAQUI_finite-26,FarTok-12,LiUll-08,RugPenLee-15}.

In this paper we provide the \textbf{first  mathematical justification of TDDFT}, solving thereby a famous open problem in quantum chemistry, condensed-matter physics, and mathematical physics~\cite{MaiTodWooBur-10}. Our main result is the solution to an inverse problem that can be quickly described as follows. Given a trajectory of one-particle densities $\rho(t,x)$ and a compatible initial $N$-body wavefunction $\Psi_0(x_1,...,x_N)$, we prove that there exists a time-dependent external potential $V(t,x)$, unique up to a time-dependent constant, so that the associated $N$-body wavefunction $\Psi(t,x_1,...,x_N)$ solving Schrödinger's equation starting at $\Psi_0$ has the desired density $\rho_\Psi(t,x)=\rho(t,x)$. This inverse problem is the key result for the foundations of TDDFT and Kohn--Sham theory. It implies in particular that TDDFT is exact in the sense that no information is lost by describing a system only through the density $\rho(t)$ and the initial state $\Psi_0$, rather than through the time-dependent wavefunction $\Psi(t)$. Indeed, once the external potential $V(t)$ is found from $\rho(t)$, one can in turn obtain $\Psi(t)$ through the Schrödinger equation.

The existence of a solution to the inverse problem also implies that the time-dependent density of an interacting system in a known external potential $V_{\rm ext}(t,x)$ can be exactly reproduced using a system of $N$ non-interacting electrons, at the expense of changing the external potential in the corresponding Kohn--Sham equations. As we will explain below, this is at the heart of Kohn--Sham theory. 

The construction of the inverse potential $V(t,x)$ is a notoriously difficult mathematical problem, due to a phenomenon of loss of derivatives~\cite{MaiTodWooBur-10,LuMar-15,Penz-PhD,FouLamLewSor-16,Lampart-21,DFT-22}. Namely, if we express the sought-after $V(t,x)$ in terms of the unknown wavefunction $\Psi$ and the given density $\rho(t,x)$, we obtain a relation in the general form
\begin{equation}
-\nabla\cdot(\rho\nabla V)=-\frac{\partial_t^2\rho}2+\text{terms involving up to 4 derivatives of $\Psi$}
\label{eq:V_Psi_vague}
\end{equation}
called the \emph{force-balance} or \emph{van Leeuwen equation} in the physics literature \cite{vanLeeuwen-99}.
Inserting this in Schrödinger's equation leads to a highly nonlinear equation for $\Psi$~\cite{MaiTodWooBur-10}. The four derivatives in the formula \eqref{eq:V_Psi_vague} for $V$ prevent one from using standard fixed-point techniques for finding a solution $\Psi$ \cite{RugLee-11,RugGiePenLee-12,RugPenLee-15,Penz-PhD,TarUll-21}.

Van Leeuwen \cite{vanLeeuwen-99} wrote the right side of~\eqref{eq:V_Psi_vague} as the double divergence of a momentum-stress tensor containing only second order derivatives of $\Psi$, but did not analyze this structure further. In this paper, we exploit the crucial observation that, using the (desired) relation $\rho_\Psi=\rho$, the potential $V(t,x)$ can in fact be expressed in the form
\begin{multline}
-\nabla\cdot(\rho\nabla V)=-\frac{\partial_t^2\rho+\Delta^2\rho}2\\*
+\sum_{|\alpha|\leq 2}\partial^\alpha\left(\text{terms involving up to 1 derivative of $\Psi$}\right).
\label{eq:V_Psi_vague_1derivative}
\end{multline}
The main idea is that the outside derivatives in the second term on the right will be canceled after inversion of the elliptic operator on the left. The inverse of this operator is easily controlled when the density $\rho(t,x)$ is bounded away from zero everywhere, and this is why we work on the torus to avoid handling its decay at infinity.
Inserting this in the Schrödinger equation then leads to a nonlinear equation for $\Psi$, which morally has a loss of only one derivative, but is highly nonlocal. 

The rewriting with a loss of one derivative in~\eqref{eq:V_Psi_vague_1derivative} suggests finding solutions in the analytic class, by means of the Cauchy--Kovalevskaya \cite{Cauchy-1842,Kowalevsky-1875,Ovsyannikov-65,Ovsyannikov-71,Nirenberg-72,Nishida-77,BaoGou-77,BaoGou-77b}
or Nash--Moser \cite{Nash-56,Moser-66,Moser-66a,Zehnder-75,Zehnder-76} schemes. 
However, a further difficulty is that the wavefunction $\Psi$ cannot be analytic, even in an analytic external potential, due to the \textbf{singular Coulomb repulsion between the electrons}. The latter induces well-known cusps at two-electron coalescence, namely when $x_j=x_k$ for some $j\neq k$. Our main idea to overcome this problem is to only require $\Psi$ to be \textbf{analytic with respect to the center of mass coordinate} $x_1+\cdots+x_N$ and not each $x_j$ separately, which turns out to be sufficient to make $\rho_\Psi$ (and then $V$) analytic in space. A similar idea was recently used in \cite{HeaSob-22,HeaSob-23} for ground states, following earlier works \cite{Hunziker-86,FouHofOstHofOstSor-02,FouHofOstHofOstSor-04,FouHofOstHofOstSor-07,Jecko-10}. A crucial step of the proof is then to show that all the first-order derivatives of $\Psi$ appearing on the right of~\eqref{eq:V_Psi_vague_1derivative} can be controlled purely in terms of the derivative in the direction of the center of mass. 

As we mentioned above, the inverse problem solved in this paper is a key result in TDDFT that \textbf{justifies the use of Kohn--Sham equations to replace the exact many-body Schrödinger equation}. It implies that one can reproduce the exact one-particle density of the many-body Schrödinger solution $\Psi$ in a known external potential $V_{\rm ext}(t,x)$ using a system of $N$ ``fictitious non-interacting electrons'', at the expense of changing $V_{\rm ext}(t,x)$ into an appropriate $V(t,x)$. The non-interacting Kohn--Sham equations are much easier to simulate numerically, if we assume we know $V$. In practice we of course do not know $V$ and cannot define it through the unknown Schrödinger density because the goal is precisely to avoid solving the many-body equation! As we will explain, our inverse result in fact allows one to define \textbf{self-contained nonlinear Kohn--Sham equations} that do not explicitly involve the unknown Schrödinger density and are nevertheless guaranteed to reproduce it for the solution. This construction relies on the inverse interacting problem; the non-interacting one is not sufficient.

The present paper opens many new questions and we hope that it will stimulate more works on TDDFT and DFT in general. Important open problems are the existence of $V$ for ground states in the same (analytic) setting, or the extension to point nuclei. Future work will be devoted to deriving numerical algorithms with guaranteed convergence that can compute an approximation of $V$. The results in this paper were announced in~\cite{LauLewOld-V-short-26_ppt}. Simultaneously with the first announcement of this work, the  classical mean-field problem (where the underlying $N$-particle Schrödinger equation is replaced by the Vlasov--Poisson equation \cite{ChaFin-05,Manfredi-20}) was solved independently in \cite{Bouedec-26_ppt} using also a Cauchy--Kovalevskaya method.

\medskip

\paragraph{\textbf{The paper is organized as follows.}} In the next section we describe our two main results in detail. In Section \ref{sec.TDDFT} we explain how to construct the associated nonlinear Kohn--Sham equations based on our inverse potentials. The rest of the paper is devoted to the proof of our results. 
In Appendix~\ref{sec.notation.appendix} we keep an overview of notation used in the paper.

\medskip

\paragraph{\textbf{Acknowledgment.}} This work has benefited from French State support managed by ANR under the France 2030 program through the MaQuI CNRS Risky and High-Impact Research program (RI)$^2$ (grant agreement ANR-24-RRII-0001). We thank \'Eric Cancès, Théo Duez, Jari van Gog and Julien Toulouse for stimulating discussions within this project.

%%%%%%%%%%%%%%%%%%%%%%%%%%%%%%%%%%%%%%%%%%%%%%%%%%%%%%%%%%%%%%%%%%%%%%%%%%%%%%%%%%%%%%%%%%%%%
%%%%%%%%%%%%%%%%%%%%%%%%%%%%%%%%%%%%%%%%%%%%%%%%%%%%%%%%%%%%%%%%%%%%%%%%%%%%%%%%%%%%%%%%%%%%%
\section{Main results}
%%%%%%%%%%%%%%%%%%%%%%%%%%%%%%%%%%%%%%%%%%%%%%%%%%%%%%%%%%%%%%%%%%%%%%%%%%%%%%%%%%%%%%%%%%%%%
%%%%%%%%%%%%%%%%%%%%%%%%%%%%%%%%%%%%%%%%%%%%%%%%%%%%%%%%%%%%%%%%%%%%%%%%%%%%%%%%%%%%%%%%%%%%%
In this first section we state all our main results concerning the inverse problem. We consider $N$ interacting spinless fermions on the $D$-dimensional cube  $\Lambda = [0,1]^D$, with $D\in\{1,2,3\}$, with periodic boundary conditions (that is, the $D$-dimensional torus). We neglect spin for simplicity of notation; all our results are valid with spin if we fix the total density. The evolution of this system in a given external potential $V(t,x):\Lambda\to\R$ is described by the time-dependent linear Schrödinger equation
\begin{equation}
\left\{
\begin{aligned}
i\partial_t\Psi(t) & = \cH_{V(t)}\Psi(t), \\
\Psi(0) & =\Psi_0,
\end{aligned}
    \label{eq:Schrodinger}
\right.
\end{equation}
with the Hamiltonian 
\begin{equation}
    \cH_{V} := \sum_{j=1}^N -\Delta_{x_j} + \sum_{j=1}^N V(x_j) + \sum_{1 \leq j < k \leq N} V_{ee}(x_j-x_k), 
    \label{eq:H_V}
\end{equation}
where $V_{ee}$ is the periodic interaction between the particles. For simplicity of notation we work in a system of units where the mass of the particles is set to $m=1/2$, instead of the standard atomic units that have $-\Delta/2$ instead of $-\Delta$. 

\begin{assumption}
Throughout the paper we will always assume that $V_{ee}\in L^2(\Lambda)$ is a real-valued and even ($V_{ee}(-x) = V_{ee}(x)$) function. 
\end{assumption}
This assumption covers the usual periodic Coulomb potential (the Green's function of the Laplacian) when $D=3$, that admits the Fourier coefficients $\widehat{V_{ee}}(k)=4\pi/|k|^2$, $k\in 2\pi\Z^3\setminus\{0\}$. In~\eqref{eq:Schrodinger}, the initial condition $\Psi_0\in L^2(\Lambda^N)$ is assumed to be normalized and antisymmetric with respect to permutations of its variables $x_1,...,x_N$, a property that is preserved under the flow. In the following we denote by $L^2_a(\Lambda^N) = \bigwedge^N L^2(\Lambda)$ the subspace of antisymmetric functions. All our results would apply the same to bosons
or to fermions with spin.  
Everywhere in the paper, the external potential $V$ will be at least in $C^0([0,T];C^2(\Lambda))$, in which case it is well-known that~\eqref{eq:Schrodinger} admits a unique solution $\Psi\in C^1([0,T];L^2_a(\Lambda^N))\cap C^0([0,T];H^{2}(\Lambda^N))$ for an initial condition $\Psi_0\in H^2(\Lambda^N)$, see~\cite{ReeSim2,Yajima-87}.

Recall that the one-particle density of a normalized wavefunction $\Psi\in L^2_a(\Lambda^N)$ is defined by
\begin{equation}\label{eqn.def.density}
    \rho_\Psi(x)=N\idotsint_{\Lambda^{N-1}}|\Psi(x,x_2,\ldots,x_N)|^2\,\dx_2\cdots\dx_N,
\end{equation}
see~\cite{LieSei-09}. Our goal is to solve the inverse problem of finding the potential $V(t,x)$ to be inserted in~\eqref{eq:Schrodinger} so that the solution $\Psi$ has a prescribed density $\rho_{\Psi(t)}(x)=\rho(t,x)$. The first difficulty is to identify a set of physically relevant target densities that is insensitive to the interaction $V_{ee}$~\cite{FouLamLewSor-16}. This is because the goal of TDDFT is to reproduce an interacting system ($V_{ee}\neq0$) with $N$ non-interacting electrons ($V_{ee}\equiv0$).

In this paper we will work with \textbf{analytic-in-space external potentials} $V(t,x)$ corresponding for instance to moving extended nuclei (point nuclei are thus not allowed) and applied electric fields. The wavefunctions $\Psi$, however, cannot be analytic because of the possible singularities of the interaction $V_{ee}$. Instead, we will work with wavefunctions that are only assumed analytic in the direction of the center of mass $x_1 + \ldots + x_N$. Concretely we introduce the following

\begin{definition}\label{def.norms.sigma}
For $N\geq1$, $\Psi\in L^2_a(\Lambda^N)$ and $\sigma \geq 0$, we define the norms by
\begin{equation}\label{eq:normsigma}
\norm{\Psi}_{\sigma}^2
=
\sum_{p_1,\ldots,p_N\in 2\pi\Z^D} e^{2\sigma\abs{\sum_{j=1}^N p_j}} \left(1+ \sum_{j=1}^N p_j^2\right)^2 \big|\hat\Psi(p_1,\dots,p_N)\big|^2
=\norm{e^{\sigma \abs{\cP}}\Psi}_{H^2}^2, 
\end{equation}
where $\cP=\sum_{j=1}^N -i \nabla_{x_j}$ is the total momentum operator. (Our convention for the Fourier coefficients in $L^2(\Lambda)$ is $\hat f(p) = \int_{\Lambda} f(x) e^{-ip\cdot x} \,\rd x$ for $p\in 2\pi \Z^D$ and similarly for $\hat\Psi$.) 
For simplicity we do not emphasize the $N$-dependence in the notation. We denote the associated Banach space by $\cB_\sigma$. 
Further, to separate out the special case $N=1$, we write $\cA_\sigma$ when $N=1$, that is, 
\begin{equation*}
    \cA_\sigma 
    = \left\{ f\in L^2(\Lambda) : \norm{f}_\sigma < \infty \right\},
    \qquad 
    \cB_\sigma 
    = \left\{ \Psi\in L^2_a(\Lambda^N) : \norm{\Psi}_\sigma < \infty \right\}. 
\end{equation*}
\end{definition}

For $N=1$, the norm on $\cA_\sigma$ can be expressed as 
\begin{equation}\label{eq:normsigma_N1}
\norm{f}_{\sigma}^2
=
\sum_{p\in 2\pi\Z^D} e^{2\sigma|p|} \left(1+ p^2\right)^2 \big|\hat{f}(p)\big|^2
=\norm{e^{\sigma \sqrt{-\Delta}}f}_{H^2}^2. 
\end{equation}
It is well-known that functions in $\cA_\sigma$ are all real-analytic. In fact, their Taylor series converges uniformly on any ball of radius $0<R<\sigma$, as follows from the proof of \cref{lemma.1/f.analytic} below. Conversely, any real-analytic function with uniform radius of convergence $R$ over the torus belongs to $\cA_\sigma$ for all $\sigma<R$. 
A simple argument showing uniform convergence on any ball of radius $R < \sigma/\sqrt{D}$ is
\begin{equation}    
    \norm{\partial^\alpha f}_{L^\ii(\Lambda)}\leq C\norm{\partial^\alpha f}_{H^2(\Lambda)}\leq C\norm{f}_\sigma\max_{p\in 2\pi\Z^D}\big(\abs{p^{\alpha}}e^{-\sigma|p|}\big)
    \leq \frac{C\alpha!}{(\sigma/\sqrt{D})^{|\alpha|}}\norm{f}_{\sigma}.
    \label{eq:simple_estim_Asigma}
\end{equation}
Note that we used in~\eqref{eq:simple_estim_Asigma}  that $H^2(\Lambda)\subset C^0(\Lambda)$ in dimensions $D\leq 3$ by the Sobolev embedding theorem. Adding an $H^2$ weight in the norms~\eqref{eq:normsigma} and~\eqref{eq:normsigma_N1} is natural because this is the operator domain of the Laplacian, hence of the corresponding Schrödinger operator $\cH_V$ under our assumptions on $V$ and $V_{ee}$.

We emphasize once again that, for $N\geq2$, the functions in $\cB_\sigma$ need not be analytic. Our space $\cB_\sigma$ is inspired by a famous work of Hunziker~\cite{Hunziker-86} and several more recent works dealing with the analyticity of the density for ground states~\cite{FouHofOstHofOstSor-02,FouHofOstHofOstSor-04,FouHofOstHofOstSor-07,Jecko-10,HeaSob-22,HeaSob-23}. The idea is to exploit the translation-invariance of the Hamiltonian $\cH_0= \cH_{V=0}$ with $V$ removed, which is equivalent to $\cH_0$ commuting with the total momentum operator $\cP$. This suggests analyticity of the wavefunction with respect to $\cP$, that is, in the center of mass direction $x_1+\cdots+x_N$, if we assume that $V$ is itself analytic. A central property of this construction is that the analyticity of the one-particle density follows from the analyticity of $\Psi$ in the center of mass direction only. More precisely, we have $\rho_\Psi\in\cA_\sigma$ whenever $\Psi\in \cB_\sigma$ with the estimate
    $
    \norm{\rho_\Psi}_\sigma \leq  C \norm{\Psi}_\sigma^2,
    $
see Lemma~\ref{lem:Psi_to_density_B_sigma} below.

\subsection{Regularity of \texorpdfstring{$\Psi$ and $\rho$}{Psi and rho} in an analytic external potential \texorpdfstring{$V$}{V}}
Our first result states that the eigenstates of $\cH_V$ are in $\cB_\sigma$ for any given external potential $V\in\cA_\sigma$ and that the time-evolution determined by the Schrödinger equation~\eqref{eq:Schrodinger} preserves this regularity, even with an interaction $V_{ee}$ that is only in $L^2(\Lambda)$, which covers the Coulomb interaction in 3D. 

\begin{theorem}[Eigenstates and dynamics in an analytic external potential $V$]\label{thm.density.is.analytic}
Let $V_{ee}\in L^2(\Lambda)$ be real-valued and even and let $\sigma>0$.

\begin{enumerate}[leftmargin=*,label=(\roman*)]
\item \emph{(Eigenstates)} Let $V\in \cA_\sigma$ be a real-valued function. Any eigenfunction $\Psi$ of $\cH_V$ belongs to the many-body space $\cB_\sigma$ and hence $\rho_\Psi\in\cA_\sigma$ is real-analytic.

\smallskip
\item \emph{(Dynamics)} Let $T>0$, $V \in C^0([0,T];\cA_\sigma)$ be real-valued, and $\Psi_0\in\cB_\sigma$. Then the corresponding solution to the time-dependent Schrödinger equation~\eqref{eq:Schrodinger} satisfies $\Psi\in C^0([0,T],\cB_\sigma)$, hence $\rho_\Psi\in C^0([0,T]; \cA_\sigma)$ is real-analytic in space for any time. Moreover, for any given $\sigma' < \sigma$ we have $\rho_\Psi\in C^2([0,T];\cA_{\sigma'})$. 
\end{enumerate}
\end{theorem}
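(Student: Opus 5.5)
The guiding idea is to conjugate by the weight $e^{\sigma|\cP|}$, or better by the analytic family $e^{\theta\cdot \cP}$ with $\theta\in\R^D$. This is the Hunziker/Combes–Thomas approach in the center-of-mass direction. Since $\cH_0$ commutes with $\cP$, only the one-body potential is affected by conjugation. Concretely, $e^{\theta\cdot\cP}$ acts on Fourier side by multiplication with $e^{\theta\cdot\sum_j p_j}$, and it formally translates all particles simultaneously by the imaginary vector $-i\theta$. This gives
\[
e^{\theta\cdot\cP}\,\cH_V\,e^{-\theta\cdot\cP}=\cH_0+\sum_{j=1}^N V_\theta(x_j),
\qquad V_\theta:=e^{\theta\cdot(-i\nabla)}V .
\]
Here $\widehat{V_\theta}(k)=e^{\theta\cdot k}\widehat V(k)$, so for $|\theta|<\sigma$ the function $V_\theta$ belongs to $\cA_{\sigma-|\theta|}$. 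In particular it is bounded, although complex-valued. The key point is that $V_{ee}$ and $-\Delta$ are untouched, so $V_{ee}\in L^2$ never needs analyticity.

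\textbf{Eigenstates (i).} Consider the family $H(\theta)=\cH_0+\sum_j V_\theta(x_j)$ on the fixed domain $H^2_a(\Lambda^N)$. It is an analytic family of type (A) in $\theta\in\C^D$ with $|\Re\theta|$, $|\Im\theta|<\sigma$. For imaginary $\theta$ it is unitarily equivalent to $\cH_V$, being conjugated by the unitary $e^{i\eta\cdot\cP}$. The standard Combes–Thomas/Hunziker argument then applies. Take an eigenvalue $E$ of $\cH_V$, isolated and of finite multiplicity because $\Lambda^N$ is compact and the resolvent is compact. Its Riesz projection $P(\theta)$ for $H(\theta)$ is analytic in $\theta$ and unitarily equivalent to $P(0)$ for imaginary $\theta$. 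Hence $e^{\theta\cdot\cP}P(0)e^{-\theta\cdot\cP}$ extends analytically to $P(\theta)$ in the strip. Applying this to $\Psi=P(0)\Psi$ gives $e^{\theta\cdot\cP}\Psi\in H^2$ for every real $|\theta|<\sigma$.

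Summing over finitely many directions $\theta=\pm\sigma' e_\ell$, with a little slack $\sigma'<\sigma$, controls $e^{\sigma''|\cP|}\Psi$. To reach exactly $\sigma$, I would instead do a direct resolvent estimate. The estimate is
\[
\norm{e^{\theta\cdot\cP}(\cH_V-z)^{-1}e^{-\theta\cdot\cP}}\le C
\]
for $z$ on a contour, using that $V_\theta-V$ is small in a neighborhood, combined with the exact identity $\Psi=(E-z)(\cH_V-z)^{-1}\Psi$. If only $\sigma'<\sigma$ can be obtained this way, one uses that $V\in\cA_\sigma$ implies $V\in\cA_{\sigma_1}$ for some $\sigma_1>\sigma$ only if available. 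Otherwise I would accept this as the delicate point, which I expect to be the main obstacle. The fallback is a Neumann series in the weight $e^{\sigma|\cP|}$ directly. It would use the algebra bound $\norm{Vf}_\sigma\lesssim\norm V_\sigma\norm f_\sigma$ for the multiplicative action of $V$ on $\cB_\sigma$: because $|\sum p_j+k|\le|\sum p_j|+|k|$, one has $e^{\sigma|P+k|}\le e^{\sigma|P|}e^{\sigma|k|}$. Finally, $\rho_\Psi\in\cA_\sigma$ follows from Lemma~\ref{lem:Psi_to_density_B_sigma}.

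\textbf{Dynamics (ii).} Here I would do an energy estimate in $\cB_\sigma$. Set $\Phi=e^{\sigma|\cP|}(\cH_0+C)\Psi$, whose $L^2$ norm is equivalent to $\norm\Psi_\sigma$ since $\cH_0$ has domain $H^2$ and commutes with $\cP$. It satisfies $i\partial_t\Phi=\cH_0\Phi+e^{\sigma|\cP|}(\cH_0+C)\sum_jV(t,x_j)\Psi$. The $\cH_0$ part is skew and drops out of $\frac{d}{dt}\norm\Phi^2$. The remaining term involves the commutator $[\cH_0,V(x_j)]=-\Delta V-2\nabla V\cdot\nabla_{x_j}$, which loses one derivative. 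To handle this, I would bound the one-body multiplication in the weighted $H^2$ norm using the subadditivity above. Writing $(\cH_0+C)V(x_j)\Psi=V(x_j)(\cH_0+C)\Psi+[\cH_0,V(x_j)]\Psi$, the first piece is bounded by $\norm{V}_\sigma\norm\Phi$. For the commutator term, $\nabla_{x_j}$ costs only one derivative and is controlled by $H^2$, since $\norm{\nabla\Psi}_{\cB\text{-weighted }H^1}\le\norm\Phi$. Both pieces thus give $\lesssim\norm V_\sigma\norm\Phi$, yielding Gronwall $\norm{\Psi(t)}_\sigma\le e^{C\int\norm{V}_\sigma}\norm{\Psi_0}_\sigma$. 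Rigorously, I would run this for Galerkin truncations in $\cP$, or for a regularized weight $e^{\sigma|\cP|}(1+\eps|\cP|)^{-M}$, and pass to the limit. Continuity in time follows from strong continuity plus uniform bounds via a standard approximation argument.

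For the time derivatives of $\rho$: $\partial_t\rho=-2\nabla\cdot J$ with current $J$ quadratic in $\Psi,\nabla\Psi$, and $\partial_t^2\rho$ follows from the force-balance expression, involving up to the second spatial derivatives of $\Psi$ and $\nabla V\rho$ paired with densities. Each such term belongs to $\cA_\sigma$ up to $\partial^\alpha$ with $|\alpha|\le2$. Since $\partial^\alpha:\cA_\sigma\to\cA_{\sigma'}$ is bounded for $\sigma'<\sigma$ (because $|p|^2e^{-(\sigma-\sigma')|p|}$ is bounded), we get $\rho\in C^2([0,T];\cA_{\sigma'})$.
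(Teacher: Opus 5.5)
Your proposal has a genuine gap in each part, and in both cases it sits where the paper does its real work.

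\emph{Part (i).} Your Combes--Thomas route gives less than the statement. It also needs the Combes argument that the discrete spectrum of $H(\theta)$ is independent of $\theta$ on the whole tube $|\Re\theta|<\sigma$, so that the Riesz contour stays in the resolvent set. Even granting that, it only yields $e^{\theta\cdot\cP}\Psi\in H^2$ for $|\theta|<\sigma$. Summing over $\theta=\pm\sigma' e_\ell$ controls only $e^{\sigma'\max_\ell|P_\ell|}\geq e^{\sigma'|P|/\sqrt D}$, so you need a fine net of directions to get even $\Psi\in\cB_{\sigma''}$ for all $\sigma''<\sigma$. None of your remedies reaches $\Psi\in\cB_\sigma$:
\begin{itemize}
\item $V\in\cA_\sigma$ gives no room $\sigma_1>\sigma$.
\item A perturbative resolvent bound with $V_\theta-V$ small works only for small $|\theta|$.
\item A Neumann series showing that $(\cH_V-z)^{-1}$ preserves $\cB_\sigma$ says nothing about $\Psi$, which is a priori only in $H^2$. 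The resolvent gains no analyticity, so $\Psi=(E-z)(\cH_V-z)^{-1}\Psi$ is circular.
\end{itemize}
The missing idea is that the eigenvalue equation is elliptic and absorbs the loss of one $\cP$. In \cref{lem.eigenstate.analytic} the paper uses the bounded weight $e^{\tau\cP_M}$, with $\cP_M=\min(|\cP|,M)$, which commutes with $\cH_0$, and runs Gr\"onwall in the analyticity parameter $\tau$. The $\tau$-derivative of $\norm{(C+\cH_0)e^{\tau\cP_M}\Psi}_{L^2}^2$ produces $\cP_Me^{\tau\cP_M}(C+\cH_0)\Psi=\cP_Me^{\tau\cP_M}(C+E-\cV)\Psi$. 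By $\cP_M\leq\sqrt N(1+\cT)$ and \cref{lemma:multiplication}, this is bounded by $C(1+|E|+\norm{V}_\sigma)\norm{(C+\cH_0)e^{\tau\cP_M}\Psi}_{L^2}$. This reaches $\tau=\sigma$ uniformly in $M$, and letting $M\to\infty$ gives $\Psi\in\cB_\sigma$.

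\emph{Part (ii).} Your energy estimate for $\Psi\in C^0([0,T];\cB_\sigma)$ is essentially \cref{lem.SE.solvable}, up to two technical points. First, $e^{\sigma|\cP|}(1+\eps|\cP|)^{-M}$ is still unbounded, so it regularizes nothing; use $e^{\sigma\min(|\cP|,M)}$ instead. Second, uniform bounds plus $L^2$-continuity give only weak continuity in $\cB_\sigma$; you also need continuity of $t\mapsto\norm{\Psi(t)}_\sigma$, which the energy identity provides.

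The real gap is the $C^2$-in-time claim. The force-balance formula for $\partial_t^2\rho_\Psi$ (\cref{prop.force.balance}) contains the interaction terms $2\Delta_x\int\rho^{(2)}_\Psi(x,y)V_{ee}(x-y)\,\rd y-2\nabla_x\cdot\int\nabla_x\rho^{(2)}_\Psi(x,y)V_{ee}(x-y)\,\rd y$, which your sketch never mentions. With $V_{ee}$ only in $L^2$ (e.g.\ Coulomb) and $\Psi$ not analytic in the individual variables, showing these are analytic in $x$ is the main difficulty. The paper settles it in \cref{lemma:boundspreliminaryTi}:
\begin{itemize}
\item split $V_{ee}=V_{ee}^+-V_{ee}^-$ and write the terms as one-body densities of $\sqrt{V_{ee}^\pm}\Psi$ and $\sqrt{V_{ee}^\pm}\partial^i\Psi$;
\item use that $\sqrt{V_{ee}^\pm}(x_1-x_2)$ commutes with $\cP$, hence with $e^{\sigma|\cP|}\pscal{\cP}^k$;
\item use the Sobolev form bound $V_{ee}^\pm\leq C\norm{V_{ee}}_{L^2}(1-\Delta)$ for $D\leq3$, so the interaction costs only one derivative;
\item for the $\nabla_x\rho^{(2)}_\Psi$ term, split in total momentum to place the extra derivative correctly.
\end{itemize}
Your claim that each term lies in $\cA_\sigma$ before the outer $\partial^\alpha$ also fails, even for the kinetic term. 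Already for $N=1$, $\overline{\partial^i f}\partial^j f$ need not be in $\cA_\sigma$ when $f\in\cA_\sigma$. \Cref{lem:Psi_to_density_B_sigma} controls $\int\overline{\partial^i\Psi}\partial^j\Psi\,\rd X$ through $e^{\sigma'|\cP|}\pscal{\cP}^2\partial^i\Psi$, one total-momentum derivative beyond $\cB_{\sigma'}$. So the loss $\sigma\to\sigma'$ comes from $\pscal{\cP}$-weights on $\Psi$, as in the paper's bound $\norm{Q_\Psi}_{\sigma'}\leq C\norm{\pscal{\cP}^3\Psi}_{\sigma'}\norm{\pscal{\cP}\Psi}_{\sigma'}$. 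Continuity of $\partial_t^2\rho_\Psi$ in time then needs the bilinear version $Q_{\Psi_1,\Psi_2}$.
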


This result justifies working in the space $\cA_\sigma$ for $V$ and $\rho$ and in $\cB_\sigma$ for $\Psi$, for the inverse problem that we study below. 

We now make some important remarks on Theorem~\ref{thm.density.is.analytic}.

\begin{remark}[Comparison with the literature]
For eigenstates, the proof of the analyticity of the density was given in~\cite{FouHofOstHofOstSor-02,FouHofOstHofOstSor-04,FouHofOstHofOstSor-07,Jecko-10,HeaSob-22,HeaSob-23} for potentials $V$ and interactions $V_{ee}$ that are analytic away from finitely many points. Under the assumption that $V$ is analytic everywhere, we can handle any $V_{ee}$ in $L^2(\Lambda)$. Our proof is also much simpler and does not use any localization technique. It only exploits the analyticity of the wavefunction in the direction of the center of mass, which follows easily from the translation-invariance of $\cH_0$ and the assumed analyticity of $V$. The reason this suffices for the density is that any derivative can be expressed as
$$\nabla_{x_1} \rho_\Psi(x_1)=N\idotsint_{\Lambda^{N-1}}\left(\sum_{j=1}^N\nabla_{x_j}\right)|\Psi(x_1,\ldots,x_N)|^2\,\dx_2\cdots\dx_N$$
because the added terms have vanishing integral. A similar argument holds for the current or the kinetic energy density, but not for the one-particle density matrix~\cite{Cioslowski-20,HeaSob-22,HeaSob-23,FouSob-26_ppt}.

In the time-dependent case, the proof that $\rho_\Psi\in C^0([0,T];\cA_\sigma)$ is a simple application of Grönwall's lemma, where only derivatives of the analytic $V$ enter, since the rest of the Hamiltonian is translation-invariant. The proof that $\rho_\Psi$ is $C^2$ in time is much more involved, however.
\end{remark}

\begin{remark}[Holographic theorem]
The analyticity-in-space of the density $\rho_{\Psi(t)}$ of the solution to the time-dependent Schrödinger equation was conjectured in \cite{Zheng-07,Zheng-11} and is for the first time proved in our paper. This property is sometimes called the \emph{time-dependent holographic electron density theorem} and it plays an important role in TDDFT for open systems. It implies that the knowledge of $\rho$ in a subdomain at any time $t$ implies the knowledge of the whole density.     
\end{remark}

\begin{remark}[Dependence on $V$ and $\Psi_0$]
\label{rmk.smooth.dependence.direct.problem}
The solution $\Psi$ (and hence $\rho_\Psi$) depends continuously on the external potential $V$ and the initial condition $\Psi_0$. More precisely, the map $(\Psi_0,V) \mapsto \Psi$ is Lipschitz. 
This is easily seen by writing the solution (for short times) using a Duhamel formulation and iterating this construction. 
\end{remark}

\begin{remark}[Less regularity in time]
We stated the result with $V\in C^0([0,T];\cA_\sigma)$ for convenience. We can also take $V\in L^\ii([0,T];\cA_\sigma)$, allowing kinks in time, in which case we have similarly that $\Psi\in C^0([0,T];\cB_\sigma)$, $\rho_\Psi \in C^0([0,T]; \cA_\sigma)$, and $\rho_\Psi \in C^{1,1}([0,T]; \cA_{\sigma'})$, where $C^{1,1}$ is the standard Hölder space of continuously differentiable functions whose derivative is Lipschitz.     
\end{remark}

\begin{remark}[More regularity in time]\label{rmk:time-regularity}
If $V_{ee}$ is $C^\ii$, then $\rho$ will inherit the time-regularity of $V$. Namely, we will have $\rho\in C^{k+2}([0,T];\cA_{\sigma'})$ whenever $V\in C^{k}([0,T];\cA_{\sigma})$ and $\Psi_0\in H^{2k}\cap \cB_\sigma$~\cite{FouLamLewSor-16}. Note, however, that one should not expect $\rho$ to be analytic in time in general, even if $V$ is. An easy example is $N=1$ and $V\equiv0$ in which case $\rho$ is analytic in time if and only if $\Psi_0$ is in the domain of the inverse heat operator $e^{\tau(-\Delta)}$ for some $\tau>0$.

Finally, we mention that for a singular $V_{ee}$, we do not expect $\rho$ to be $C^3$ in time, even for an analytic external potential $V(t,x)$. 
\end{remark}

\begin{remark}[Regularity in time needed for the inverse problem]
For the inverse problem studied below, it is a crucial fact that $\rho$ is twice differentiable in time. This is already apparent from \eqref{eq:V_Psi_vague} and \eqref{eq:V_Psi_vague_1derivative} above. 
\end{remark}

\begin{remark}[Point nuclei]\label{rmk.point.nuclei}
It would be interesting to deal with moving point nuclei, corresponding to $V(t,x)=-\sum_{m=1}^Mz_mG(x-R_m(t))$ where $G$ is the periodic Green's function of the Laplacian and $R_m(t)$ is assumed smooth. In~\cite{Zheng-07,Zheng-11}, it was conjectured that  $\rho(t,x)$ is analytic away from the $R_m(t)$, as it is for ground states~\cite{FouHofOstHofOstSor-04, Jecko-10, HeaSob-22}. The main difficulty is to analyze the precise structure of the wavefunction when some electrons hit a nucleus and how this cusp propagates in time. When several electrons meet at a nucleus, the singularity of $V$ conspires with that of $V_{ee}$, leading to rather delicate singularities for $\Psi$. For an analytic external potential $V$ we could avoid studying the electronic cusps using only the analyticity in the center of mass direction, but the latter will break for point nuclei. 
\end{remark}

\begin{remark}[Regularity in time for $\Psi$]
The solution $\Psi$ to the Schrödinger equation \eqref{eq:Schrodinger} is not just continuous in time. In fact, \eqref{eq:Schrodinger} and $\Psi\in C^0([0,T];\cB_\sigma)$ implies that $\Psi \in C^1([0,T]; D(e^{\sigma|\cP|}))$ with $D(e^{\sigma |\cP|}) = \{ \Psi \in L^2_a(\Lambda^N): \norm{e^{\sigma |\cP|} \Psi}_{L^2} < \infty\}$ the domain of the (unbounded) operator $e^{\sigma |\cP|}$. 
\end{remark}

\begin{remark}[Case of $\R^D$]
We stated Theorem~\ref{thm.density.is.analytic} on the torus $\Lambda$ because this is what we need for the inverse problem below, but the exact same result holds on the whole space $\R^D$, provided the sums are simply replaced by integrals in the definitions~\eqref{eq:normsigma} and~\eqref{eq:normsigma_N1} of $\cB_\sigma$ and $\cA_\sigma$, respectively. The space $\cA_\sigma$ does not contain smeared Coulomb potentials of the form $V=\mu\ast|x|^{-1}$ with analytic $\mu$, due to the slow decay in space at infinity. Such potentials are, however, integrable in Fourier, even with an analytic weight, which is enough to conclude analyticity of $\Psi$ in the direction of the center of mass and of its density in this case as well.
\end{remark}

\subsection{The inverse problem}
We can now turn to the main results of the paper concerning the inverse problem. We give ourselves an initial state $\Psi_0\in\cB_\sigma$ (for instance an eigenstate by Theorem~\ref{thm.density.is.analytic}) and a path of densities $\rho(t)$ in the analytic space $\cA_\sigma$, and we show that there exists a unique associated (time-dependent) potential $V(t)$ reproducing this density.

\begin{theorem}[The inverse problem]\label{thm.main.V}
Let $V_{ee}\in L^2(\Lambda)$ be real-valued and even. 
Let $T>0$, $\sigma > 0$ and $\Psi_0\in \cB_\sigma$ be a normalized wavefunction.
Consider a path of densities $\rho \in C^2([0,T];\cA_\sigma)$ satisfying for all $0\leq t\leq T$
\begin{itemize}
    \item $\int_\Lambda \rho(t) \, \rd x = N$,
    \item $\rho(t) \geq c > 0$,
    \item $\rho(0)=\rho_{\Psi_0}$,
    \item $\partial_t\rho(0)=-\nabla\cdot j_{\Psi_0}$.
\end{itemize}
Then there exists $0 < T' \leq T$, $0 < \sigma' \leq \sigma$ and a real-valued $V\in C^0([0,T'],\cA_{\sigma'})$
for which the solution $\Psi$ to the many-body Schrödinger equation~\eqref{eq:Schrodinger} satisfies $\rho_{\Psi(t)} = \rho(t)$ for all $0\leq t \leq T'$. 
This potential is unique in the following sense: If we have a real-valued $\tilde V\in C^0([0,T''],\cA_{\sigma''})$ for some $0< \sigma''\leq\sigma'$ and $0<T''\leq T'$ satisfying 
\begin{equation}
\int_\Lambda \tilde V(t,x) \,\rd x =\int_\Lambda V(t,x) \,\rd x    \label{eq:same_integral}
\end{equation}
for all $t\leq T''$ and such that the solution $\tilde\Psi$ to Schrödinger's equation~\eqref{eq:Schrodinger} satisfies $\rho_{\tilde\Psi}(t)=\rho(t)$ for $t\in[0,T'']$, then $\tilde V(t,x)=V(t,x)$ for all $t\in[0,T'']$ and $x\in\Lambda$.
\end{theorem}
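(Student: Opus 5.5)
Our plan is to recast the inverse problem as a nonlinear evolution equation for $\Psi$ alone, and then solve it by an abstract Cauchy--Kovalevskaya (Ovsyannikov--Nirenberg--Nishida) argument on the scale of Banach spaces $(\cB_s)_{0<s\le\sigma}$.

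\textbf{Step 1: a formula for $V$.} The continuity equation gives $\partial_t\rho_\Psi=-\nabla\cdot j_\Psi$. Differentiating once more and using the Schrödinger equation yields the force-balance equation
\begin{equation*}
\partial_t^2\rho_\Psi = \nabla\cdot(2\rho_\Psi\nabla V) + \nabla\nabla:\big(\text{stress tensor of }\Psi\big) + \nabla\cdot F_{ee}[\Psi].
\end{equation*}
Here $F_{ee}$ is the interaction force density, and the constants reflect our choice $m=1/2$. We then impose $\rho_\Psi=\rho$ inside the elliptic operator and rewrite the stress tensor. We subtract the part $\frac12\Delta^2\rho$, which equals the contribution $\nabla\nabla:\nabla\nabla\rho_\Psi$ once $\rho_\Psi=\rho$. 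What remains is of the form $\sum_{|\alpha|\le2}\partial^\alpha G_\alpha[\Psi]$, where each $G_\alpha$ is a quadratic expression in $\Psi$ and $\nabla\Psi$ only, integrated over $N-1$ variables. This gives
\begin{equation*}
V[\Psi]=\big(-\nabla\cdot\rho\nabla\big)^{-1}\Big(-\tfrac12(\partial_t^2\rho+\Delta^2\rho)+\textstyle\sum_{|\alpha|\le2}\partial^\alpha G_\alpha[\Psi]\Big),
\end{equation*}
normalized by a prescribed mean. Since $\rho(t)\ge c$, the operator $-\nabla\cdot\rho\nabla$ is invertible on mean-zero functions. We will show it is also invertible on $\cA_s$, via a Neumann series around $-\langle\rho\rangle\Delta$ together with the analytic estimates of the type \eqref{eq:simple_estim_Asigma} and \cref{lemma.1/f.analytic}, possibly after shrinking $\sigma$. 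The inverse gains two derivatives and absorbs the outer $\partial^\alpha$. As in Lemma~\ref{lem:Psi_to_density_B_sigma}, the first derivatives of $\Psi$ appearing in $G_\alpha$ can be traded for total-momentum derivatives: we move $\nabla_{x_1}$ to $\sum_j\nabla_{x_j}$ and use antisymmetry and the $H^2$ weight. The target estimate is
\begin{equation*}
\|V[\Psi]\|_{s'}\le \frac{C}{s-s'}\big(1+\|\Psi\|_s^2\big),
\end{equation*}
with a Lipschitz analogue for $V[\Psi]-V[\Phi]$. Since $\partial_t^2\rho\in C^0(\cA_\sigma)$, the $\rho$-terms are harmless.

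\textbf{Step 2: fixed point.} We solve $i\partial_t\Psi=\cH_{V[\Psi]}\Psi$ with $\Psi(0)=\Psi_0$. Conjugating by the free translation-invariant evolution $e^{-it\cH_0}$, which commutes with $e^{s|\cP|}$ and is unitary on every $\cB_s$, reduces this to an integral equation. Its nonlinearity $\Phi\mapsto V[\Phi]\Phi$ loses "one derivative" in the analytic scale: multiplication by $V\in\cA_{s'}$ maps $\cB_{s'}\to\cB_{s'}$, the total momentum shifts, and the $H^2$ weight survives thanks to the analytic bounds on $V$. Nishida's version of the abstract Cauchy--Kovalevskaya theorem then gives $T'>0$, $\sigma'>0$ and a solution in $C^0([0,T'],\cB_{\sigma'})$, with $V\in C^0([0,T'],\cA_{\sigma'})$.

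\textbf{Step 3: $\rho_\Psi=\rho$.} Set $\delta=\rho_\Psi-\rho$. By construction $\delta$ satisfies a linear closed equation of the form $\partial_t^2\delta=L\delta$, where $L$ is a fourth-order operator acting on $\delta$ that comes from the terms we replaced. The initial data vanish: $\delta(0)=0$ and $\partial_t\delta(0)=0$, by the compatibility assumptions. Uniqueness for this linear problem, again by a Cauchy--Kovalevskaya/Grönwall argument on the scale $\cA_s$, gives $\delta\equiv0$. This requires $\rho_\Psi\in C^2(\cA_{s})$, which Theorem~\ref{thm.density.is.analytic}(ii) provides. Uniqueness of $V$ follows from the same argument. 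Any $\tilde V$ that reproduces $\rho$ makes $\tilde\Psi$ solve the same nonlinear equation with the same initial datum, because the formula for $V$ is forced up to a constant, which \eqref{eq:same_integral} fixes. Uniqueness in Nishida's theorem, applied in $\cB_{\sigma''}$, then gives $\tilde\Psi=\Psi$, hence $\tilde V=V$.

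\textbf{Main obstacle.} The key difficulty is the estimate in Step 1. We must show that every first-order derivative of $\Psi$ in $G_\alpha$, including those produced by the singular $V_{ee}$ via the force $F_{ee}$, is controlled by $\|\Psi\|_s$ with only a $1/(s-s')$ loss, using the center-of-mass weight and the $H^2$ norm. A further issue is ensuring the inverse elliptic operator is bounded on $\cA_s$ uniformly in $t$. Our first attempt will handle $F_{ee}$ by noting that $\nabla V_{ee}$ only acts in relative coordinates: we symmetrize it into a divergence of $V_{ee}$ times the pair density, which $V_{ee}\in L^2$ and the $H^2$ regularity control.
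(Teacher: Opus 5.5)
There is a genuine gap in Steps~1--2, in how the loss of derivatives is counted. After the order-zero operator $K_\rho^{-1}\partial^i\partial^j$, your $V[\Psi]$ is only as regular as the bilinear term $\int\overline{\partial^i\Psi}\,\partial^j\Psi\,\rd X$. That term must lie in $\cA_{s'}$, including the $H^2$ weight. You propose trading $\nabla_{x_1}$ for $\cP$ and using the $H^2$ weight, as in Lemma~\ref{lem:Psi_to_density_B_sigma}. Doing so costs a power of $\pscal{\cP}$ on \emph{each} of the two differentiated factors. The reason is that the $H^2$ weight of the output goes to one side, while summing over the free momentum needs more than $D/2$ powers on the other side. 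Together with the derivative, that exceeds what the $H^2$ weight provides when $D=3$. The $V_{ee}$ terms behave the same way, since the form bound $V_{ee}\lesssim 1-\Delta$ costs $\sqrt{1+\cT}$ on each side. With the bookkeeping of Lemma~\ref{lemma:boundspreliminaryTi} you get $\norm{V[\Psi]}_{s'}\lesssim 1+\norm{\pscal{\cP}\Psi}_{s'}^2\lesssim 1+(s-s')^{-2}\norm{\Psi}_s^2$. For the Lipschitz version you get $\norm{V[\Psi]-V[\Phi]}_{s'}\lesssim (s-s')^{-2}\norm{\Psi+\Phi}_s\norm{\Psi-\Phi}_s$, not the claimed $1/(s-s')$. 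So even if each derivative separately costs only $1/(s-s')$, as your ``main obstacle'' paragraph hopes, the quadratic term costs $1/(s-s')^2$. That is a two-derivative loss, and the Lipschitz hypothesis of Nishida's (or Baouendi--Goulaouic's) theorem fails, exactly as for $\partial_t u=|\nabla u|^2$.

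The missing idea, used in the paper, is Nirenberg's doubling of unknowns.
\begin{itemize}
\item Run the fixed point in the scale $\nnorm{\Psi}_s=\norm{\pscal{\cP}\Psi}_s$, after shrinking $\sigma$ so that $\pscal{\cP}\Psi_0\in\cB_\sigma$.
\item Use the bilinear bound with one extra derivative on the output, $\norm{(1-\Delta)^{-1/2}Q_{\Psi_1,\Psi_2}}_s\lesssim\nnorm{\pscal{\cP}\Psi_1}_s\nnorm{\Psi_2}_s+\nnorm{\Psi_1}_s\nnorm{\pscal{\cP}\Psi_2}_s$.
\item Show, by a commutator argument, that $(1-\Delta)^{1/2}K_\rho^{-1}(1-\Delta)^{1/2}$ is bounded on $\cA_s^\perp$.
\end{itemize}
This gives a \emph{tame} estimate for $\Psi\mapsto V_\Psi\Psi$ in which only one factor carries the extra $\pscal{\cP}$. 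Hence there is a single $1/(s-s')$ loss in the $\nnorm{\cdot}$ scale, and the abstract Cauchy--Kovalevskaya argument then goes through.

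Two smaller points. First, inverting $K_\rho$ by a Neumann series around $-\bar\rho\Delta$ (with $\bar\rho$ the mean of $\rho$) does not work in general. The perturbation $-\nabla\cdot((\rho-\bar\rho)\nabla)$ has the same order, and its relative size is about $\sup|\rho-\bar\rho|/\bar\rho$, which need not be small. In fact the series already diverges in the energy space once $\max\rho\geq 2\bar\rho$, and shrinking $s$ does not change this. The paper instead combines the $L^2$ coercivity $K_\rho\geq(\min\rho)(-\Delta)$ with the factorization $K_\rho=\sqrt\rho(1-\Delta)\sqrt\rho+(\Delta\sqrt\rho-\sqrt\rho)\sqrt\rho$ and a low/high frequency split. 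Second, for $\delta=\rho_\Psi-\rho$, which solves $\partial_t^2\delta=-\Delta^2\delta+2\nabla V\cdot\nabla\delta+2\delta\Delta V$, a Cauchy--Kovalevskaya argument is not appropriate, since the equation is not of Kowalevskaya type. The Grönwall energy argument with $\pscal{\delta,(1+\Delta^2)\delta}+\norm{\partial_t\delta}_{L^2}^2$ works directly. Your reformulation of the force-balance equation, the conjugation by $e^{-it\cH_0}$, and the uniqueness argument for $V$ match the paper.
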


\begin{remark}[Continuity equation]
In the statement we used the current of a wavefunction
\begin{equation*}
j_{\Psi}(x):= 2N\Im \idotsint_{\Lambda^{N-1}}\overline{\Psi(x,x_2,...,x_N)} \nabla_x\Psi(x,x_2,...,x_N)\,\dx_2\cdots\dx_N.
\end{equation*}
(The factor $2$ is because we are working in a system of units where the mass is $m=1/2$.)
Recall the \emph{continuity equation}
\begin{equation*}
    \partial_t\rho+\nabla\cdot j_{\Psi}=0
\end{equation*}
that is satisfied for any solution $\Psi(t)$ of Schrödinger's equation~\eqref{eq:Schrodinger}. In the statement we need to assume that this continuity equation holds at time $t=0$, which prescribes the initial value of the time-derivative $\partial_t\rho(0)$.
\end{remark}

\begin{remark}[Uniqueness]
Our somewhat long uniqueness statement in Theorem~\ref{thm.main.V} is to emphasize that, although we can find a solution with some given analytic regularity parameter $\sigma'$, this solution stays unique in the bigger space $\cA_{\sigma''}$ corresponding to smaller regularity $\sigma''\leq \sigma'$. This stronger uniqueness property will be used later. 
\end{remark}

\begin{remark}[Gauge-freedom]\label{rmk.gauge.inverse.pb}
Note that for any (real-valued) time-dependent constant $C(t)$, the two potentials $V$ and $V+C(t)$ give rise to the same density. Hence, uniqueness only holds up to addition of such a constant. This freedom is removed by adding the condition~\eqref{eq:same_integral}. 

Later, we will systematically resolve this trivial non-uniqueness of $V$ by adding the constraint $\int V  \,\rd x =0$ for all $t$. This is the most natural convention in our setting, since $V$ is obtained by inverting an operator on a space of functions orthogonal to constants (the elliptic operator on the left side of~\eqref{eq:V_Psi_vague}). Note that most physics papers work on the full space $\R^D$, where $V$ is not necessarily expected to be  integrable at infinity. On $\R^D$, it is more natural to require that $V\to0$ at infinity, or perhaps $\int_{\R^D} V\rho\,\dx=0$. 
\end{remark}

\begin{remark}[Ground states]\label{rmk.ground.states}
It is common to start with $\Psi_0$ a ground state of $\cH_{V_0}$ for some $V_0\in\cA_\sigma$, which is then driven out of equilibrium. By Theorem~\ref{thm.density.is.analytic}, we know that such ground states are always in $\cB_\sigma$. Only the positivity of the density $\rho_{\Psi_0}$ is missing to be able to apply Theorem~\ref{thm.main.V}. The strict positivity of the density is valid in the non-interacting case $V_{ee}\equiv0$ because the first orbital is itself positive, by Perron--Frobenius~\cite[Sec.~XIII.12]{ReeSim4}. By perturbation theory, the strict positivity remains for small-enough interactions. In dimension $D=1$, where a Perron--Frobenius property holds in the fermionic $N$-body space, the positivity of $\rho$ is valid with a general interaction $V_{ee}$, as was recently proved in~\cite{Corso-26,Corso-26b}. 
In dimension $D=3$, it is expected that densities of interacting ground states are always strictly positive for Coulomb systems~\cite{FouHofSor-07}, but no proof of this claim has yet been given. 
\end{remark}

\begin{remark}[Maximal time of existence]\label{rmk.max.T}
In the proof, we provide explicit lower bounds on the regularity parameter $\sigma'$ and the time of existence $T'$ for the solution $V(t,x)$ in terms of the data. For any such fixed $\sigma'$, we can iterate the argument and consider the largest time $0<T'(\sigma')\leq T$ for which the solution $V$ exists in $\cA_{\sigma'}$. It is unavoidable that this time depends on the fixed regularity parameter $\sigma'$ due to the loss of derivatives in the problem. Going to larger times requires decreasing the regularity. The time $T'(\sigma')$ is non-increasing with $\sigma'$ because the spaces $\cA_{\sigma'}$ are decreasing (ordered by inclusion) in $\sigma'$.
This allows us to define the maximal time of existence
\begin{equation*}
    T^{\rm max}[\Psi_0,\rho]:=\sup_{0<\sigma'\leq\sigma}T'(\sigma')\leq T
\end{equation*}
so that our solution $V$ is in fact in $C^0([0,T''];\cA_{\sigma''})$ for any given $0<T''<T^{\rm max}[\Psi_0,\rho]$ and some $\sigma''$ depending on $T''$. Note that in our convention the trajectory $\rho$ is given in $C^2([0,T],\cA_\sigma)$, hence $\sigma$ and $T$ are fixed once and for all. If $T^{\rm max}[\Psi_0,\rho]<T$ then this must be because $\norm{V(t_n)}_{\sigma''}$ and $\norm{\Psi(t_n)}_{{\sigma''}}$ blow up for an appropriate sequence $(t_n)$ with $\limsup_nt_n\leq T^{\rm max}[\Psi_0,\rho]$ and all $0<\sigma''\leq \sigma$. 
\end{remark}

\begin{remark}[Dependence on $\Psi_0$ and $\rho$]
The potential $V$ of \cref{thm.main.V} depends continuously on the initial state $\Psi_0$ and the given density $\rho$. More precisely, the map 
\begin{equation*}
\cB_\sigma \times C^2([0,T];\cA_\sigma) \ni (\Psi_0, \rho) \mapsto V\in C^0([0,T'];\cA_{\sigma'})
\end{equation*}
is locally Lipschitz for some $\sigma'\leq\sigma$ and $T'\leq T$. This is a consequence of the proof via the Banach fixed-point theorem (see \cref{sec.Banach.fixed.point} below), noting that from \cref{lem.Krho} we can deduce that the contraction operator is Lipschitz in $\rho$. 
\end{remark}

\begin{remark}[Less regularity in space]
The main message of Theorem~\ref{thm.density.is.analytic} was that working with analytic-in-space densities and potentials is well-justified physically, even for the Coulomb interaction, if we leave aside the question of point nuclei addressed in Remark~\ref{rmk.point.nuclei}. Identifying a lower regularity space for which the inverse problem of Theorem~\ref{thm.main.V} can be solved is a very interesting, though probably difficult, question. In this spirit we mention the work of Lampart~\cite{Lampart-21}, who proved that the set of densities is meagre in the sense of Baire in $C^0([0,T];H^{s})$ for potentials in $C^0([0,T];H^{s})$, this for all $s>0$ large enough.
\end{remark}

\begin{remark}[Other boundary conditions]
Periodic boundary conditions are physically important for solid state physics. 
It is natural to consider also other boundary conditions, e.g.~Neumann or Dirichlet. For both, we lose the translation invariance and thus cannot use the same analyticity-in-the-center-of-mass argument, but need instead to consider smoother $V_{ee}$'s. Dirichlet boundary conditions have the further difficulty that the density vanishes at the boundary.
\end{remark}

\begin{remark}[Dependence on the interaction $V_{ee}$]\label{rmk.adiabatic.connection}
The lower bounds on $T',\sigma'$ only depend on the interaction $V_{ee}$ through its norm $\|V_{ee}\|_{L^2(\Lambda)}$. In particular, we can interpolate between two interactions $V_{ee}, V_{ee}'\in L^2(\Lambda)$ in the manner $(1-\lambda) V_{ee}+\lambda V_{ee}'$ with a coupling constant $0\leq\lambda\leq1$ and we get a unique solution $V_\lambda$ over some time independent of $\lambda$. This is useful to give a meaning to the \emph{adiabatic connection} in TDDFT \cite[Chs.~13 and 14]{Ullrichs-11} (see also \cite{LanPer-75,LanPer-80}), where the interacting problem is connected to the non-interacting one ($V_{ee}'\equiv0$).
\end{remark}

\begin{remark}[Runge--Gross, van Leeuwen, and time-analyticity]
The famous Runge--Gross theorem \cite{RunGro-84,FouLamLewSor-16,RugPenBau-JPA-09} states that, if it exists, the potential $V$ is unique under the assumptions 
\begin{itemize}
    \item $\rho, V_{ee}, \Psi_0, V$ are infinitely often differentiable in both time and space, and 
    \item $V$ is real-analytic in time. 
\end{itemize}
Later, van Leeuwen~\cite{vanLeeuwen-99} identified an important equation satisfied by $V$ (that we alluded to in~\eqref{eq:V_Psi_vague}) and used it to determine the Taylor series of $V$ at $t=0$. Showing existence of $V$ along these lines, however, requires establishing time-analyticity and a good  control on the $x$-dependence. 

Unfortunately, the analyticity-in-time is not generally expected for the solution with a singular interaction such as Coulomb~\cite{FouLamLewSor-16}, and would in any case need more assumptions on the initial condition $\Psi_0$ (see Remark~\ref{rmk:time-regularity}). Our result, on the other hand, proves both existence and uniqueness with minimal regularity in time. More precisely, as we show in Theorem~\ref{thm.density.is.analytic}, the natural assumption is to require that the density is twice-differentiable in time, which provides an inverse potential that is {a priori} only \emph{continuous} in time. 
\end{remark}

\begin{remark}[Spin]
As we briefly mentioned above, our result is equally applicable to the setting of particles with spin, and to bosons, if we fix the total density (summing over spin). More precisely, we can replace $L^2_a(\Lambda^N)$ in the definition of $\cB_\sigma$ by $\bigwedge^N L^2(\Lambda, \C^{2s+1})$ (for spin-$s$-fermions) or $\bigotimes_{\mathrm{sym}}^N L^2(\Lambda;\C^{2s+1})$ (for spin-$s$-bosons). 

Our proof also applies to the more involved case where different densities $\rho_{\tau}$ are prescribed for each spin sector $\tau\in\{-s,-s+1,\ldots,s\}$. In this case one obtains $2s+1$  corresponding inverse potentials $V_\tau$, which amounts to taking for $V$ a $(2s+1)\times (2s+1)$ diagonal matrix with entries in $\cA_\sigma$.
\end{remark}

\begin{remark}[Non-interacting and interacting $v$-representability]
A famous problem in quantum chemistry is the \emph{$v$-representability problem}~\cite[Sec.~4.4.2]{MarMaiNogGroRub-12}
of classifying the set of densities $\rho(t)$ arising from a time-dependent Schrödinger evolution \eqref{eq:Schrodinger} for some external potential $V(t)$ with either $V_{ee}=0$ (non-interacting) or $V_{ee}\ne 0$ (interacting). 
\cref{thm.main.V} in particular shows that the set of analytic-in-space and $C^2$-in-time strictly positive densities is both non-interacting and interacting $v$-representable. 

The analogous static (i.e.~time-independent) $v$-representability problem of classifying densities arising from ground states of (non-)interacting Hamiltonians as in \eqref{eq:H_V} was recently solved in dimension $D=1$ in a series of works \cite{Corso-26,Corso-26b,SutSarPenRugLeeGie-24,Corso-2025}. The case of higher dimensions is largely open. 
\end{remark}

\subsection{Ideas of proof}

As we quickly explained in the introduction, the main difficulty of the inverse problem considered in this paper is the loss of derivatives that occurs when $V$ is expressed in terms of the wavefunction. That one should reduce the problem of finding $V$ to a nonlinear equation in $\Psi$ only was suggested first in \cite{MaiTodWooBur-10}, where the difficulties coming from the loss of derivatives were clearly identified.

We overcome this difficulty by rewriting the nonlinear equation in a suitable way, using the imposed constraints $\rho_\Psi=\rho$. In other words we replace many terms involving $\rho_\Psi$ by $\rho$ on the right of~\eqref{eq:V_Psi_vague_1derivative}. This does not change anything for the  final solution and can help to improve the structure of the nonlinear equation to be solved. In our case, we use $\Delta^2\rho=\Delta^2\rho_\Psi$ to absorb the highest-order derivatives of the wavefunction and obtain the nonlinear equation 
\begin{equation}
    i\partial_t \Psi (t) 
    = \left(
        \sum_j-\Delta_j + \sum_j V_{\Psi(t)} (t,x_j) + \sum_{j<k} V_{ee}(x_j-x_k)
    \right)
    \Psi(t), 
    \label{eq:nonlinear_Psi_V}
\end{equation}
where $V_{\Psi(t)}$ only involves first derivatives of $\Psi$ (and higher ones of $\rho$). The precise formula for $V_\Psi$ is provided  in \eqref{eq:VPsi} below.

Equation~\eqref{eq:nonlinear_Psi_V} involves a nonlocal nonlinearity that loses derivatives as well as terms involving the singular interaction $V_{ee}$. To show well-posedness we exploit that $V_{\Psi(t)}$ is a one-body object depending on $\Psi$ through terms in which $N-1$ coordinates are integrated out such as
\begin{align*}
 \int \rd X \, \overline{\partial^i\Psi(x,X)} \partial^j\Psi(x,X)
\end{align*}
where $\partial^i = \partial_{x^i}$ for $x=(x^1,\ldots,x^D)\in \Lambda$ and $X=(x_2,\ldots, x_N)\in \Lambda^{N-1}$. The main technical challenge is then to bound all these terms only using the total momentum $\cP$. The terms involving $V_{ee}$ in $V_{\Psi(t)}$ are particularly involved. Derivatives of $\Psi$ are needed to control the singularity of $V_{ee}$ by an analogue of a Sobolev inequality, but not too many derivatives can be involved. The $H^2$-weight in the norm~\eqref{eq:normsigma} corresponds to the domain of the $N$-particle Hamiltonians $\cH_0$ and $\cH_{V(t)}$. This weight plays a crucial role in controlling the derivatives of $\Psi$ that are not in the ``right'' direction of the center of mass.  

Finally, we also need to study the invertibility of the elliptic operator $V\mapsto -\nabla\cdot(\rho\nabla V)$ appearing on the left side of~\eqref{eq:V_Psi_vague}. For this it is essential that $\rho$ is itself analytic and does not vanish. 

With all the desired bounds at hand, the proof is essentially reduced to a classical Cauchy--Kovalevskaya setting~\cite{Ovsyannikov-65,Ovsyannikov-71,Nirenberg-72,Nishida-77}. No existing result seems to apply directly, however, due to the Schrödinger part $\cH_0$ that involves two derivatives, but commutes with the analytic weight in the total momentum. We thus provide the full argument for the convenience of the reader. Although we could have used a Moser iteration scheme with the norm index $\sigma$ decreased at each step~\cite{Nishida-77,Nirenberg-72}, we instead follow an elegant argument of Baouendi and Goulaouic~\cite{BaoGou-77,BaoGou-77b} which relies on a Banach fixed-point technique in a space where the regularity parameter $\sigma$ depends on time linearly, $\sigma(t)=\sigma(1-t/T)$, and the corresponding $\sigma(t)$-norm is allowed to blow up in a specific, integrable-in-time manner, when $t$ approaches the final time $T$. 

\begin{remark}[Analytic interaction $V_{ee}$]
The proof is much easier if the interaction $V_{ee}$ is assumed analytic. In this case one can show that $\Psi$ is analytic in all directions and the estimates are much easier. This result essentially fits into the standard Cauchy--Kovalevskaya setting.  
\end{remark}

\begin{remark}[Fixed-point]
Since our method is based on a Banach fixed-point technique, a consequence of our proof is that the sequence defined iteratively as
$$ i\partial_t\Psi_{n+1}=\cH_0\Psi_{n+1}+\sum_{j=1}^NV_{\Psi_{n}}(x_j)\Psi_n, 
\qquad 
\Psi_{0}(t) = e^{-it\cH_0}\Psi_0
$$
converges to the unique solution. 
The wavefunctions $\Psi_n$ constructed this way do not stay normalized along the iterations, even if the limit necessarily is. 
A slightly different iterative scheme was proposed in~\cite{RugLee-11,RugGiePenLee-12,RugPenLee-15,Penz-PhD,TarUll-21}. 
Studying the convergence of these and other possible iteration schemes is an interesting problem for future study.
\end{remark}

%%%%%%%%%%%%%%%%%%%%%%%%%%%%%%%%%%%%%%%%%%%%%%%%%%%%%%%%%%%%%%%%%%%%%%%%%%%%%%%%
%%%%%%%%%%%%%%%%%%%%%%%%%%%%%%%%%%%%%%%%%%%%%%%%%%%%%%%%%%%%%%%%%%%%%%%%%%%%%%%%
\section{Applications to time-dependent density functional theory}
\label{sec.TDDFT}
%%%%%%%%%%%%%%%%%%%%%%%%%%%%%%%%%%%%%%%%%%%%%%%%%%%%%%%%%%%%%%%%%%%%%%%%%%%%%%%%
%%%%%%%%%%%%%%%%%%%%%%%%%%%%%%%%%%%%%%%%%%%%%%%%%%%%%%%%%%%%%%%%%%%%%%%%%%%%%%%%
In this section we explain how \cref{thm.main.V} can be used to provide the \textbf{first rigorous justification of time-dependent density functional theory (TDDFT)}. We start by explaining how one can use the inverse result in Theorem~\ref{thm.main.V} to show that the many-body problem can be expressed solely in terms of the density. Then we turn to Kohn--Sham theory, where the high-dimensional \emph{linear} many-body Schrödinger equation~\eqref{eq:Schrodinger} is replaced by a \emph{nonlinear} equation posed in $L^2(\Lambda)^N$ for $N$ orbitals.

%%%%%%%%%%%%%%%%%%%%%%%%%%%%%%%%%%%%%%%%
\subsection{Orbital-free TDDFT}
In \emph{orbital-free TDDFT} we look for a nonlinear equation involving only the density $\rho(t,x)$ and the initial state $\Psi_0$, of which the exact Schrödinger density is the unique solution. The statement involves the interacting inverse problem that we solved in Theorem~\ref{thm.main.V}.

\begin{definition}[Universal interacting functional]\label{def:V_functional}
Let $\Psi_0$ be a normalized wavefunction in some $\cB_\sigma$ with $\sigma>0$ and let $\rho\in C^2([0,T];\cA_\sigma)$ be a trajectory of densities so that $\rho>0$, $\rho(0)=\rho_{\Psi_0}$, $\int_\Lambda \rho(t) \, \rd x = N$, and $\partial_t\rho(0)=-\nabla\cdot j_{\Psi_0}$. For any such $(\Psi_0,\rho)$, we denote by $V[\Psi_0,\rho](t,x)$ the potential constructed in Theorem~\ref{thm.main.V}, normalized in the manner
\begin{equation}
\int_\Lambda V[\Psi_0,\rho](t,x)\,\dx=0.   
\label{eq:normalization_integral_V}
\end{equation}
Its maximal time of existence is denoted by $T^{\rm max}[\Psi_0,\rho]\in(0,T]$ (cf.~Remark~\ref{rmk.max.T}). In particular, $V[\Psi_0,\rho]$ is uniquely defined in $C^0([0,T'];\cA_{\sigma'})$ for all $0<T'< T^{\rm max}[\Psi_0,\rho]$ and some $0<\sigma'\leq\sigma$ depending on $T'$.
\end{definition}

Next, based on our Theorem~\ref{thm.main.V}, we are able to provide an implicit equation characterizing the exact Schrödinger solution in terms of the functional $V[\Psi_0,\rho]$. This result is consistent with the Runge--Gross theorem \cite{RunGro-84}, but it is not typically presented as such in the chemistry literature (see \cite{MAQUI_TDDFT-26} for a general perspective on this topic).

\begin{corollary}[Orbital-free TDDFT]\label{cor:orbital_free}
Let $\Psi_0\in\cB_\sigma$ be so that $\rho_{\Psi_0}>0$ on $\Lambda$. Consider a real-valued external potential $V_{\rm ext}\in C^0([0,T],\cA_\sigma)$ normalized in the manner $\int_\Lambda V_{\rm ext}\,\dx=0$ and denote by $\Psi^{\rm S}$ the wavefunction solving the many-body Schrödinger equation~\eqref{eq:Schrodinger} and by $\rho^{\rm S}:=\rho_{\Psi^{\rm S}}$ its density. Upon decreasing $T$ and $\sigma$, we may assume that $\rho^{\mathrm{S}}\in C^2([0,T];\cA_\sigma)$ and $\rho^{\rm S}(t,x)>0$ for every $x\in\Lambda$ and $t\leq T$. Then this exact Schrödinger density $\rho^{\rm S}$ is the \emph{unique solution} to the implicit equation
\begin{equation}
    \boxed{V[\Psi_0,\rho]=V_{\rm ext}}
    \label{eq:orbital_free}
\end{equation}
among all trajectories $\rho\in C^2([0,T'],\cA_{\sigma'})$ for some $0<\sigma'\leq\sigma$ and $0<T'\leq T$ satisfying $\rho>0$, $\int \rho =N$, and the two constraints $\rho(0)=\rho_{\Psi_0}$ and $\partial_t\rho(0)=-\nabla\cdot j_{\Psi_0}$.
\end{corollary}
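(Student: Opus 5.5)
The argument has two parts: first show that $\rho^{\rm S}$ is admissible and solves \eqref{eq:orbital_free}, then show that any admissible solution must coincide with it. Both parts reduce to the uniqueness clause of Theorem~\ref{thm.main.V} combined with uniqueness for the linear Schrödinger equation \eqref{eq:Schrodinger}.

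\textbf{Step 1: $\rho^{\rm S}$ is an admissible trajectory.} By Theorem~\ref{thm.density.is.analytic}(ii), $\Psi^{\rm S}\in C^0([0,T];\cB_\sigma)$ and $\rho^{\rm S}\in C^2([0,T];\cA_{\sigma'})$ for every $\sigma'<\sigma$. Replacing $\sigma$ by such a $\sigma'$ gives the required $C^2$ regularity. Since $\rho_{\Psi_0}>0$ is continuous on the compact torus, it is bounded below by some $c>0$. Because $t\mapsto\rho^{\rm S}(t)$ is continuous into $\cA_\sigma\subset C^0(\Lambda)$ (using \eqref{eq:simple_estim_Asigma}), we get $\rho^{\rm S}\geq c/2$ on $[0,T]$ after possibly decreasing $T$. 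The remaining constraints also hold: $\int\rho^{\rm S}=N$ follows from conservation of the norm, $\rho^{\rm S}(0)=\rho_{\Psi_0}$ is immediate, and $\partial_t\rho^{\rm S}(0)=-\nabla\cdot j_{\Psi_0}$ follows from the continuity equation at $t=0$, which holds for a $C^1$-in-time $H^2$ solution.

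\textbf{Step 2: $\rho^{\rm S}$ solves \eqref{eq:orbital_free}.} Theorem~\ref{thm.main.V} applies to $(\Psi_0,\rho^{\rm S})$ and produces $V[\Psi_0,\rho^{\rm S}]\in C^0([0,T_1];\cA_{\sigma_1})$ with zero mean whose Schrödinger solution has density $\rho^{\rm S}$. The potential $V_{\rm ext}$ does the same job: it lies in $C^0([0,T_1];\cA_{\sigma_1})$ since the spaces $\cA_\sigma$ decrease in $\sigma$ and $\sigma_1\leq\sigma$, it has zero mean, and it reproduces $\rho^{\rm S}$ by definition. The uniqueness clause of Theorem~\ref{thm.main.V}, applied with $\sigma''=\sigma_1$ and $T''=T_1$, gives $V[\Psi_0,\rho^{\rm S}]=V_{\rm ext}$ on $[0,T_1]$.

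To reach all times $t<T^{\max}[\Psi_0,\rho^{\rm S}]$, I would argue as follows. For each $T''<T^{\max}$, Remark~\ref{rmk.max.T} says the solution lies in some $\cA_{\sigma''}$ on $[0,T'']$, and $V_{\rm ext}\in C^0([0,T''];\cA_{\sigma''})$ as well. The strong uniqueness statement, which is valid in the bigger space $\cA_{\sigma''}$, then applies again. The point to watch is that uniqueness in Theorem~\ref{thm.main.V} is stated on the interval where the constructed solution lives, so one needs the solution in $\cA_{\sigma''}$ on the whole of $[0,T'']$; this is exactly what the definition of $T^{\max}$ supplies.

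\textbf{Step 3: uniqueness.} Let $\rho\in C^2([0,T'];\cA_{\sigma'})$ be admissible with $V[\Psi_0,\rho]=V_{\rm ext}$ on its interval of existence. By definition of the functional, the Schrödinger solution with potential $V[\Psi_0,\rho]$ has density $\rho$. This potential equals $V_{\rm ext}$, so by uniqueness of solutions to the linear equation \eqref{eq:Schrodinger} with potential in $C^0([0,T];C^2)$, that solution is $\Psi^{\rm S}$. Hence $\rho=\rho_{\Psi^{\rm S}}=\rho^{\rm S}$ on the common interval.

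I expect the main obstacle to be the bookkeeping of time intervals and regularity indices, in particular interpreting the boxed equation on $[0,\min(T',T^{\max}))$. Once the domain of the equation is fixed as the existence interval of $V[\Psi_0,\rho]$, no new analysis is needed beyond Theorems~\ref{thm.density.is.analytic} and~\ref{thm.main.V}.
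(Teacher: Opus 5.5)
Your Step 3 is exactly the paper's core argument: wherever $V[\Psi_0,\rho]$ exists and equals $V_{\rm ext}$, uniqueness for the linear equation \eqref{eq:Schrodinger} forces $\rho=\rho^{\rm S}$. The gap is that you stop there and say no further analysis is needed. With your reading of the boxed equation (imposed only on the existence interval of $V[\Psi_0,\rho]$), you obtain $\rho=\rho^{\rm S}$ only on $[0,\min(T',T^{\rm max}[\Psi_0,\rho]))$. Nothing then excludes an admissible $\rho\in C^2([0,T'];\cA_{\sigma'})$ that coincides with $\rho^{\rm S}$ up to some $T^{\rm max}[\Psi_0,\rho]<T'$ and differs from it afterwards. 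Such a $\rho$ would satisfy \eqref{eq:orbital_free} in your sense, so uniqueness among trajectories on $[0,T']$ is not established. What must be shown, and what the paper's proof asserts explicitly, is that every solution has $T^{\rm max}[\Psi_0,\rho]=T'$, and hence $\rho=\rho^{\rm S}$ on all of $[0,T']$.

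The missing idea is that no analyticity is lost on the interval where \eqref{eq:orbital_free} holds. There the potential is $V_{\rm ext}\in C^0([0,T];\cA_\sigma)$ and the wavefunction is $\Psi^{\rm S}\in C^0([0,T];\cB_\sigma)$, with $\sup_t\|\Psi^{\rm S}(t)\|_\sigma<\infty$.

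\begin{enumerate}
\item For $T''<T^{\rm max}[\Psi_0,\rho]$, restart Theorem~\ref{thm.main.V} at time $T''$ with initial state $\Psi^{\rm S}(T'')$ and target $\rho|_{[T'',T']}$.
\item The compatibility conditions at $T''$ hold because $\rho=\rho^{\rm S}$ on $[0,T'']$ and $\rho$ is $C^1$ in time.
\item The existence time given by Lemma~\ref{lem.existence.uniqueness.BG.space} and Proposition~\ref{prop.solve.NLS} depends only on $\|\Psi^{\rm S}(T'')\|_\sigma$, on the $C^2([0,T'];\cA_{\sigma'})$-norm of $\rho$, and on its positive lower bound. It is therefore bounded below by some $\delta>0$ independent of $T''$.
\item Glue the restarted potential to $V_{\rm ext}$. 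The result is continuous at $T''$, since by \eqref{eq:VPsi} and the force-balance equation the zero-mean potential at each time is determined by $\Psi(t)$, $\rho(t)$ and $\partial_t^2\rho(t)$.
\item Taking $T''>T^{\rm max}[\Psi_0,\rho]-\delta$ then contradicts maximality unless $T^{\rm max}[\Psi_0,\rho]=T'$.
\end{enumerate}

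The same continuation argument also gives $T^{\rm max}[\Psi_0,\rho^{\rm S}]=T$ in your Step~2. Without it, you only know that $\rho^{\rm S}$ solves \eqref{eq:orbital_free} below its maximal time, not on the whole interval.
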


\begin{proof}
For any given trajectory $\rho$ as in the statement, recall that $V[\Psi_0,\rho]$ is well-defined over some maximal time of existence $T^{\rm max}[\Psi_0,\rho]\leq T'\leq T$. Uniqueness means that if~\eqref{eq:orbital_free} holds for all $0\leq t<T^{\rm max}[\Psi_0,\rho]$ then necessarily $T^{\rm max}[\Psi_0,\rho]=T'$ and $\rho(t)=\rho^{\rm S}(t)$ on $[0,T']$. To prove this claim, let us consider $T''<T^{\rm max}[\Psi_0,\rho]$ and $\sigma''$ so that $V[\Psi_0,\rho]\in C^0([0,T''];\cA_{\sigma''})$. The equation~\eqref{eq:orbital_free}  means that $V_{\rm ext}$ reproduces the density $\rho$ on $[0,T'']$, by definition of $V[\Psi_0,\rho]$. But the solution of Schrödinger's equation~\eqref{eq:Schrodinger} with $V_{\rm ext}$ is unique and its density is $\rho^{\rm S}$ by definition. It follows immediately that $\rho(t)=\rho^{\rm S}(t)$ for $t\in[0,T'']$ and thus by iteration until the final time $T'$.
Note here that the iteration reaches the final time as in fact no regularity is lost on the $[0,T'']$ due to \eqref{eq:orbital_free}, and hence, the time of existence in \cref{thm.main.V} is uniformly bounded from below.
\end{proof}

\begin{remark}[Gauge-freedom]\label{rmk.gauge}
We have the same gauge freedom as mentioned in Remark~\ref{rmk.gauge.inverse.pb}. The potential provided by Theorem~\ref{thm.main.V} is only defined up to addition of a time-dependent constant $C(t)$. We have chosen to break this gauge freedom to ensure uniqueness by requiring the normalization~\eqref{eq:normalization_integral_V}. 
\end{remark}

The implicit equation~\eqref{eq:orbital_free} is of little practical use because it is hard to derive efficient approximations to the functional $(\Psi_0,\rho)\mapsto V[\Psi_0,\rho]$. It is more a proof of concept, showing that the whole theory can be based on the density only. Note that this allows us to also view the Schrödinger solution $\Psi(t)$ as a functional of $\Psi_0$ and $\rho$ instead of the usual external potential $V_{\rm ext}$. 

As an illustration we now provide the simplest approximation to $V[\Psi_0,\rho]$, the famous time-dependent Thomas--Fermi functional $V^{\rm TF}[\rho]$ (it is independent of the initial state $\Psi_0$). 
It arises from a hydrodynamics equation, which we first discuss.

\begin{remark}[Hydrodynamics]\label{rmk:hydrodynamics}
For any given energy functional $\cE[\rho]$ of the density $\rho$ only, we can consider the associated hydrodynamics equations arising from this functional in an external potential $V_{\rm ext}$. This is the (classical/symplectic) Hamiltonian flow associated with
$\int_\Lambda (\rho |\nabla \theta|^2 + \cE[\rho]+V_{\rm ext}\rho) \, \rd x$ with conjugate variables $\theta$ and $\rho$. (One interprets $-\nabla \theta$ as the velocity field of the fluid having local density $\rho$, see \cite{Bloch-33,CheSie-18}.) The hydrodynamics equations  formally read 
\begin{equation}
\partial_t \theta(t,x) = |\nabla\theta(t,x)|^2 + \frac{\delta\cE[\rho(t)]}{\delta\rho(t,x)}+V_{\rm ext}(t,x), 
\qquad
\partial_t \rho(t,x) = 2 \nabla \cdot ( \rho(t,x) \nabla \theta(t,x)), 
\label{eqn.hydrodynamics}
\end{equation}
with 
 $\delta\cE/\delta\rho$
 the functional derivative.  
 We remark further that the second equation can also be viewed as the continuity equation for the (mass-$N$) one-body wavefunction $\varphi = \sqrt{\rho}e^{-i\theta}$, whose current is $j_\varphi = -2 \rho \nabla \theta$, and whose velocity field is thus $-\nabla\theta$. The equations~\eqref{eqn.hydrodynamics} are unchanged if $V_{\rm ext}$ is replaced by $V_{\rm ext}+C(t)$ and $\theta$ by $\theta+\int_0^t C(s)\,\rd s$. 

We like to see~\eqref{eqn.hydrodynamics} as an approximation to the many-particle Schrödinger equation~\eqref{eq:Schrodinger}. But we can also think of~\eqref{eqn.hydrodynamics} as replacing the true universal density-to-potential functional $V[\Psi_0,\rho]$ by an approximation $V^{\rm HD}[\rho]$ and then solving
\begin{equation}
    V^{\rm HD}[\rho]=V_{\rm ext}
    \label{eq:hydrodynamics_approx}
\end{equation}
in place of~\eqref{eq:orbital_free}. Indeed, for any trajectory $\rho(t)$, let us call $\theta[\rho](t,x)$ the unique solution of the elliptic equation $2\nabla\cdot(\rho\nabla\theta)=\partial_t\rho$ satisfying $\int_\Lambda\theta=0$. For a given trajectory $\rho\in C^2([0,T],\cA_\sigma)$ with $\rho>0$ and $\int \rho =N$, 
we can find the unique solution $\theta \in C^1([0,T];\cA_{\sigma})$ of zero mean by Lemma~\ref{lem.Krho} below. With such a functional $\theta[\rho]$ at hand, we can now introduce the potential functional
$$V^{\rm HD}[\rho]:=\partial_t\theta[\rho]-|\nabla\theta[\rho]|^2-\frac{\delta\cE[\rho(t)]}{\delta\rho(t,x)}+C(t)$$
with the constant $C(t)$ chosen so that $\int_\Lambda V^{\rm HD}[\rho]=0$ for all $t$. Then~\eqref{eq:hydrodynamics_approx} is nothing but~\eqref{eqn.hydrodynamics} in disguise, after adding $\int_0^t C(s)\,\rd s$ to $\theta$. 

This class of approximations uses only $\rho$ and not the initial state $\Psi_0$. The potential $V^{\rm HD}[\rho](t,x)$ depends on $\rho(t)$, $\partial_t\rho(t)$ and $\partial_t^2\rho(t)$ through $\partial_t\theta$. In this sense, it is not completely local in time. It is not local in space either, due to the inversion of the elliptic operator in the definition of $\theta$.

In principle we can use this scheme for any functional $\cE[\rho]$. If we choose the Levy--Lieb functional~\cite{Levy-79,Lieb-83b,LewLieSei-23_DFT} that provides the smallest $N$-particle quantum energy at given density $\rho$, then the functional derivative $\delta\cE/\delta\rho$ is the universal ground state potential (whose rigorous existence is still an open problem in most cases). This provides a kind of adiabatic theory, that is exact for ground states and only an approximation for time-dependent systems. In the next remark we mention a simpler approximation.
\end{remark}

\begin{remark}[Time-dependent Thomas--Fermi equation]
As a particular example of a hydrodynamics equation, we consider the \emph{time-dependent Thomas--Fermi equation}~\cite{Bloch-33,Gombas-49,CheSie-18}. It is the hydrodynamics equation \eqref{eqn.hydrodynamics} with the functional $\cE[\rho]$ chosen as the Thomas--Fermi functional \cite{Gombas-49,Lieb-81b,LieSim-77b}
\begin{equation*}
\cE^{\mathrm{TF}}[\rho] =  \int_\Lambda \left(\frac{D}{D+2}c^{\mathrm{TF}}\rho^{1+2/D} + \frac{1}{2}(V_{ee}*\rho)\rho \right) \rd x, 
\qquad 
c^{\rm TF} = 4\pi^2\left(\frac{D}{|\bS^{D-1}|}\right)^{2/D}.
\end{equation*}
The first term is the semi-classical kinetic energy functional and the second one the mean-field (i.e., Hartree) approximation to the interaction~\cite{LewLieSei-23_DFT}. The time-dependent Thomas--Fermi equation in an external potential $V_{\rm ext}$ is
\begin{equation*}
 \partial_t \theta = |\nabla\theta|^2 + c^{\mathrm{TF}}\rho^{2/D} + V_{ee}*\rho + V_{\rm ext}, 
 \qquad 
 \partial_t\rho=2\nabla\cdot(\rho\nabla\theta). 
\end{equation*}
As explained in the previous remark, this corresponds to approximating the exact $V[\Psi_0,\rho]$ by the \emph{time-dependent Thomas--Fermi  potential functional} 
\begin{equation*}
    V^{\rm TF}[\rho]:=\partial_t\theta-|\nabla\theta|^2-c^{\rm TF}\rho^{2/D}-V_{ee}\ast\rho+C(t),\qquad \partial_t\rho = 2 \nabla \cdot(\rho \nabla\theta),
\end{equation*}
where $\theta$ is given by the second equation and the constant $C(t)$ is chosen to ensure $\int_\Lambda V^{\mathrm{TF}}[\rho] \, \rd x = 0$. 
In the static case $\partial_t\rho=0$, the (unique up to a constant) solution is $\theta\equiv0$ and we find the usual Thomas--Fermi potential with a minus sign. 
\end{remark}

%%%%%%%%%%%%%%%%%%%%%%%%%%%%%%%%%%%%%%%%
\subsection{Kohn--Sham theory}
The idea of Kohn--Sham theory \cite{KohSha-PR-65} is to replace the true (interacting) $N$ electrons by $N$ fictitious non-interacting particles reproducing the exact time-dependent density of the true system, thanks to some nonlinear terms modeling the missing interaction. This way one ends up with Hartree--Fock-type equations that are much easier to solve on a computer than the high-dimensional Schrödinger equation.
The theory relies on the non-interacting functional obtained by applying our inverse result in Theorem~\ref{thm.main.V} in the case $V_{ee}\equiv0$.

\begin{definition}[Universal non-interacting functional]\label{def:V0_functional}
When $V_{ee}\equiv0$ we denote by $V_{\mathrm{s}}[\Psi_0,\rho]$ the functional introduced in  Definition~\ref{def:V_functional} and by $T^{\rm max}_{\mathrm{s}}[\Psi_0,\rho]$ its maximal time of existence.\footnote{The subscript `$\mathrm{s}$' refers to `Slater'. For non-interacting systems it is natural to assume the initial condition $\Psi_0$ to be a Slater determinant, in which case $\Psi(t)$ remains one.}
\end{definition}

Next, let us give ourselves a (known) real-valued external potential $V_{\rm ext}(t,x)$ in $C^0([0,T];\cA_\sigma)$ for some $T,\sigma>0$ and an initial wavefunction $\Psi_0\in\cB_\sigma$ with $\rho_{\Psi_0}>0$. Let $\Psi^{\rm S}(t)$ denote the corresponding many-body Schrödinger solution to~\eqref{eq:Schrodinger}. By Theorem~\ref{thm.density.is.analytic} we know that $\Psi^{\rm S}\in C^0([0,T];\cB_\sigma)$ and $\rho^{\rm S}:=\rho_{\Psi^{\rm S}}\in C^2([0,T];\cA_\sigma)$, after possibly decreasing $\sigma$. In particular $\rho^{\rm S}(t)$ remains bounded away from 0 for some time and up to decreasing $T$ we can assume that $\rho^{\rm S}(t,x)>0$ for all $x\in\Lambda$ and $t\leq T$.

Let us now consider any initial Slater determinant $\Phi_0= \frac{1}{\sqrt{N!}} \varphi_{0,1}\wedge \cdots \wedge\varphi_{0,N}$ with $\varphi_{0,j}\in\cA_\sigma$, $\rho_{\Phi_0}=\rho_{\Psi_0}$,  and $\nabla \cdot j_{\Phi_0} = \nabla \cdot j_{\Psi_0}$.\footnote{Such states always exist for a density in $\cA_\sigma$ using for instance the Harriman--Lieb construction in~\cite{Harriman-81} and~\cite[Thm.~1.2]{Lieb-83b} and adding a phase to reproduce the current.}
Since $\rho^{\rm S}\in C^2([0,T];\cA_\sigma)$ we can apply Theorem~\ref{thm.main.V} with $V_{ee}\equiv0$ and conclude that the solution $\Phi^{\rm V}(t)\in C^0([0,T'];\cB_{\sigma'})$ to the non-interacting equation
\begin{equation}
    i \partial_t \Phi^{\rm V}(t)
    = \left(\sum_{j=1}^N (-\Delta_{x_j} + V_{\mathrm{s}}[\Phi_0,\rho^{\rm S}](t,x_j))\right) \Phi^{\rm V}(t),
    \qquad 
    \Phi^{\rm V}(0) = \Phi_0,
\label{eq:KS_Phi}
\end{equation}
has the desired density $\rho_{\Phi^{\rm V}(t)}=\rho^{\rm S}(t)$, for all $0<T'<T^{\rm max}_{\mathrm{s}}[\Phi_0,\rho^{\rm S}]\leq T$ and some $0<\sigma'\leq \sigma$ depending on $T'$. In other words we can \textbf{represent the density of any interacting system by that of a non-interacting system}, at the expense of changing the external potential, for data fulfilling the assumptions of our theorems. 

Although this proves the feasibility of using non-interacting particles to reproduce an interacting density, this does not lead to a predictive scheme. This is because everything depends here on the exact density $\rho^{\rm S}$, which is not known. 

Let us now explain how this difficulty is handled in TDDFT, by introducing \textbf{nonlinear Kohn--Sham equations} that do not explicitly rely on $\rho^{\rm S}$ in their definition. We follow the standard textbook description that can be read for instance in \cite{Ullrichs-11,MarUllNogRubBurGro-06,MarMaiNogGroRub-12} and introduce the following

\begin{definition}[Universal Hartree-exchange-correlation functional]
\label{def.Hxc.potential}
Let $\Psi_0$ and $\rho$ satisfy the assumptions of Definition~\ref{def:V_functional}. Let $\Phi_0=\frac{1}{\sqrt{N!}} \varphi_{0,1}\wedge \cdots \wedge\varphi_{0,N}$ be a Slater determinant with $\varphi_{0,j}\in\cA_\sigma$, $\rho_{\Phi_0}=\rho_{\Psi_0}$, and $\nabla\cdot j_{\Phi_0}=\nabla\cdot j_{\Psi_0}$. We define the \emph{Hartree-exchange-correlation (Hxc) functional} by
\begin{equation}
\boxed{V_{\rm Hxc}[\Psi_0,\Phi_0,\rho]:=V_{\mathrm{s}}[\Phi_0,\rho]-V[\Psi_0,\rho].}
 \label{eqn.def.Hxc}
\end{equation}
This potential is well-defined over some maximal time of existence
$$
T^{\rm max}_{\rm Hxc}[\Psi_0,\Phi_0,\rho]
:=\min\left(T^{\rm max}[\Psi_0,\rho]\,,\, T^{\rm max}_{\mathrm{s}}[\Phi_0,\rho]\right)
$$
and belongs to $C^0([0,T'],\cA_{\sigma'})$ for all $0<T'<T^{\rm max}_{\rm Hxc}[\Psi_0,\Phi_0,\rho]$ and some $0<\sigma'\leq \sigma$ depending on $T'$.
\end{definition}

We can now introduce the exact Kohn--Sham equations of which $\rho^{\rm S}$ is the unique solution. The following is, to our knowledge, the first rigorous justification of time-dependent Kohn--Sham theory.

\begin{corollary}[Exact Kohn--Sham nonlinear equations]\label{cor:KS}
Let $\Psi_0\in\cB_\sigma$ be so that $\rho_{\Psi_0}>0$ on $\Lambda$. Consider a real-valued external potential $V_{\rm ext}\in C^0([0,T];\cA_\sigma)$ normalized in the manner $\int_\Lambda V_{\rm ext}=0$ and denote by $\Psi^{\rm S}(t)$ the solution to the many-body Schrödinger equation~\eqref{eq:Schrodinger}. Upon decreasing $T$ and $\sigma$, we can assume that its density $\rho^{\rm S}:=\rho_{\Psi^{\rm S}}$ is in $C^2([0,T];\cA_\sigma)$ and is strictly positive. Let finally $\Phi_0=\frac{1}{\sqrt{N!}}\varphi_{0,1}\wedge \cdots \wedge\varphi_{0,N}$ be a Slater determinant with $\varphi_{0,j}\in\cA_\sigma$, $\rho_{\Phi_0}=\rho_{\Psi_0}$, and $\nabla\cdot j_{\Phi_0}=\nabla\cdot j_{\Psi_0}$.

Then, the Slater $\Phi^{\rm V}(t)$ from~\eqref{eq:KS_Phi} solves the \emph{nonlinear Kohn--Sham equations}
\begin{equation}
\left\{
\begin{aligned}
\dps
    i \partial_t \Phi^{\rm KS}(t)
    & = 
    \left(\sum_{j=1}^N 
        \left(
            -\Delta_{x_j} 
            + V_{\rm ext}(t,x_j)
            + V_{\rm Hxc}\big[\Psi_0,\Phi_0,\rho_{\Phi^{\rm KS}}\big](t,x_j)
        \right)
    \right) \Phi^{\rm KS}(t),
    \\
    \Phi^{\rm KS}(0) 
    & = \Phi_0.
\end{aligned}
\right.
\label{eq:TDKS_Phi_V}
\end{equation}
It is the unique solution to this nonlinear equation among all $\Phi\in C^0([0,T''],\cB_{\sigma''})$ satisfying $\rho_{\Phi}\in C^2([0,T''],\cA_{\sigma''})$ for some $0<T''\leq T'$ and $0<\sigma''\leq \sigma'$, with $\Phi(0)=\Phi_0$, $\rho_\Phi>0$ and $\partial_t\rho_{\Phi}(0)=-\nabla\cdot j_{\Phi_0}$. In particular, this unique solution reproduces the exact Schrödinger density: $\rho_{\Phi^{\rm KS}(t)}=\rho^{\rm S}(t)$.
\end{corollary}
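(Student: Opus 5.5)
The proof has two parts. First I show that $\Phi^{\rm V}$ solves~\eqref{eq:TDKS_Phi_V}. Then I show that any other solution must coincide with it, using the uniqueness statement of Theorem~\ref{thm.main.V} twice.

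\textbf{Existence.} By construction, $\Phi^{\rm V}$ solves~\eqref{eq:KS_Phi} with potential $V_{\mathrm{s}}[\Phi_0,\rho^{\rm S}]$ and has density $\rho_{\Phi^{\rm V}}=\rho^{\rm S}$ on $[0,T']$. The trajectory $\rho^{\rm S}$ satisfies all hypotheses of Definition~\ref{def:V_functional} for the interacting data $\Psi_0$, because it is produced by $V_{\rm ext}$ from $\Psi_0$. By the uniqueness part of Theorem~\ref{thm.main.V}, together with the normalization $\int V_{\rm ext}=0$, this gives $V[\Psi_0,\rho^{\rm S}]=V_{\rm ext}$ on the common interval. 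Substituting into~\eqref{eqn.def.Hxc} yields
\[
V_{\rm ext}+V_{\rm Hxc}[\Psi_0,\Phi_0,\rho_{\Phi^{\rm V}}]=V_{\rm ext}+V_{\mathrm{s}}[\Phi_0,\rho^{\rm S}]-V_{\rm ext}=V_{\mathrm{s}}[\Phi_0,\rho^{\rm S}].
\]
So~\eqref{eq:KS_Phi} is exactly~\eqref{eq:TDKS_Phi_V}. The time of existence $T'$ must be chosen below $T^{\rm max}_{\rm Hxc}[\Psi_0,\Phi_0,\rho^{\rm S}]$. This is legitimate because $T^{\rm max}[\Psi_0,\rho^{\rm S}]$ is at least $T$: the potential $V_{\rm ext}$ is available on all of $[0,T]$ with no loss of regularity, as in the proof of Corollary~\ref{cor:orbital_free}.

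\textbf{Uniqueness.} Let $\Phi$ be another solution in the stated class, and write $\rho:=\rho_\Phi\in C^2([0,T''];\cA_{\sigma''})$. By assumption $\rho>0$, $\rho(0)=\rho_{\Phi_0}=\rho_{\Psi_0}$, $\int\rho=N$ (the flow preserves the norm), and $\partial_t\rho(0)=-\nabla\cdot j_{\Phi_0}=-\nabla\cdot j_{\Psi_0}$. So $\rho$ is admissible in both Definitions~\ref{def:V_functional} and~\ref{def:V0_functional}, and $V_{\rm Hxc}[\Psi_0,\Phi_0,\rho]$ is defined on some interval. The Slater state $\Phi$ is a non-interacting solution driven by
\[
W:=V_{\rm ext}+V_{\mathrm{s}}[\Phi_0,\rho]-V[\Psi_0,\rho],
\]
and it has density $\rho$. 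The potential $V_{\mathrm{s}}[\Phi_0,\rho]$ also reproduces $\rho$ from $\Phi_0$, and both potentials have zero mean. By the uniqueness part of Theorem~\ref{thm.main.V} with $V_{ee}\equiv0$ (in the smaller regularity space $\cA_{\sigma''}$, which is why that statement was phrased that way), $W=V_{\mathrm{s}}[\Phi_0,\rho]$. Hence $V[\Psi_0,\rho]=V_{\rm ext}$. Corollary~\ref{cor:orbital_free} then gives $\rho=\rho^{\rm S}$, so $W=V_{\mathrm{s}}[\Phi_0,\rho^{\rm S}]$. Uniqueness for the linear Schrödinger equation~\eqref{eq:KS_Phi} now gives $\Phi=\Phi^{\rm V}$.

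\textbf{Main obstacle.} The delicate point is bookkeeping of times and regularity parameters. The functionals $V[\Psi_0,\rho]$ and $V_{\mathrm{s}}[\Phi_0,\rho]$ are only defined up to their maximal times, and for a competitor $\rho$ these could a priori be shorter than $T''$. I would first run the argument on $[0,\tau]$ for every $\tau<T^{\rm max}_{\rm Hxc}[\Psi_0,\Phi_0,\rho]$, with a suitable $\sigma$. On that interval we have shown $\rho=\rho^{\rm S}$, so $V[\Psi_0,\rho]=V_{\rm ext}$ and no regularity is lost. Using Remark~\ref{rmk.max.T} and the continuation argument of Corollary~\ref{cor:orbital_free}, this forces the maximal time to extend to $T''$, and the identification then holds on the whole interval.
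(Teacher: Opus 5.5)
Your proposal is correct and follows the paper's proof essentially step by step. For existence, you substitute $V[\Psi_0,\rho^{\rm S}]=V_{\rm ext}$ into the definition of $V_{\rm Hxc}$. For uniqueness, you use the non-interacting uniqueness in Theorem~\ref{thm.main.V} to get $V[\Psi_0,\rho]=V_{\rm ext}$, then Corollary~\ref{cor:orbital_free} to get $\rho=\rho^{\rm S}$, then uniqueness of the linear flow, and iterate in time up to $T''$. The only differences are cosmetic: you derive $V[\Psi_0,\rho^{\rm S}]=V_{\rm ext}$ directly from the uniqueness part of Theorem~\ref{thm.main.V} instead of citing Corollary~\ref{cor:orbital_free}, and you state explicitly that $W$ and $V_{\mathrm{s}}[\Phi_0,\rho]$ have the same (zero) mean, which the paper leaves implicit.
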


\begin{proof}
The solution $\Phi^{\rm V}(t)$ to~\eqref{eq:KS_Phi} has density $\rho^{\rm S}$ by definition, hence $V[\Psi_0,\rho_{\Phi^{\rm V}}]=V[\Psi_0,\rho^{\rm S}]=V_{\rm ext}$ by Corollary~\ref{cor:orbital_free}. This implies from the definition of $V_{\rm Hxc}$ in~\eqref{eqn.def.Hxc} that
$$V_{\rm ext}+V_{\rm Hxc}[\Psi_0,\Phi_0,\rho_{\Phi^{\rm V}}]=V_{\mathrm{s}}[\Phi_0,\rho^{\rm S}].$$ 
Hence~\eqref{eq:TDKS_Phi_V} reduces to~\eqref{eq:KS_Phi}. 

Next, we prove uniqueness. Consider an arbitrary trajectory $\Phi\in C^0([0,T''],\cB_{\sigma''})$ so that $\rho:=\rho_{\Phi}\in C^2([0,T''],\cA_{\sigma''})$ satisfies the mentioned constraints at $t=0$. These conditions allow us to give a meaning to $V[\Psi_0,\rho]$ and $V_{\mathrm{s}}[\Phi_0,\rho]$, and hence to $V_{\rm Hxc}[\Psi_0,\Phi_0,\rho]$ until some maximal time $T^{\rm max}_{\rm Hxc}[\Psi_0,\Phi_0,\rho]$. 
Thus, $V_{\rm Hxc}[\Psi_0,\Phi_0,\rho]\in C^0([0,T'''],\cA_{\sigma'''})$ for all  $0<T'''<T^{\rm max}_{\rm Hxc}[\Psi_0,\Phi_0,\rho]$ and some $0<\sigma'''\leq\sigma''$. We assume that $\Phi$ solves the nonlinear equation~\eqref{eq:TDKS_Phi_V} in those spaces for $t<T^{\rm max}_{\rm Hxc}[\Psi_0,\Phi_0,\rho]$. Now, the unique potential that can give rise to the density $\rho$ in the non-interacting equation is $V_{\mathrm{s}}[\Phi_0,\rho]$, from the uniqueness in Theorem~\ref{thm.main.V} with $V_{ee}\equiv0$. From the definition of $V_{\rm Hxc}$ in~\eqref{eqn.def.Hxc} we conclude that $V_{\rm ext}(t)=V[\Psi_0,\rho](t)$ for any $t<T^{\rm max}_{\rm Hxc}[\Psi_0,\Phi_0,\rho]$. By Corollary~\ref{cor:orbital_free} this proves that $\rho=\rho^{\rm S}$ on that time interval. But then $V_{\mathrm{s}}[\Phi_0,\rho]=V_{\mathrm{s}}[\Phi_0,\rho^{\rm S}]$, and thus $\Phi=\Phi^{\rm V}$. Iterating the argument to later times, we conclude that $T^{\rm max}_{\rm Hxc}[\Psi_0,\Phi_0,\rho]=T''$ and $\Phi=\Phi^{\rm V}$. This concludes the proof. 
\end{proof}

\begin{remark}[Formulation in terms of orbitals]
\Cref{eq:TDKS_Phi_V,eq:KS_Phi} are often written in terms of orbitals instead, giving rise to $N$ coupled equations, and hence the name Kohn--Sham \emph{equations}. 
Concretely, denoting by  
$\varphi_1^{\mathrm{KS}},\ldots,\varphi_N^{\mathrm{KS}}$ the (orthonormal) orbitals of the Slater determinant $\Phi^{\mathrm{KS}} = \frac{1}{\sqrt{N!}} \varphi_1^{\mathrm{KS}} \wedge \cdots \wedge \varphi_N^{\mathrm{KS}}$, \eqref{eq:TDKS_Phi_V} reads
\begin{equation*}
\left\{
\begin{aligned}
        i \partial_t \varphi_j^{\mathrm{KS}}(t) & = \left(-\Delta + V_{\mathrm{ext}}(t) + V_{\mathrm{Hxc}}\bigl[\Psi_0, \Phi_0, {\textstyle \sum_{k=1}^N} |\varphi_k^{\mathrm{KS}}|^2 \bigr](t)\right) \varphi_j^{\mathrm{KS}} , 
        \\
        \varphi_j^{\mathrm{KS}}(0) & = \varphi_{0,j} , 
\end{aligned}
\right.
\quad j=1,\ldots,N
.
\end{equation*}
\end{remark}

\begin{remark}[Universality]\label{rmk.universality}
The functionals $V_{\mathrm{s}}, V, V_{\mathrm{Hxc}}$ are ``universal'' in the sense that they do not depend on the external potential $V_{\mathrm{ext}}$, that is, on the physical situation. They of course still depend on the kinetic energy operator $\cT= \sum_{j=1}^N -\Delta_{x_j}$ and interaction $V_{ee}$. In quantum chemistry applications these are, however, considered fixed. 
What one does vary, however, is the externally applied field $V_{\mathrm{ext}}$, and the spirit of DFT is to provide approximations of the objects which do not depend on this externally given field. 
\end{remark}

\begin{remark}[Non-locality and causality]
The functionals $V[\Psi_0,\rho]$, $V_{\mathrm{s}}[\Phi_0,\rho]$, and subsequently $V_{\mathrm{Hxc}}[\Psi_0,\Phi_0,\rho]$ are non-local in both space and time in the sense that $V[\Psi_0,\rho](t,x)$ (analogously for $V_{\mathrm{s}}$ and $V_{\mathrm{Hxc}}$) 
depends on $\Psi_0(x_1',...,x_N')$ and $\rho(t',x')$ for all $x',x_1',...,x_N'$, and $t'$. 
They are, however, causal in the sense that they depend only on $\rho(t')$ for $t'\leq t$. 

The causality follows from (the precise versions of) the formulas for $V$ in \eqref{eq:V_Psi_vague} and \eqref{eq:V_Psi_vague_1derivative}. This causality is important both as a physical property and in terms of solvability of the nonlinear equation \eqref{eq:TDKS_Phi_V}. 
\end{remark}

\begin{remark}[Different interactions]
In \eqref{eq:KS_Phi} and \eqref{eq:TDKS_Phi_V} we reproduce the density $\rho^{\mathrm{S}}$ of an interacting system ($V_{ee}\ne 0$) using non-interacting systems ($V_{ee}=0$) with appropriately chosen external potentials. 
We can also reproduce $\rho^{\mathrm{S}}$ in an interacting system, but with a different interaction $0\ne V_{ee}' \ne V_{ee}$ similarly as in \cite{vanLeeuwen-99}. 
Indeed, simply replace $V_{\mathrm{s}}$ in \cref{def.Hxc.potential} by the universal functional of \cref{def:V_functional} with interaction $V_{ee}'$. We obtain a suitably translated version of \cref{cor:KS}. 
\end{remark}

\begin{remark}[Approximations]\label{rmk.approx}
Solving the nonlinear \eqref{eq:TDKS_Phi_V} exactly is only feasible for (toy) systems with few particles. It more or less amounts to solving the full interacting Schrödinger equation \eqref{eq:Schrodinger}. 
In practice, one thus approximates the Hxc potential and solves then \eqref{eq:TDKS_Phi_V} with $V_{\mathrm{Hxc}}$ replaced by one such approximation. The perhaps most famous and used approximation is the \emph{adiabatic local density approximation (ALDA)}, but many more approximations exist, see \cite[Ch.~4]{Ullrichs-11} and \cite[Sec.~4.7]{MarMaiNogGroRub-12}.  
Many approximations only deal with the dependence on the density $\rho$, see \cite{MaiBur-01} for a discussion on the dependence on the initial states $\Psi_0,\Phi_0$.

The time-dependent Kohn--Sham equations with an approximate Hxc potential have been studied in the mathematical literature both for adiabatic \cite{PusSig-21,BreFanFau-26,DupLetLev-25,SprCiaBor-17} and non-adiabatic \cite{Jerome-2015} approximations. A reference work in the stationary case is~\cite{AnaCan-09}.
\end{remark}

%%%%%%%%%%%%%%%%%%%%%%%%%%%%%%%%%%%%%%%%%%%%%%%%%%%%%%%%%%%%%%%%%%%%%%%%%%%%%%%%%%
%%%%%%%%%%%%%%%%%%%%%%%%%%%%%%%%%%%%%%%%%%%%%%%%%%%%%%%%%%%%%%%%%%%%%%%%%%%%%%%%%%
\section{The density of an interacting system is real-analytic in space (Proof of \texorpdfstring{\cref{thm.density.is.analytic}}{Theorem~\ref*{thm.density.is.analytic}})}
%%%%%%%%%%%%%%%%%%%%%%%%%%%%%%%%%%%%%%%%%%%%%%%%%%%%%%%%%%%%%%%%%%%%%%%%%%%%%%%%%%
%%%%%%%%%%%%%%%%%%%%%%%%%%%%%%%%%%%%%%%%%%%%%%%%%%%%%%%%%%%%%%%%%%%%%%%%%%%%%%%%%%
In this section we give the proof of \cref{thm.density.is.analytic}, that the density of an interacting system is in fact real-analytic in space. We start with some preliminary properties of the norms $\norm{\cdot}_\sigma$ defined in \cref{def.norms.sigma}. 

\subsection{Properties of norms}
We collect here some properties of the norms $\norm{\cdot}_\sigma$.

\begin{lemma}[Gradient bound]
\label{lem.bdd.gradient}
	We have for $0 \leq \sigma' < \sigma $ and $f\in \cA_\sigma, \Psi\in \cB_\sigma$
	\begin{equation*}
  \norm{\abs{\cP} \Psi}_{\sigma'} \leq \frac{1}{\sigma-\sigma'} \norm{\Psi}_{\sigma}, 
  \qquad 
  \norm{\nabla f}_{\sigma'} \leq \frac{1}{\sigma - \sigma'} \norm{f}_\sigma
  .
\end{equation*}
\end{lemma}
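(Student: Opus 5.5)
The plan is to work entirely on the Fourier side, since both norms are diagonal there. For $\Psi\in\cB_\sigma$, the operator $\abs{\cP}$ acts on $\hat\Psi(p_1,\dots,p_N)$ by multiplication with $\abs{P}$, where $P=\sum_j p_j$. It therefore commutes with $e^{\sigma'\abs{\cP}}$ and with the $H^2$ weight $(1+\sum_j p_j^2)$. Hence
\begin{equation*}
\norm{\abs{\cP}\Psi}_{\sigma'}^2=\sum_{p_1,\dots,p_N} \abs{P}^2 e^{-2(\sigma-\sigma')\abs{P}}\, e^{2\sigma\abs{P}}\Big(1+\sum_j p_j^2\Big)^2\abs{\hat\Psi(p_1,\dots,p_N)}^2 .
\end{equation*}

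The key (and only) step is then the elementary scalar bound
\begin{equation*}
\sup_{r\geq 0}\, r\,e^{-(\sigma-\sigma')r}=\frac{1}{e(\sigma-\sigma')}\leq\frac{1}{\sigma-\sigma'} .
\end{equation*}
It follows by maximizing $r\mapsto r e^{-ar}$ with $a=\sigma-\sigma'>0$: the maximum is attained at $r=1/a$ with value $1/(ea)$. Inserting this pointwise in $P$ into the sum and taking square roots gives
\begin{equation*}
\norm{\abs{\cP}\Psi}_{\sigma'}\leq \frac{1}{e(\sigma-\sigma')}\norm{\Psi}_\sigma ,
\end{equation*}
which is even better than claimed. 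In particular, the left side is finite, so $\abs{\cP}\Psi\in\cB_{\sigma'}$. Antisymmetry is preserved because $\abs{\cP}$ is symmetric in the particles.

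For $f\in\cA_\sigma$ the argument is the $N=1$ case, applied componentwise. We have $\widehat{\partial^k f}(p)=ip^k\hat f(p)$, so
\begin{equation*}
\norm{\nabla f}_{\sigma'}^2=\sum_{k}\norm{\partial^k f}_{\sigma'}^2=\sum_p \abs{p}^2 e^{2\sigma'\abs{p}}(1+p^2)^2\abs{\hat f(p)}^2 .
\end{equation*}
This is exactly the $N=1$ instance of the previous computation, and the same scalar bound applies.

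There is no real obstacle here. The only point to check is that one interprets $\norm{\nabla f}_{\sigma'}$ as the $\ell^2$-sum over the components; any reasonable vector norm convention works because $\sum_k (p^k)^2=\abs{p}^2$. The case $\sigma'=0$ is included, since the argument only uses $\sigma-\sigma'>0$.
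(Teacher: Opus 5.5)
Your proof is correct and is essentially the paper's argument: the paper just notes the multiplier inequality $\abs{\cP}e^{\sigma'\abs{\cP}}\leq e^{\sigma\abs{\cP}}/(\sigma-\sigma')$, which is exactly your pointwise Fourier bound. Your sharper constant $1/(e(\sigma-\sigma'))$ is a harmless improvement.
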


\begin{proof}
    This follows immediately from the definition as $\abs{\cP} e^{\sigma'\abs{\cP}} \leq e^{\sigma\abs{\cP}}/(\sigma-\sigma')$.
\end{proof}

The following says that $(\cA_\sigma\otimes 1)\cdot \cB_\sigma\subset\cB_\sigma$ (where $\cdot$ is to be understood as multiplication of elements) and implies in particular that $\cA_\sigma$ is an algebra.

\begin{lemma}[Product rule]
\label{lemma:multiplication}
Let $\Psi \in \cB_\sigma$ and $f\in \cA_\sigma$. 
For any $j\in\{1,...,N\}$ and $s\geq 0$ we have 
\begin{equation*}
    \norm{\pscal{\cP}^s (f(x_j)\Psi (x_1,\dots,x_N))}_\sigma
    \leq C \norm{\pscal{\cP}^s \Psi}_\sigma \norm{f}_\sigma
    +C \norm{ \Psi}_\sigma \norm{(1-\Delta)^{s/2} f}_\sigma
    ,
\end{equation*}
where $\pscal{x} = \sqrt{1+|x|^2}$ denotes the Japanese bracket for $x\in \R^n$ for any $n\in\N$. 

For $N=1$ and $s=0$ we simply get $\norm{fg}_\sigma \leq C\norm{f}_\sigma\norm{g}_\sigma$.
\end{lemma}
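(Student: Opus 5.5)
The plan is to work entirely on the Fourier side. Write $p=(p_1,\dots,p_N)$, $P=\sum_j p_j$, and use the weight $w_\sigma(p)=e^{\sigma\abs{P}}\pscal{p}^2$, where $\pscal{p}^2=1+\sum_j p_j^2$. Multiplication by $f(x_j)$ acts in Fourier space as a convolution in the $j$-th variable only:
\[
\widehat{f(x_j)\Psi}(p)=\sum_{q\in2\pi\Z^D}\hat f(q)\,\hat\Psi(p-qe_j),
\]
where $qe_j$ denotes the vector with $q$ in slot $j$ and $0$ elsewhere. The total momentum shifts accordingly from $P-q$ to $P$.

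The first step is to bound the weights pointwise. Since $\abs{P}\le \abs{P-q}+\abs{q}$, we get $e^{\sigma\abs{P}}\le e^{\sigma\abs{P-q}}e^{\sigma\abs{q}}$. Next,
\[
\pscal{p}^2\le C\bigl(\pscal{p-qe_j}^2+\pscal{q}^2\bigr),\qquad \pscal{P}^s\le C_s\bigl(\pscal{P-q}^s+\pscal{q}^s\bigr).
\]
The product of these gives four terms:
\begin{enumerate}[label=(\alph*)]
\item $\pscal{P-q}^s\pscal{p-qe_j}^2$ against $e^{\sigma|q|}\abs{\hat f(q)}$;
\item $\pscal{q}^s\pscal{p-qe_j}^2$ against $e^{\sigma|q|}\abs{\hat f(q)}$;
\item $\pscal{P-q}^s$ against $\pscal{q}^2e^{\sigma|q|}\abs{\hat f(q)}$;
\item $\pscal{q}^{s+2}e^{\sigma|q|}\abs{\hat f(q)}$ against $e^{\sigma|P-q|}\abs{\hat\Psi(p-qe_j)}$ with no polynomial weight.
\end{enumerate}

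The second step is to apply Young's inequality $\ell^1*\ell^2\subset\ell^2$ to each term, convolving in the $j$-th variable with the other variables as spectators. Each term reduces to an $\ell^2$ norm of the $\Psi$-factor times an $\ell^1$ norm of the $f$-factor. The $\ell^1$ norms are controlled by Cauchy--Schwarz using $\sum_q\pscal{q}^{-4}<\infty$, which holds for $D\le3$:
\[
\sum_q e^{\sigma|q|}\pscal{q}^{a}\abs{\hat f(q)}\le C\norm{(1-\Delta)^{a/2}f}_{\sigma}\pscal{\cdot}^{-2}\text{-summed},
\]
that is, it is at most $C\norm{(1-\Delta)^{a/2}f}_\sigma$ whenever $a\le s$, since the $\sigma$-norm already carries $\pscal{q}^2$.

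Term (a) then gives $\norm{\pscal{\cP}^s\Psi}_\sigma\norm{f}_\sigma$, and term (b) gives $\norm{\Psi}_\sigma\norm{(1-\Delta)^{s/2}f}_\sigma$.

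The main obstacle is terms (c) and (d), where the $H^2$ weight falls on $f$ instead of $\Psi$. In those terms $\Psi$ carries no $\pscal{p}^2$ weight, so I would reverse the roles: put $\hat f$ in $\ell^2$ and $\hat\Psi$ in $\ell^1$ in the $j$-th variable. This is the same mechanism as Sobolev's $H^2\subset L^\infty$ used in~\eqref{eq:simple_estim_Asigma}. Concretely, for fixed remaining variables I would use
\[
\sum_{p_j}\abs{\hat\Psi}\le\Bigl(\sum_{p_j}\pscal{p}^{4}\abs{\hat\Psi}^2\Bigr)^{1/2}\Bigl(\sum_{p_j}\pscal{p_j}^{-4}\Bigr)^{1/2}.
\]
This recovers $\norm{\pscal{\cP}^s\Psi}_\sigma\norm{f}_\sigma$ for (c) and $\norm{\Psi}_\sigma\norm{(1-\Delta)^{s/2}f}_\sigma$ for (d). In both cases the exponential weights factor as in the first step.

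The case $N=1$, $s=0$ follows by taking $\Psi=g$.
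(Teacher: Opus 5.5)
Your proposal is correct and follows essentially the same route as the paper: convolution in the $j$-th Fourier variable, the triangle inequality for the exponential weight, the splitting $\pscal{x}^2\le 2\pscal{y}^2+2\pscal{x-y}^2$ (together with the analogous bound for $\pscal{P}^s$) into four terms, and then discrete Young plus Cauchy--Schwarz using $\sum_k\pscal{k}^{-4}<\infty$ for $D\le 3$. Your explicit role reversal for the terms where the $H^2$ weight falls on $f$ is exactly what the paper means by ``treated similarly''. The garbled display for the $\ell^1$ bound should simply read $\sum_q e^{\sigma|q|}\pscal{q}^a|\hat f(q)|\le C\norm{(1-\Delta)^{a/2}f}_\sigma$.
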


To prove \cref{lemma:multiplication}, we recall the following property  of the Fourier transform (recall that we consider functions on the torus $\Lambda = [0,1]^D$ with periodic boundary conditions): 
\begin{equation*}
    \hat \Psi_1 * \hat \Psi_2(\vec k) 
    = 
    \hat{\Psi_1 \Psi_2}(\vec k),
\end{equation*}
using the notation $\vec k = (k_1,\ldots,k_N)$. 

\begin{proof}[Proof of \cref{lemma:multiplication}]
By symmetry we can take $j=1$. For simplicity of notation we just write $f$ for the function $(x_1,...,x_N)\mapsto f(x_1)$. Then we have
$$|\widehat{f\Psi}(\vec p)|\leq \sum_{k}|\widehat{f}(k)|\;|\widehat{\Psi}(\vec p - k)|,$$
where used the short-hand notation $\vec{p}-k=(p_1-k,p_2,...,p_N)$. 
Next, we use that $e^{\sigma|P|}\leq e^{\sigma|k|}e^{\sigma|P-k|}$ from the triangle inequality, with $P=\sum_{j=1}^Np_j$, as well as the inequality
\begin{equation}
\pscal{x}^2\leq 2\pscal{y}^2+2\pscal{x-y}^2
\label{eq:japanese}
\end{equation}
for $x,y\in \R^n$. 
We use this for $x=\vec{p}$ and $y=(k,0,...,0)\in \R^{DN}$
and obtain
\begin{multline}
\pscal{P}^s\pscal{\vec p}^2e^{\sigma|P|}|\widehat{f\Psi}(\vec p)|\\*
\leq C\sum_{k} \bigl(\pscal{P-k}^s+\pscal{k}^s\bigr)\bigl(\pscal{k}^2+\pscal{\vec p-k}^2\bigr)e^{\sigma|k|}|\widehat{f}(k)|\;e^{\sigma|P-k|}|\widehat{\Psi}(\vec p - k)|.    
\label{eq:estim_fPsi}
\end{multline}
Next we consider the term containing $\pscal{P-k}^s$ and take the square and sum over $p_1$. To handle the two terms in~\eqref{eq:estim_fPsi} ($\pscal{k}^2$ and $\pscal{\vec p - k}^2$), we use the discrete Young and Cauchy--Schwarz inequalities in the form (for the first term)
\begin{align*}
    & \sum_{p_1}\left(\sum_k \pscal{\vec{p}-k}^2 \pscal{P-k}^s e^{\sigma|P-k|} a_k b_{\vec{p}-k}\right)^2
    \nn\\
    &\quad 
    \leq \left(\sum_k a_k\right)^2\left(\sum_{p_1} \pscal{\vec{p}}^4 \pscal{P}^{2s} e^{2\sigma|P|} b_{\vec p}^2\right)
    \nn\\
    &\quad 
    \leq \left(\sum_k \pscal{k}^{-4}\right)\left(\sum_k \pscal{k}^4a_k^2\right)\left(\sum_{p_1} \pscal{P}^{2s} e^{2\sigma|P|} \pscal{\vec p}^4b_{\vec p}^2\right)
\end{align*}
for $a_k,b_{\vec p}\geq0$. It is here essential that $D\in\{1,2,3\}$ to ensure the convergence of the first sum. We arrive at
\begin{multline*}
    \sum_{p_1} \left(\sum_k \pscal{P-k}^s \pscal{\vec{p}-k}^2 e^{\sigma|k|} |\hat f(k)| e^{\sigma|P-k|} |\hat\Psi(\vec p-k)|\right)^2
    \\
    \leq 
    C\norm{f}_{\sigma}^2\sum_{p_1}\pscal{P}^{2s}\pscal{\vec{p}}^4e^{2\sigma|P|}|\widehat{\Psi}(\vec p)|^2.    
\end{multline*}
Summing over $p_2,...,p_N$ yields $\norm{f}_\sigma^2 \norm{\pscal{\cP}^s\Psi}_\sigma^2$. 
The term with $\pscal{P-k}^s$ and $\pscal{k}^2$ in \eqref{eq:estim_fPsi} is treated similarly yielding the same bound. 
Finally, the terms with $\pscal{k}^s$ in \eqref{eq:estim_fPsi} can again be treated analogously yielding the bound $\norm{(1-\Delta)^{s/2} f}^2_\sigma\norm{\Psi}_\sigma^2$. We conclude the desired. 
\end{proof}

\subsection{Analyticity of eigenstates}\label{sec.eigenstate.analytic}
We first consider the eigenstates of $\cH_V$ (recall the definition of $\cH_V$ in \eqref{eq:H_V}) and prove the first part of \cref{thm.density.is.analytic}. Concretely, we prove the following. 

\begin{lemma}[Analyticity of eigenstates]\label{lem.eigenstate.analytic}
Let  $V\in \cA_\sigma$ and let $\Psi_0\in D(\cH_V)=H^2 \cap L^2_a(\Lambda^N)$ denote an eigenstate of $\cH_V$, i.e., such that $\cH_V\Psi_0 = E \Psi_0$ for some $E\in \R$. 
Then, $\Psi_0\in \cB_\sigma$.
\end{lemma}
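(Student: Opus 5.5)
We decompose $\Psi_0$ according to the total momentum and exploit that $\cH_0$ commutes with $\cP$. Write $\cH_V=\cH_0+\cV$ with $\cV=\sum_j V(x_j)$. Since $\cP$ commutes with $\cH_0$, the space $L^2_a(\Lambda^N)$ decomposes as $\bigoplus_{P\in 2\pi\Z^D}\mathfrak h_P$, where $\mathfrak h_P$ consists of the functions with $\hat\Psi(\vec p)$ supported on $\sum_j p_j=P$. Each $\mathfrak h_P$ is invariant under $\cH_0$. Let $\Pi_P$ be the corresponding orthogonal projections and set $\Psi_P=\Pi_P\Psi_0$.

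Multiplication by $V(x_j)$ maps $\mathfrak h_Q$ into $\bigoplus_k\mathfrak h_{Q+k}$, with weight $\hat V(k)$. So the eigenvalue equation projected onto $\mathfrak h_P$ becomes
\begin{equation*}
(\cH_0-E)\Psi_P=-\sum_{k}\Pi_P\,\cV_k\,\Psi_{P-k},
\qquad \cV_k=\sum_j \hat V(k)e^{ik\cdot x_j}.
\end{equation*}
This alone does not yield decay in $P$, since $\cH_0-E$ is not invertible with a bound uniform in $P$. I would instead use a Combes--Thomas/complex-translation argument. Consider the unitary (for real $\theta$) family $U_\theta=e^{i\theta\cdot X}$ with $X=\sum_j x_j$. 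This is not periodic, so I would rather use the boost formulation.

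Conjugating by $e^{\theta\cdot\cP}$ formally leaves $\cH_0$ unchanged and sends $V(x_j)$ to $V(x_j-i\theta)$. The latter is a bounded operator for $|\theta|<\sigma$ because $V\in\cA_\sigma$: its norm is at most $\sum_k e^{|\theta||k|}|\hat V(k)|\le C\norm{V}_\sigma$.

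The plan is then the following.
\begin{enumerate}[(1)]
\item Define $\cH_V(\theta):=\cH_0+\sum_j V_\theta(x_j)$ with $V_\theta(x)=\sum_k\hat V(k)e^{\theta\cdot k}e^{ik\cdot x}$. This is an analytic family of type (A) in $\theta\in\C^D$, $|\Re\theta|<\sigma$, with domain $H^2$. It is unitarily equivalent to $\cH_V$ for purely imaginary $\theta$, which is a real translation of the center of mass.
\item The discrete spectrum of $\cH_V$ is the whole spectrum, since the resolvent is compact on the torus. By analytic perturbation theory in $\theta$, $E$ remains an eigenvalue of $\cH_V(\theta)$ with an analytic eigenprojection $\Pi(\theta)$, starting on the imaginary axis. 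By the unitary equivalence, the spectrum is constant there, hence constant everywhere by analyticity.
\item Identify $\Pi(\theta)\Psi_0=e^{\theta\cdot\cP}\Psi_0$ componentwise, first for imaginary $\theta$ and then by analytic continuation. Here $e^{\theta\cdot\cP}$ is diagonal on each $\mathfrak h_P$, acting as multiplication by $e^{\theta\cdot P}$. This gives $\sum_P e^{2\theta\cdot P}\norm{\Psi_P}_{H^2}^2<\infty$ for all real $|\theta|<\sigma$, with local bounds from the $H^2$-boundedness of $\Pi(\theta)$.
\item Take $\theta$ in finitely many directions. This gives $e^{\sigma'|P|}$ decay for every $\sigma'<\sigma$ (e.g.\ $\sigma'/\sqrt D$ from coordinate directions, improved to $\sigma'$ by covering the sphere of directions with finitely many vectors).
\end{enumerate}

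The main obstacle is that this yields only $\sigma'<\sigma$, not $\sigma$ itself. For the endpoint I would bootstrap from the projected equation. For $\theta$ on the sphere $|\theta|=\sigma$, one has $\cV_\theta$ bounded on $H^2$ with norm $C\norm{V}_\sigma$, by \cref{lemma:multiplication} with $s=0$. The resolvent $(\cH_V(\theta)-z)^{-1}$ on a fixed small circle around $E$ stays bounded as $\Re\theta$ approaches the boundary, because the perturbation stays bounded uniformly up to $|\Re\theta|=\sigma$. I would therefore run the analytic-family argument directly on the closed region, where the type (A) property still holds since $\sum_k e^{\sigma|k|}|\hat V(k)|\le C\norm{V}_\sigma$ by Cauchy--Schwarz in $D\le3$. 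This gives $\sum_P e^{2\sigma|P|}\norm{\Psi_P}_{H^2}^2<\infty$, i.e.\ $\Psi_0\in\cB_\sigma$.

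The delicate point I expect is justifying rigorously that the isolated eigenvalue $E$ does not move and the projection stays bounded all the way up to $|\theta|=\sigma$. I would handle this via the Riesz projection on a fixed contour: constancy of the spectrum for imaginary $\theta$ plus continuity in $\theta$ suffices.
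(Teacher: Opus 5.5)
Your complex-translation route (conjugating by $e^{\theta\cdot\cP}$ and using analytic perturbation theory for $\cH_V(\theta)$) differs from the paper's. Once step (3) is corrected, it does give, modulo routine details, $\Psi_0\in\cB_{\sigma'}$ for every $\sigma'<\sigma$. As written, $\Pi(\theta)\Psi_0=e^{\theta\cdot\cP}\Psi_0$ is false already for imaginary $\theta=ia$, where $\Pi(ia)\Psi_0=e^{ia\cdot\cP}\Pi(0)e^{-ia\cdot\cP}\Psi_0$. What analytic continuation actually gives is $\Pi_P\Pi(\theta)\Pi_Q=e^{\theta\cdot(P-Q)}\Pi_P\Pi(0)\Pi_Q$. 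Hence $\Pi(\theta)e^{\theta\cdot\cP}\phi=e^{\theta\cdot\cP}\Pi(0)\phi$ for $\phi$ with finitely many momentum components, and the finite-dimensional range of $\Pi(0)$ is spanned by such $\Pi(0)\phi$.

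The genuine gap is the endpoint $\sigma'=\sigma$, which is what the lemma asserts, in dimensions $D=2,3$. (In $D=1$ the two directions $\theta=\pm\sigma$ would suffice.) Any version of your argument, including the closed-strip one, only produces \emph{directional} weights. At best you get $\sup_{|\theta|=\sigma}\sum_P e^{2\theta\cdot P}c_P^2<\infty$ with $c_P=\norm{\Pi_P\Psi_0}_{H^2}$. This does not imply $\sum_P e^{2\sigma|P|}c_P^2<\infty$, because $e^{2\sigma|P|}=\sup_{|\theta|=\sigma}e^{2\theta\cdot P}$ needs the supremum inside the sum. For example, in $D=2$ take $c_P^2=e^{-2\sigma|P|}|P|^{-\alpha}$ with $3/2<\alpha\leq 2$. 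At radius $r$ only an angular sector of width $\sim r^{-1/2}$ contributes, so all directional sums are bounded uniformly, behaving like $\int^\infty r^{1/2-\alpha}\,\rd r$. The isotropic sum behaves like $\int^\infty r^{1-\alpha}\,\rd r=\infty$. Averaging over directions loses a factor $|P|^{-(D-1)/2}$, and covering the sphere by finitely many directions loses $\sigma-\sigma'$; neither reaches $\cB_\sigma$. Separately, your reason that the resolvent on a circle around $E$ stays bounded up to $|\Re\theta|=\sigma$ ``because the perturbation stays bounded'' is not valid. A correct reason would be norm-continuity of $\theta\mapsto\sum_jV_\theta(x_j)$ on the closed strip together with constancy of the discrete spectrum. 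This still does not cure the problem above.

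The paper never passes through linear weights $e^{\theta\cdot P}$. It runs a Grönwall argument in $\tau\in[0,\sigma]$ directly with the isotropic truncated weight $\cP_M=\min\{|\cP|,M\}$, differentiating $\norm{(C+\cH_0)e^{\tau\cP_M}\Psi_0}_{L^2}^2$:
\begin{itemize}
\item it commutes $\cP_M$ through $\cH_0$ and uses the eigenvalue equation to replace $(C+\cH_0)\Psi_0$ by $(C+E)\Psi_0-\cV\Psi_0$;
\item it bounds $\cP_M\leq\sqrt N(1+\cT)$;
\item it controls $\norm{(1+\cT)e^{\tau\cP_M}\cV\Psi_0}_{L^2}\leq CN\norm{V}_\tau\norm{(1+\cT)e^{\tau\cP_M}\Psi_0}_{L^2}$ by the product rule, which needs only the triangle inequality for $p\mapsto\min(|p|,M)$.
\end{itemize}
This closes at exactly $\tau=\sigma$, and letting $M\to\infty$ gives $\Psi_0\in\cB_\sigma$. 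Your weaker conclusion $\Psi_0\in\bigcap_{\sigma'<\sigma}\cB_{\sigma'}$ would be harmless for the later applications, where $\sigma$ is reduced anyway, but it is not the stated lemma.
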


\begin{proof}
First, we note that $V,V_{ee}\in L^2$ and so, by the Kato--Rellich theorem~\cite[Thm.~6.2]{Lewin-Spectral}, the operator $\cH_V$ is self-adjoint on $D(\cT)=H^2 \cap L^2_a(\Lambda^N)$ 
(recall that $\cT = \sum_{j=1}^N -\Delta_{x_j}$). 
Any eigenstate must thus belong to this space. Moreover, 
for some constants $c,C>0$ depending only on $V_{ee},N,D$, we have for any $\Psi$ 
\begin{equation}
c \norm{(C+\cH_0)\Psi}_{L^2} 
    \leq \norm{\Psi}_{H^2}
    = \norm{\left(1 + \cT\right) \Psi}_{L^2}
    \leq \frac{1}{c} \norm{\left(C + \cH_0\right) \Psi}_{L^2},
    \label{eq:compare_cH0_cT}
\end{equation}
where we recall the notation $\cH_0 = \cH_{V=0}$. 
By Cauchy--Schwarz, we have $\cP^2\leq N \cT$ as operators. In particular, we may bound (rather crudely) $|\cP|\leq \sqrt{N}(1+\cT)$. 

We now prove the desired claim using a Grönwall argument. More precisely, we replace $|\cP|$ by the bounded operator $\cP_M = \min\{|\cP|, M\}$ for some $M>0$ and take the limit $M\to \infty$ at the end. Recall that $\cH_0$ commutes with $\cP$ (in the sense of unbounded operators, see \cite[Thm.~4.41]{Lewin-Spectral}), hence with any bounded function of $\cP$, such as $\cP_M$. Using that $\cP_M$ and $\cH_0$ commute, we find
\begin{align*}
    \frac{\rd}{\rd \tau } \norm{(C+\cH_0)e^{\tau\cP_M}\Psi_0}^2_{L^2}
    &
    = 2\pscal{(C+\cH_0)e^{\tau\cP_M}\Psi_0, \cP_M  e^{\tau\cP_M}(C+\cH_0)\Psi_0}
\nonumber
\\ 
&    
    \leq 2\norm{(C+\cH_0)e^{\tau\cP_M}\Psi_0}_{L^2} \norm{\cP_Me^{\tau\cP_M}(C+\cH_0)\Psi_0}_{L^2}.
\end{align*}
Using that $(\cH_V-E)\Psi_0=(\cH_0+\cV-E)\Psi_0=0$ with $\cV:=\sum_{j=1}^N V(x_j)$ denoting the external potential in the $N$-particle space, we find for the last term
\begin{align*}
    &\norm{e^{\tau \cP_M}\cP_M(C+\cH_0)\Psi_0}_{L^2}\\*
    &\qquad\qquad \leq |C+E|\norm{\cP_M e^{\tau \cP_M} \Psi_0}_{L^2}+\norm{\cP_M e^{\tau \cP_M}\cV\Psi_0}_{L^2}\\ 
    &\qquad\qquad  \leq |C+E|\sqrt{N}\norm{(1+\cT) e^{\tau \cP_M} \Psi_0}_{L^2}+\sqrt{N}\norm{(1+\cT) e^{\tau \cP_M}\cV\Psi_0}_{L^2}. 
\end{align*}
In the second inequality we used that $\cP_M\leq |\cP|\leq \sqrt{N}(1+\cT)$. From the proof of \cref{lemma:multiplication} (we need to use that $p\mapsto\min(|p|,M)$ satisfies the triangle inequality), the last term can be bounded by 
\begin{align}
\norm{(1+\cT) e^{\tau \cP_M}\cV\Psi_0}_{L^2}&\leq CN\|V\|_\tau\norm{(1+\cT)e^{\tau\cP_M}\Psi_0}_{L^2}\nn\\
&\leq \frac{C}cN\|V\|_\tau\norm{(C+\cH_0)e^{\tau\cP_M}\Psi_0}_{L^2}.
\label{eq:bound_V_right_cP_M}
\end{align}
For $\tau\leq \sigma$, we have thus proved that
\begin{equation*}
    \frac{\rd}{\rd \tau} \norm{(C+\cH_0)e^{\tau\cP_M}\Psi_0}_{L^2}^2 
    \leq C'\big(1+|E|+\|V\|_\sigma\big) \norm{(C+\cH_0)e^{\tau\cP_M}\Psi_0}_{L^2}^2, 
\end{equation*}
for some constant $C'$ depending only on $V_{ee},N,D$. 
By Grönwall's lemma and~\eqref{eq:compare_cH0_cT} we conclude that 
$$\norm{(1+\cT)e^{\sigma\cP_M}\Psi_0}^2_{L^2} \leq c^{-4}e^{C'\sigma (1+|E|+\|V\|_\sigma)}\norm{(1+\cT)\Psi_0}^2_{L^2}.$$ 
Recall that $\Psi_0\in D(\cH_V)=D(\cT)$. Taking the limit $M\to \infty$ and by the monotone convergence theorem we get that the norms $\norm{(1+\cT)e^{\sigma\cP_M}\Psi_0}_{L^2}$ converge. Further, clearly $(1+\cT)e^{\sigma\cP_M}\Psi_0$ converges weakly in $L^2$. Thus, $(1+\cT)e^{\sigma\cP_M}\Psi_0$ converges in norm and we obtain $\Psi_0\in \cB_\sigma$ as desired. 
\end{proof}

\subsection{Propagation of analyticity for the time-dependent equation}\label{sec.SE.solvable}
Let us now move to the second part of Theorem~\ref{thm.density.is.analytic} concerning the time-dependent Schrödinger equation \eqref{eq:Schrodinger} 
$i \partial_t \Psi(t) = \cH_{V(t)}\Psi(t)$ with $\Psi(0) = \Psi_0$. Under our assumption that $V_{ee}\in L^2(\Lambda)$, we recall that $\Psi\in C^0([0,T];H^2(\Lambda^N))\cap C^1([0,T];L^2(\Lambda^N))$, whenever $\Psi_0\in H^2(\Lambda^N)$ and $V\in C^0([0,T];C^2(\Lambda))$. The proof of this standard fact is very similar to that of Lemma~\ref{lem.SE.solvable} below (with $\sigma=0$). Our goal is to show that the analyticity in the center of mass is preserved for analytic $V$ and $\Psi_0$. The precise statement is the following.

\begin{lemma}[Propagation of analyticity]\label{lem.SE.solvable}
Let $\Psi_0\in \cB_\sigma$ and $V\in C^0([0,T];\cA_\sigma)$. Then, the unique solution to Schrödinger's equation~\eqref{eq:Schrodinger} satisfies 
$\Psi\in C^0([0,T];\cB_\sigma)$.
\end{lemma}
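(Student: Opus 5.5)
We follow the Grönwall strategy of \cref{lem.eigenstate.analytic}, now in time. Replace $|\cP|$ by the bounded truncation $\cP_M=\min\{|\cP|,M\}$, which commutes with $\cH_0$ and with $\cT$, and set
\[
F_M(t):=\norm{(C+\cH_0)e^{\sigma\cP_M}\Psi(t)}_{L^2}^2 .
\]
By~\eqref{eq:compare_cH0_cT}, $F_M(t)$ is equivalent to $\norm{(1+\cT)e^{\sigma\cP_M}\Psi(t)}_{L^2}^2$, uniformly in $M$.

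The starting point is the standard $H^2$ theory: $\Psi\in C^0([0,T];H^2)\cap C^1([0,T];L^2)$. To justify differentiating $F_M$ we regularize. One option is to work with the Duhamel formulation $\Psi(t)=e^{-it\cH_0}\Psi_0-i\int_0^te^{-i(t-s)\cH_0}\cV(s)\Psi(s)\,\rd s$. Another is to apply a further cutoff $\chi_\eps(\cT)$ and let $\eps\to0$ at the end.

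Formally, since $e^{\sigma\cP_M}$ and $\cH_0$ commute with $e^{-it\cH_0}$ and $\cH_0$ is self-adjoint, the $\cH_0$ part of the evolution drops out of the derivative. We obtain
\[
\frac{\rd}{\rd t}F_M(t)=2\Im\pscal{(C+\cH_0)e^{\sigma\cP_M}\Psi,\,(C+\cH_0)e^{\sigma\cP_M}\cV(t)\Psi}.
\]
The key bound is \eqref{eq:bound_V_right_cP_M}. It comes from the proof of \cref{lemma:multiplication} with $s=0$, and relies on $p\mapsto\min(|p|,M)$ satisfying the triangle inequality. It gives
\[
\norm{(1+\cT)e^{\sigma\cP_M}\cV\Psi}_{L^2}\leq CN\norm{V(t)}_\sigma\norm{(1+\cT)e^{\sigma\cP_M}\Psi}_{L^2}.
\]
Combining with Cauchy--Schwarz and~\eqref{eq:compare_cH0_cT}, we get $F_M'(t)\leq C'\sup_{s\le T}\norm{V(s)}_\sigma F_M(t)$. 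Grönwall then yields $F_M(t)\leq e^{C'TK}F_M(0)\leq C''\norm{\Psi_0}_\sigma^2$, uniformly in $M$. Letting $M\to\infty$ by monotone convergence gives $\sup_t\norm{\Psi(t)}_\sigma<\infty$. In particular $\Psi(t)\in\cB_\sigma$ for every $t$, with the same weak-to-strong convergence argument as in the eigenstate lemma.

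Continuity in time in $\cB_\sigma$ is the remaining point and the main obstacle. Uniform boundedness together with continuity in $H^2$ only gives weak continuity in $\cB_\sigma$. To upgrade it, write $\Psi$ via Duhamel. The free part $t\mapsto e^{-it\cH_0}\Psi_0$ is continuous in $\cB_\sigma$, because $e^{-it\cH_0}$ commutes with $e^{\sigma|\cP|}$ and is strongly continuous on $D(\cH_0)=H^2$. For the integral term, the integrand $e^{-i(t-s)\cH_0}\cV(s)\Psi(s)$ is bounded in $\cB_\sigma$ by \cref{lemma:multiplication} and the uniform bound above. It is also strongly measurable, since it is weakly continuous into a separable space. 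Hence the integral is continuous in $t$ in $\cB_\sigma$, with modulus controlled by $|t-t'|\sup\norm{V}_\sigma\norm{\Psi}_\sigma$ plus the strong continuity of the free group.

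An alternative route to continuity is to run the same Grönwall estimate on differences $\Psi(t)-\Psi(t')$. One could also obtain continuity of $t\mapsto\norm{\Psi(t)}_\sigma$ from the identity for $F_\infty$ and combine it with weak continuity.
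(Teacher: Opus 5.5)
Your proof is correct. The a priori bound is the paper's argument: truncate to $\cP_M$, let the $\cH_0$-part drop out, control $\cV$ through \eqref{eq:bound_V_right_cP_M}, apply Grönwall, and remove $M$ by monotone convergence.

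Of the two regularizations you mention, keep the Duhamel one, or use a function of $\cH_0$ such as the paper's $(C+\cH_0)/(1+\eps(C+\cH_0))$. A cutoff $\chi_\eps(\cT)$ does not commute with $\cH_0$ because of the interaction. With it, the $\cH_0$-part no longer drops out exactly, and you would have to control a commutator of $\chi_\eps(\cT)$ with $V_{ee}$. That is delicate when $V_{ee}$ is only in $L^2$.

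For continuity in time you take a genuinely different route.
\begin{itemize}
\item The paper lets $M\to\infty$ in the integrated form of \eqref{eq:diff_t_B_sigma_M}. This shows that $t\mapsto\norm{(C+\cH_0)e^{\sigma|\cP|}\Psi(t)}_{L^2}^2$ is Lipschitz. Norm continuity plus weak continuity in the Hilbert space $\cB_\sigma$ then gives strong continuity.
\item You instead write $\Psi(t)=e^{-it\cH_0}\bigl(\Psi_0-i\int_0^t g(s)\,\rd s\bigr)$ with $g(s)=e^{is\cH_0}\cV(s)\Psi(s)$. You use that $e^{-it\cH_0}$ is a strongly continuous group on $\cB_\sigma$, and that $g$ is bounded and weakly continuous in the separable space $\cB_\sigma$, hence integrable there.
\end{itemize}

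The step you should spell out is the weak continuity of $g$. It follows from three facts:
\begin{itemize}
\item $\Psi$ is weakly continuous in $\cB_\sigma$;
\item $s\mapsto\cV(s)$ is continuous in operator norm on $\cB_\sigma$, by \cref{lemma:multiplication} and $V\in C^0([0,T];\cA_\sigma)$;
\item $e^{is\cH_0}$ is unitary for the equivalent norm $\norm{(C+\cH_0)e^{\sigma|\cP|}\cdot}_{L^2}$.
\end{itemize}

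The paper's route is shorter because it reuses the identity already derived. Yours needs this measurability argument, but it gives slightly more: $e^{it\cH_0}\Psi(t)$ is Lipschitz in $\cB_\sigma$.

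In your last alternative, note that the identity yields continuity of the $(C+\cH_0)$-weighted norm, not of $\norm{\Psi(t)}_\sigma$ itself. That suffices, since the two are equivalent Hilbert norms with the same weak topology.
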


\begin{proof}[Proof of \cref{lem.SE.solvable}]
The argument is similar to that of \cref{lem.eigenstate.analytic}. We compute
\begin{align*}
\frac{\rd}{\dt}\norm{(C+\cH_0)e^{\sigma\cP_M}\Psi}^2_{L^2}&=\pscal{\Psi,i\left[\cH_V,(C+\cH_0)^2e^{2\sigma\cP_M}\right]\Psi}\\
& =\pscal{\Psi,i\left[\cV,(C+\cH_0)^2e^{2\sigma\cP_M}\right]\Psi}\\
& =2\Re\pscal{(C+\cH_0)e^{\sigma\cP_M}\Psi,i\left[\cV(t)\,,\,(C+\cH_0)e^{\sigma\cP_M}\right]\Psi}\\
&\leq2\norm{(C+\cH_0)e^{\sigma\cP_M}\Psi}_{L^2}\Big( N\norm{V(t)}_{L^\ii}\norm{(C+\cH_0)e^{\sigma\cP_M}\Psi}_{L^2}\\
&\qquad\qquad +\norm{(C+\cH_0)e^{\sigma\cP_M}\cV(t)\Psi}_{L^2}\Big),
\end{align*}
where we recall that $\cV(t)=\sum_{j=1}^NV(t,x_j)$ denotes the external potential. To give a precise meaning to the calculation leading to the last equality, we can replace $C+\cH_0$ everywhere by $\frac{C+\cH_0}{1+\eps(C+\cH_0)}$ and take $\eps\to0$ at the end of the computation. Using~\eqref{eq:bound_V_right_cP_M} for the last term, together with~\eqref{eq:compare_cH0_cT} to replace $C+\cH_0$ by $1+\cT$, we find
\begin{align}
\frac{\rd}{\dt}\norm{(C+\cH_0)e^{\sigma\cP_M}\Psi}^2_{L^2}
&=2\Re\pscal{(C+\cH_0)e^{\sigma\cP_M}\Psi,i\left[\cV(t)\,,\,(C+\cH_0)e^{\sigma\cP_M}\right]\Psi}\label{eq:diff_t_B_sigma_M}\\
&\leq C'\norm{V(t)}_\sigma\norm{(C+\cH_0)e^{\sigma\cP_M}\Psi}_{L^2}^2,\nn
\end{align}
where we also used that $\|V\|_{L^\ii}\leq C\|V\|_{H^2}\leq C \|V\|_\sigma$. By Grönwall's lemma we conclude that 
\begin{equation*}
\norm{(C+\cH_0)e^{\sigma\cP_M}\Psi(t)}^2_{L^2}\leq e^{C'T\sup_{0\leq t\leq T}\norm{V(t)}_\sigma}\norm{(C+\cH_0)e^{\sigma\cP_M}\Psi_0}_{L^2}^2,\qquad \forall 0\leq t\leq T.
\end{equation*}
Taking now $M\to\ii$ and using that $\Psi_0\in\cB_\sigma$, we obtain $\Psi\in L^\ii([0,T];\cB_\sigma)$. This can be upgraded to $C^0([0,T];\cB_\sigma)$ as follows. First we pass to the limit $M\to\ii$ in~\eqref{eq:diff_t_B_sigma_M} and obtain
$$\frac{\rd}{\dt}
\norm{(C+\cH_0)e^{\sigma|\cP|}\Psi(t)}^2_{L^2}\in L^\ii([0,T],\R).$$
This proves that $t\mapsto\|(C+\cH_0)e^{\sigma|\cP|}\Psi(t)\|_{L^2}$ is a continuous function. Since $\cB_\sigma$ is a Hilbert space (it is a weighted $L^2$-space in Fourier variables), convergence of norms together with weak convergence imply norm convergence. Since we already know that $\Psi\in C^0([0,T],L^2)$, weak continuity with respect to $t$ holds in $\cB_\sigma$. The continuity of the norm in $t$ implies the claim that $\Psi\in C^0([0,T];\cB_\sigma)$. 
\end{proof}

\subsection{Analyticity of the density}
Next, we prove that a wavefunction $\Psi\in\cB_\sigma$ has its one-particle density $\rho_\Psi\in \cA_\sigma$. 
Recall \eqref{eqn.def.density} 
that the one-particle density of $\Psi$ is given by $\rho_\Psi(x) = N \int_{\Lambda^{N-1}} \rd X \, \overline{\Psi(x,X)}\Psi(x,X)$. 
For later purposes, it will be useful to have a slightly more general statement. To this end, we denote the analogous bilinear integral by
\begin{equation*}
    \rho_{\Psi_1,\Psi_2}(x) = N \int_{\Lambda^{N-1}} \rd X \, \overline{\Psi_1(x,X)}\Psi_2(x,X).
\end{equation*}
Then, we have the following. 

\begin{lemma}[Analyticity of the density]
\label{lem:Psi_to_density_B_sigma}
    For $\Psi_1, \Psi_2 \in \cB_\sigma$ and $s\geq 0$ we have 
    \begin{equation}\label{eq:rho_psi_chi}
    \norm{(1-\Delta)^{s/2}\rho_{\Psi_1,\Psi_2}}_\sigma 
    \leq  
    C \norm{\frac{\pscal{\cP}^{2+s}}{1+\cT}\Psi_1}_\sigma \norm{\frac{\pscal{\cP}^2}{1+\cT}\Psi_2}_\sigma
    +C \norm{\frac{\pscal{\cP}^2}{1+\cT}\Psi_1}_\sigma \norm{\frac{\pscal{\cP}^{2+s}}{1+\cT}\Psi_2}_\sigma
    .
 \end{equation}
    In particular, we have
    $\rho_{\Psi_1}\in \cA_\sigma$ with
    $$
    \norm{\rho_{\Psi_1}}_\sigma \leq  C \norm{\Psi_1}_\sigma^2.
    $$
\end{lemma}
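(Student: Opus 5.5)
We work entirely in Fourier variables. Writing $\vec p=(p_1,P')$ with $P'=(p_2,\dots,p_N)$, integrating out $X=(x_2,\dots,x_N)$ in $\overline{\Psi_1(x,X)}\Psi_2(x,X)$ imposes that the momenta of $x_2,\dots,x_N$ coincide in the two factors, and gives
\begin{equation*}
\widehat{\rho_{\Psi_1,\Psi_2}}(k)=N\sum_{q\in 2\pi\Z^D}\sum_{P'}\overline{\hat\Psi_1(q,P')}\,\hat\Psi_2(q+k,P').
\end{equation*}
The key observation is that $k=(q+k+\sum_{j\geq2}p_j)-(q+\sum_{j\geq2}p_j)=P_2-P_1$, where $P_1,P_2$ are the total momenta of the two arguments. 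Thus $|k|\leq|P_1|+|P_2|$, so $e^{\sigma|k|}\leq e^{\sigma|P_1|}e^{\sigma|P_2|}$. Likewise $\pscal{k}^{2+s}\leq C(\pscal{P_1}^{2+s}+\pscal{P_2}^{2+s})$, and the $H^2$ weight $\pscal{k}^2$ together with $(1-\Delta)^{s/2}$ yields $\pscal{k}^{2+s}$. Only the total momenta enter, which is why the weights $\pscal{\cP}$ appear on the right of \eqref{eq:rho_psi_chi}. This is the Fourier form of the identity $\nabla_{x_1}\rho=N\int(\sum_j\nabla_{x_j})|\Psi|^2$ from the remark after Theorem~\ref{thm.density.is.analytic}.

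We then estimate
\begin{equation*}
\pscal{k}^{2+s}e^{\sigma|k|}|\widehat{\rho}(k)|\leq C\sum_{q,P'}\bigl(\pscal{P_1}^{2+s}+\pscal{P_2}^{2+s}\bigr)e^{\sigma|P_1|}|\hat\Psi_1(q,P')|\,e^{\sigma|P_2|}|\hat\Psi_2(q+k,P')|.
\end{equation*}
Consider the term carrying $\pscal{P_1}^{2+s}$. Set $a(\vec p)=\pscal{P}^{2+s}e^{\sigma|P|}|\hat\Psi_1(\vec p)|$ and $b(\vec p)=e^{\sigma|P|}|\hat\Psi_2(\vec p)|$. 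We need $\sum_k\bigl(\sum_{q,P'}a(q,P')b(q+k,P')\bigr)^2$. Inserting the factors $\pscal{\vec p}^{2}/(1+|\vec p|^2)$ into both $a$ and $b$ (these are exactly the symbols of $\pscal{\cP}^2/(1+\cT)$ up to constants), we plan to apply Cauchy--Schwarz in $(q,P')$ with weights $(1+|(q,P')|^2)^{-1}(1+|(q+k,P')|^2)^{-1}$. This leaves
\begin{equation*}
\sum_{q,P'}\frac{1}{(1+|q|^2+|P'|^2)(1+|q+k|^2+|P'|^2)}.
\end{equation*}
This sum is not bounded uniformly in $k$: in $D(N-1)+D$ dimensions the $P'$ sum diverges. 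So the distribution of weights has to be chosen more carefully.

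The remedy is to spend the full $H^2$ weight $(1+|\vec p|^2)^2$ that each $\Psi_i$ carries in its $\cB_\sigma$ norm; this is what the factor $(1+\cT)^{-1}$ in the statement tracks. Write $\hat\Psi_i=(1+|\vec p|^2)^{-1}\hat\Phi_i$ with $\Phi_i=\frac{\pscal{\cP}^{\cdot}}{1+\cT}$-type objects measured in weighted $\ell^2$. Then the double sum is dominated by a convolution in $q$ and an $\ell^2$ pairing in $P'$:
\begin{equation*}
\sum_{q,P'}\frac{|\hat\Phi_1(q,P')|\,|\hat\Phi_2(q+k,P')|}{\pscal{q}^2\pscal{q+k}^2}.
\end{equation*}
Here the $P'$ sum is handled by Cauchy--Schwarz alone; the decay in $|P'|$ is not needed. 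Summing over $k$, we apply Cauchy--Schwarz in $P'$ and then discrete Young in $q$, using $\sum_q\pscal{q}^{-4}<\infty$ for $D\leq3$, exactly as in Lemma~\ref{lemma:multiplication}. This gives the bound $C\|\Phi_1\|\,\|\Phi_2\|$ with the correct weights, and the $\pscal{P_2}^{2+s}$ term is symmetric. Finally we note that the weight $(1+|\vec p|^2)^2$ in $\norm{\cdot}_\sigma$ has been traded for $\pscal{P}^{2+s}$ in the combination $\norm{\frac{\pscal{\cP}^{2+s}}{1+\cT}\Psi}_\sigma$, as stated.

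For the "in particular" take $s=0$, $\Psi_1=\Psi_2$. Since $\pscal{\cP}^2\leq N(1+\cT)$, the operator $\pscal{\cP}^2/(1+\cT)$ is bounded with norm at most $N$ and commutes with the weights, giving $\norm{\rho_{\Psi_1}}_\sigma\leq C\norm{\Psi_1}_\sigma^2$.

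The main obstacle is the weight bookkeeping in the Cauchy--Schwarz/Young step: the many "transverse" momenta $P'$ must never need to be summed against a decaying weight, and the $\pscal{q}^{-2}$ factors must be extracted from the $H^2$ weights of $\Psi_i$. This works only because $D\leq3$ makes $\pscal{q}^{-4}$ summable.
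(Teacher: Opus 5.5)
There is a genuine gap in your summation step. The norms on the right of \eqref{eq:rho_psi_chi} carry \emph{no} weight in the individual momenta. Since $\norm{\Psi}_\sigma=\norm{(1+\cT)e^{\sigma|\cP|}\Psi}_{L^2}$, one has $\norm{\frac{\pscal{\cP}^{m}}{1+\cT}\Psi}_\sigma=\norm{e^{\sigma|\cP|}\pscal{\cP}^m\Psi}_{L^2}$: the factor $(1+\cT)^{-1}$ \emph{removes} the $H^2$ weight rather than ``tracking'' it.

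Your remedy writes $\hat\Psi_i=(1+|\vec p|^2)^{-1}\hat\Phi_i$, so $\Phi_i=(1+\cT)\Psi_i$, not a $\frac{\pscal{\cP}^{\cdot}}{1+\cT}$-type object. It then extracts $\pscal{q}^{-2}$ from this for Young's inequality, which spends a weight the right-hand side does not provide. As written, with both factors, you get $C\norm{\pscal{\cP}^{2+s}\Psi_1}_\sigma\norm{\Psi_2}_\sigma+(1\leftrightarrow2)$. At best, putting $\pscal{q}^{-2}$ only on the factor without $\pscal{P}^{2+s}$, you get $C\norm{\frac{\pscal{\cP}^{2+s}}{1+\cT}\Psi_1}_\sigma\norm{\Psi_2}_\sigma+(1\leftrightarrow2)$. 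Since $\pscal{\cP}^2\le N(1+\cT)$, both are strictly weaker than \eqref{eq:rho_psi_chi}, and your closing claim that the $H^2$ weight ``has been traded for $\pscal{P}^{2+s}$'' is unjustified.

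The weaker form still gives $\norm{\rho_{\Psi_1}}_\sigma\le C\norm{\Psi_1}_\sigma^2$, but not the lemma, and the difference matters. The lemma is later applied to $\rho_{\partial^i\Psi_1,\partial^j\Psi_2}$ and $\rho_{\sqrt{V_{ee}^+}\Psi_1,\sqrt{V_{ee}^+}\Psi_2}$. There the $(1+\cT)^{-1}$ on \emph{both} factors is exactly what absorbs the non-center-of-mass derivatives $\partial^i$ and the singularity of $V_{ee}$. With your bound one would need $\norm{\partial^j\Psi_2}_\sigma$, i.e. three derivatives in individual variables, or $\norm{\sqrt{V_{ee}^+}\Psi_2}_\sigma$, which requires derivatives of $V_{ee}$.

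The missing idea is that decay is only needed in the output variable $k$, a $D$-dimensional sum, and it comes from the \emph{total-momentum} weight of the other factor, because $P_2=P_1+k$. For the $\pscal{P_1}^{2+s}$ term, fix $k$, write $b(q+k,P')=\pscal{P_1+k}^{-2}\,\pscal{P_2}^{2}\,b(q+k,P')$, and apply Cauchy--Schwarz over all of $(q,P')$ at once:
\begin{equation*}
\Bigl(\sum_{q,P'}a(q,P')\,b(q+k,P')\Bigr)^2\le\sum_{q,P'}\frac{a(q,P')^2}{\pscal{P_1+k}^{4}}\;\sum_{\vec p}\pscal{P}^4\,b(\vec p)^2 .
\end{equation*}
The second factor equals $\norm{\frac{\pscal{\cP}^2}{1+\cT}\Psi_2}_\sigma^2$ and is independent of $k$. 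Summing the first factor over $k$ gives $\sum_k\pscal{k+P_1}^{-4}=\sum_k\pscal{k}^{-4}<\infty$ uniformly in $P_1$; this is where $D\le3$ enters. What remains is $\norm{\frac{\pscal{\cP}^{2+s}}{1+\cT}\Psi_1}_\sigma^2$. This is the paper's argument, and the $\pscal{P_2}^{2+s}$ term is handled symmetrically.

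Your first attempt failed because you attached decay in the individual momenta $(q,P')$ and summed over that $DN$-dimensional variable. The fix is to attach decay in $P_1+k$ and sum over $k$.
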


\begin{proof}
Recall the notation $\pscal{x}:=\sqrt{1+|x|^2}$ for $x\in \R^n$ for any integer $n$. 
We denote again $\vec{p}=(p_1,...,p_N)$ and $p+\vec{p}=(p+p_1,p_2,...,p_N)\in 2\pi \Z^{DN}$.
We need to bound
$$\norm{(1-\Delta)^{s/2}\rho_{\Psi_1,\Psi_2}}_\sigma^2 = \sum_{p\in 2\pi\Z^D} e^{2\sigma\abs{p}} \pscal{p}^{4+2s}\abs{\hat\rho_{\Psi_1,\Psi_2}(p)}^2.$$
First we recall that
\begin{equation*}
\hat\rho_{\Psi_1,\Psi_2}(p)=N\sum_{p_i\in 2\pi\Z^D}
\overline{\hat\Psi_1(\vec p)}\; \hat \Psi_2(p+\vec p),
\end{equation*}
so that
    \begin{equation*}
  \abs{\hat\rho_{\Psi_1,\Psi_2}(p)}
\leq N \sum_{p_i\in 2\pi\Z^D}
\abs{\hat\Psi_1(\vec p)} \abs{\hat \Psi_2(p+\vec p)}.
\end{equation*}
From \eqref{eq:japanese} we use $\pscal{p}^{2+s}\leq C \pscal{p+P}^{2+s}+ C \pscal{P}^{2+s} $ with $P=\sum_{j=1}^Np_j$.
Next, we observe that $e^{\sigma|p|}\leq e^{\sigma|p+P|}e^{\sigma|P|}$ from the triangle inequality. Thus,
    \begin{equation*}
\begin{split}
  e^{\sigma\abs{p}} \pscal{p}^{2+s}\abs{\hat\rho_{\Psi_1,\Psi_2}(p)}
&\leq C \sum_{p_j\in 2\pi\Z^D}
(\pscal{P}^{2+s} + \pscal{p+P}^{2+s})
e^{\sigma|P|}\abs{\hat\Psi_1(\vec p)}  e^{\sigma|p+P|}\abs{\hat \Psi_2(p+\vec p)}.
\end{split}
\end{equation*}
We bound the contribution of the first term (with $\pscal{P}^{2+s}$) to $\norm{(1-\Delta)^{s/2}\rho_{\Psi_1,\Psi_2}}_\sigma$ using Cauchy--Schwarz as 
\begin{align*}
& \sum_{p\in 2\pi\Z^D} \left(\sum_{p_j\in 2\pi\Z^D} \frac{\pscal{P}^{2+s}}{\pscal{p+P}^2}  e^{\sigma|P|}\abs{\hat \Psi_1(\vec p)}\pscal{p+P}^2 e^{\sigma|p+P|}\abs{\hat\Psi_2(p+\vec p)} \right)^2
\\ & \quad  \leq 
\sum_{p,p_j,k_j\in 2\pi\Z^D} \frac{\pscal{K}^{4+2s}}{\pscal{p+K}^4}  e^{2\sigma|K|}\abs{\hat \Psi_1(\vec k)}^2 \pscal{p+P}^4
e^{2\sigma|p+P|}\abs{\hat\Psi_2(p+\vec p)}^2
 \\
&  \quad 
\leq C \sum_{p\in 2\pi\Z^D} \pscal{p}^{-4} \norm{\frac{\pscal{\cP}^{2+s}}{1+\cT}\Psi_1}_\sigma^2 \norm{\frac{\pscal{\cP}^{2}}{1+\cT}\Psi_2}_\sigma^2
= C'  \norm{\frac{\pscal{\cP}^{2+s}}{1+\cT}\Psi_1}_\sigma^2 \norm{\frac{\pscal{\cP}^{2}}{1+\cT}\Psi_2}_\sigma^2.
\end{align*}
Here we inserted an additional $1=\pscal{p+P}^2/\pscal{p+P}^2$ that turned into the fraction $\frac{\pscal{p+P}^4}{\pscal{p+K}^4}$ by Cauchy--Schwarz and allows summability in $p$. To arrive at the last line, one computes the sums in the order $\vec p$, then $p$, and finally $\vec k$. 

The term with $\pscal{p+P}^{2+s}$ instead of $\pscal{P}^{2+s}$ is bounded analogously. 
We conclude the desired bound in \eqref{eq:rho_psi_chi}. 

Finally, using that $\pscal{\cP}^2/(1+\cT)$ is a bounded Fourier multiplier we obtain the second inequality with $\Psi_2=\Psi_1$ and $s=0$. 
\qedhere
\end{proof}

\begin{lemma}[The density is $C^2$ in time]\label{lem:Psi_to_density_C^2}
Let $\Psi_0\in \cB_\sigma$ and $V\in C^0([0,T];\cA_\sigma)$. Then, for the unique solution $\Psi$ to Schrödinger's equation~\eqref{eq:Schrodinger}, we have $\rho_{\Psi}\in C^2([0,T];\cA_{\sigma'})$ for any fixed $0<\sigma' < \sigma$ with
$$
\norm{\partial_t \rho_\Psi}_{\sigma'} \leq  \frac{C}{(\sigma-\sigma')^2} \norm{\Psi}_\sigma^2
\qquad
\textnormal{and}
\qquad
\norm{\partial_t^2 \rho_\Psi}_{\sigma'} \leq  \frac{C}{(\sigma-\sigma')^4} \norm{\Psi}_\sigma^2.
$$
\end{lemma}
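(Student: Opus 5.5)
The plan is to express the time derivatives of $\rho_\Psi$ through the continuity equation and its time derivative (the force balance), and to write every resulting term as a total-momentum derivative of a bilinear density of the form $\rho_{\Psi_1,\Psi_2}$. Lemma~\ref{lem:Psi_to_density_B_sigma} then controls these terms, with Lemma~\ref{lem.bdd.gradient} absorbing the extra derivatives at the price of powers of $(\sigma-\sigma')^{-1}$.

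\textbf{First derivative.} We have $\partial_t\rho_\Psi=-\nabla\cdot j_\Psi$. The current can be written as
\[
j_\Psi(x)=2\Im\,\rho_{\Psi,\nabla_{x_1}\Psi}(x).
\]
The key point is that the divergence can be moved onto the total momentum, exactly as in the remark on $\nabla\rho_\Psi$: since $\int\nabla_{x_j}(\cdot)\,\rd x_j=0$ for $j\geq2$, we get
\[
\nabla\cdot j_\Psi=2\Im\,\rho_{\Psi,\,i\cP\cdot(-i\nabla_{x_1})\Psi}+(\text{terms with }\cP\text{ on }\Psi_1).
\]
Each resulting term is of the form $\rho_{\cP^a\Psi,\,\cP^b\nabla_{x_1}\Psi}$ with $a+b=1$. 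The non-center-of-mass derivative $\nabla_{x_1}$ is harmless: $\nabla_{x_1}\pscal{\cP}^2/(1+\cT)$ is a multiplier bounded by $\pscal{\cP}$ up to constants, since $\abs{p_1}\pscal{P}^2\leq C\pscal{P}\,\pscal{\vec p}^2$ because $\abs{P}\leq\sqrt N\pscal{\vec p}$. Applying \eqref{eq:rho_psi_chi} with $s=0$ leaves at most two powers of $\pscal{\cP}$, and Lemma~\ref{lem.bdd.gradient} gives $\norm{\pscal{\cP}^2\Psi}_{\sigma'}\leq C(\sigma-\sigma')^{-2}\norm{\Psi}_\sigma$. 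This yields the first bound. Continuity in time in $\cA_{\sigma'}$ follows from $\Psi\in C^0([0,T];\cB_\sigma)$ (Lemma~\ref{lem.SE.solvable}) and the bilinearity of the estimate.

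\textbf{Second derivative.} Differentiate $j_\Psi$ in time using $i\partial_t\Psi=\cH_V\Psi$. This gives $\partial_tj=\Im\,\rho$-type expressions of the form $\pscal{\Psi,i[\cH_V,\cdot]\Psi}$. Concretely, $\partial_t j_\Psi$ is the standard momentum-stress balance: a kinetic stress tensor $\nabla\cdot T$ containing $\Re\,\rho_{\partial^k\Psi,\partial^l\Psi}$ and $\nabla\Delta\rho_\Psi$, the term $-2\rho_\Psi\nabla V$, and an interaction term
\[
-2N(N-1)\int\nabla V_{ee}(x-y)\,\Gamma^{(2)}(x,y)\,\rd y.
\]
Rather than differentiating $V_{ee}$, I would compute $\partial_t^2\rho_\Psi=2\Im\,\partial_t\rho_{\Psi,\cP\nabla_{x_1}\Psi}$-type terms directly as
\[
\rho_{\cH_V\Psi,\,X\Psi}-\rho_{\Psi,\,X\cH_V\Psi},
\]
with $X$ a multiplier. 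The contributions of $\cH_0$ commute through the total-momentum part. The terms with $\cH_0\Psi$ are controlled in $\cB_\sigma$-type norms only with the $(1+\cT)^{-1}$ gain from \eqref{eq:rho_psi_chi}: one uses that $(1+\cT)^{-1}\cH_0$ is bounded by \eqref{eq:compare_cH0_cT}, which absorbs $V_{ee}$ without differentiating it. The terms with $\cV\Psi$ are handled by Lemma~\ref{lemma:multiplication}. Altogether, at most four powers of $\pscal{\cP}$ fall on $\Psi$, giving $C(\sigma-\sigma')^{-4}\norm{\Psi}_\sigma^2$.

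\textbf{Main obstacle.} The hard step is the second derivative. I must distribute the kinetic operator $\cT$ so that each factor carries at most one $(1+\cT)$, absorbed by the denominators in \eqref{eq:rho_psi_chi}, while all remaining derivatives are either total-momentum derivatives or bounded multipliers like $\nabla_{x_1}\pscal{\cP}/(1+\cT)^{1/2}$. The danger is a term like $\rho_{\nabla_{x_1}\Psi,\,\cT\nabla_{x_1}\Psi}$, which exceeds the $H^2$ weight. I would first try to rewrite $\cT=\cH_0-\mathcal W_{ee}$ and move the operator $\cH_0$ across using the Schrödinger equation, that is, in the form $\partial_t$ acting on the bilinear form. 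Where that fails, I would rely on the cancellation of the commutator $[\cT,\cdot]$ in the $x_j$ variables for $j\geq2$ (integration by parts in $X$). This leaves only derivatives in $x_1$, which can be converted to $\cP$ as in the first step.
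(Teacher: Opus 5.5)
\textbf{First derivative.} This part is fine and is essentially the paper's argument. The paper keeps the outer divergence on the density and pays for it with Lemma~\ref{lem.bdd.gradient}, which is equivalent to your conversion into $\cP$.

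\textbf{Second derivative: the gap.} The second derivative is not actually proved, and the gap is at the step you flag. Your absorption device is
\[
\|e^{\sigma|\cP|}\pscal{\cP}^k\cH_0\Psi\|_{L^2}\leq C\|\pscal{\cP}^k\Psi\|_\sigma ,
\]
valid because $\cH_0$ commutes with $\cP$ and has domain $H^2$ by \eqref{eq:compare_cH0_cT}. It only applies when nothing but functions of $\cP$ act on $\cH_0\Psi$.

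In $\rho_{\Psi,X\cH_V\Psi}$ the multiplier $X$ contains $\nabla_{x_1}$. Applying \eqref{eq:rho_psi_chi} would then require $e^{\sigma|\cP|}\pscal{\cP}^k\nabla_{x_1}\cH_0\Psi\in L^2$, i.e.\ $H^3$-regularity of $\Psi$ and a derivative of $V_{ee}$.

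Your proposed remedies do not remove this:
\begin{itemize}
\item Whatever cancellations occur in the variables $x_j$, $j\geq2$, commuting $\nabla_{x_1}$ through $\cH_V$ leaves $[\sum_{k\geq2}V_{ee}(x_1-x_k),\nabla_{x_1}]=-\sum_{k\geq2}(\nabla V_{ee})(x_1-x_k)$. For $V_{ee}\in L^2$ this is only a distribution, which is exactly the differentiation of $V_{ee}$ you wanted to avoid.
\item Moving $\cH_0$ across via the Schr\"odinger equation is circular, since it reintroduces $\partial_t\Psi$.
\end{itemize}
Also, the term $\rho_{\nabla_{x_1}\Psi,\cT\nabla_{x_1}\Psi}$ you single out is not where the difficulty lies. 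The real obstruction is $\nabla_{x_1}\cH_V\Psi$.

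\textbf{The missing step.} Use the Leibniz rule for the partial trace,
\[
\rho_{\Psi,\nabla_{x_1}F}=\nabla\rho_{\Psi,F}-\rho_{\nabla_{x_1}\Psi,F},\qquad F=\cH_V\Psi .
\]
Make this rigorous by testing against $h\in C^\infty(\Lambda)$ as in the proof of Proposition~\ref{prop.force.balance}; this is needed anyway, since $\partial_t\Psi$ is only in $L^2$. It gives, in the sense of distributions,
\[
\partial_t^2\rho_\Psi=-4\,\nabla\cdot\Re\,\rho_{\cH_V\Psi,\nabla_{x_1}\Psi}+2\,\Delta\,\Re\,\rho_{\Psi,\cH_V\Psi},
\]
in which every derivative that is not a function of $\cP$ falls on a bare $\Psi$. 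Now combine:
\begin{itemize}
\item \eqref{eq:rho_psi_chi} with $s=1$ and $s=2$;
\item your absorption of $\cH_0$;
\item the multiplier bound for $\nabla_{x_1}$ from your first step;
\item Lemma~\ref{lemma:multiplication} for the $\cV\Psi$ terms.
\end{itemize}
This gives $C(1+\norm{V}_\sigma)(\sigma-\sigma')^{-4}\norm{\Psi}_\sigma^2$. Continuity in $t$ follows from $\Psi\in C^0([0,T];\cB_\sigma)$.

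\textbf{Comparison with the paper.} With this fix your route is a legitimate alternative. The paper instead rewrites $\partial_t^2\rho_\Psi$ via the force-balance identity \eqref{eq:partialt^2rho} and bounds $Q_\Psi$ with Lemma~\ref{lemma:boundspreliminaryTi}. There $V_{ee}$ is handled through the form bound $V_{ee}^\pm\leq C\norm{V_{ee}}_{L^2}(1-\Delta)$ applied to $\sqrt{V_{ee}^\pm}\,\Psi$. Your version needs only the Kato--Rellich domain property of $\cH_0$. The paper's is heavier here, but it produces the $V$-independent term $Q_\Psi$, which the inverse problem needs anyway.
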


The proof of \cref{lem:Psi_to_density_C^2} is much more involved than those of the previous results of this section. It relies on a precise computation of the second derivative $\partial_t^2\rho_\Psi$. For this reason we give the proof of~\cref{lem:Psi_to_density_C^2} later in \cref{sec.proof.rho.C2} below, after we have made the appropriate calculations and estimates.

\subsection{Concluding the proof of \texorpdfstring{\cref{thm.density.is.analytic}}{Theorem~\ref*{thm.density.is.analytic}}}
Combining the above arguments we conclude the 
\begin{proof}[Proof of \cref{thm.density.is.analytic}]
Part $(i)$ is proved in \cref{lem.eigenstate.analytic}. 

For part $(ii)$, the regularity of solutions to the Schrödinger equation is proved in \cref{lem.SE.solvable} and the stated properties of $\rho_\Psi$ are proved in \cref{lem:Psi_to_density_B_sigma,lem:Psi_to_density_C^2}. 
It remains to give the proof of \cref{lem:Psi_to_density_C^2}. 
This is given in \cref{sec.proof.rho.C2} below. 
\end{proof}

\section{The force-balance equation}
We next study the \emph{force-balance equation}, also called the \emph{van Leeuwen equation}~\cite{vanLeeuwen-99}. This is a central equation both for proving that $\rho_\Psi$ is twice continuously differentiable in time for the true Schrödinger evolution (as in \cref{thm.density.is.analytic}), and for the inverse problem of finding a $V$ that reproduces a given density (as in \cref{thm.main.V}).
It is the precise formulation of the equation alluded to in \eqref{eq:V_Psi_vague} and \eqref{eq:V_Psi_vague_1derivative}.

 \begin{proposition}[Force-balance equation]
\label{prop.force.balance}
Let $V\in C^0([0,T];\cA_\sigma)$ and $\Psi_0\in \cB_\sigma$. Then for the solution $\Psi$ to the Schrödinger equation  \eqref{eq:Schrodinger}, $i\partial_t \Psi = \cH_V \Psi$, we have
the continuity equation 
\begin{equation*}
    \partial_t \rho_{\Psi(t)} + \nabla \cdot j_{\Psi(t)} = 0,
\end{equation*}
and, in the sense of distributions (on $\Lambda$), 
the \emph{force-balance equation}
\begin{equation}\label{eq:partialt^2rho}
    \partial_t^2 \rho_{\Psi(t)} 
		= - \Delta^2\rho_{\Psi(t)} - 2 K_{\rho_{\Psi(t)}} V(t)
        + Q_{\Psi(t)},
\end{equation}
where 
\begin{equation}
K_\rho (V)=-\nabla\cdot(\rho\nabla V)
\label{eqn.def.Krho}
\end{equation}
and (with $\partial^i = \partial_{x^i}$ for $x=(x^1,\ldots,x^D)\in \Lambda$)
\begin{multline}
    Q_\Psi(x) 
    = 4N\sum_{i,j=1}^D \partial^i\partial^j\int \rd X \,\overline{\partial^i \Psi(x,X)} \partial^j\Psi(x,X)
	+ 2\Delta_x  \int \rd y \,   \rho^{(2)}_\Psi(x,y)   V_{ee}(x-y)
    \\*
    -2 \nabla_x  \cdot \int \rd y  \left(\nabla_x \rho^{(2)}_\Psi(x,y) V_{ee}(x-y) \right),
    \label{eqn.def.cT.Psi}
\end{multline}
where we used the two-particle density $\rho^{(2)}_\Psi(x,y) = N(N-1)\int_{\Lambda^{N-2}} \rd Y \, \abs{\Psi(x,y,Y)}^2$ with $Y = (x_3,\ldots,x_N)\in \Lambda^{N-2}$. 
\end{proposition}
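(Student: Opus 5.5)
Both equations come from differentiating one-body expectation values along the flow, so the plan is to compute in weak form. For a test function $\phi\in C^\infty(\Lambda)$, write $\langle \rho_{\Psi(t)},\phi\rangle = \langle \Psi(t), \Phi\Psi(t)\rangle$ with $\Phi=\sum_j\phi(x_j)$. By Lemma~\ref{lem.SE.solvable}, $\Psi\in C^0([0,T];\cB_\sigma)\cap C^1([0,T];L^2)$, and $\cH_V\Psi\in C^0([0,T];L^2)$. Hence $t\mapsto\langle\Psi,\Phi\Psi\rangle$ is $C^1$ with derivative $\langle\Psi,i[\cH_V,\Phi]\Psi\rangle$. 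The potentials $\cV$ and $V_{ee}$ commute with $\Phi$, so only the kinetic part contributes:
\[
i[\cT,\Phi]=-i\sum_j\bigl(\Delta\phi(x_j)+2\nabla\phi(x_j)\cdot\nabla_{x_j}\bigr).
\]
Taking real parts and integrating out $X$ gives $\frac{\rd}{\dt}\langle\rho,\phi\rangle=\langle j_\Psi,\nabla\phi\rangle$, which is the continuity equation in distributional form. Since $\Psi(t)\in H^2$, the current $j_\Psi$ lies in $C^0([0,T];W^{1,1})$, so the identity holds pointwise in time.

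\textbf{Second derivative.} Set $A:=i[\cT,\Phi]$, a first-order operator. I differentiate $\langle\Psi,A\Psi\rangle$ a second time to get $\langle\Psi,i[\cH_V,A]\Psi\rangle$. This step needs justification because $A\Psi$ is only $H^1$. I would either regularize $\Psi_0$ and $V$ (mollify, or replace $\cT$ by $\cT(1+\eps\cT)^{-1}$ as in the proof of Lemma~\ref{lem.SE.solvable}) and pass to the limit in the weak formulation, or interpret $\frac{\rd}{\dt}\langle\Psi,A\Psi\rangle$ through the duality pairing of $H^1$ with $H^{-1}$. The commutator then splits into three pieces:
\begin{enumerate}
\item $i[\cV,A]=2\sum_j\nabla\phi\cdot\nabla V(x_j)$. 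This gives $2\int\rho\nabla V\cdot\nabla\phi=\langle -2K_\rho V,\phi\rangle$ after accounting for signs, so after integration by parts it produces the term $-2K_{\rho}V$.
\item $i[V_{ee},A]=2\sum_j\nabla\phi(x_j)\cdot\sum_{k\ne j}\nabla V_{ee}(x_j-x_k)$. I will not let derivatives fall on $V_{ee}$, which is only $L^2$. Instead I write the expectation as $\int\nabla\phi(x)\cdot\nabla_xV_{ee}(x-y)\rho^{(2)}(x,y)$ and integrate by parts in $x$, moving the derivative onto $\rho^{(2)}$ and $\nabla\phi$. This produces exactly the two $V_{ee}$ terms in~\eqref{eqn.def.cT.Psi}: the $\Delta_x$ term and the $\nabla_x\rho^{(2)}$ term. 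Because $\Psi\in H^2$, $\rho^{(2)}\in W^{1,1}$-type spaces, which is enough after pairing with $V_{ee}\in L^2$ by Cauchy--Schwarz in the integrated form $\int \overline{\Psi}\,\nabla\Psi\, V_{ee}$.
\item $i[\cT,A]$ is the double commutator $-[\cT,[\cT,\Phi]]$. Computing it explicitly gives $4\sum_j\partial^i\partial^j\phi\,\partial^i\partial^j$ plus lower-order terms involving $\nabla\Delta\phi$ and $\Delta^2\phi$. Symmetrizing and integrating out $X$ should yield $4N\partial^i\partial^j\int\overline{\partial^i\Psi}\partial^j\Psi$. The remaining pieces combine, via $\Delta^2\rho=\Delta\bigl(2\Re\,\overline\Psi\Delta\Psi+2|\nabla\Psi|^2\bigr)$-type identities, into $-\Delta^2\rho$.
\end{enumerate}

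\textbf{Expected obstacles.} The main difficulty is rigor rather than algebra. First, the terms involving $\partial^i\partial^j\Psi$ and $V_{ee}$ must be controlled only in the distributional sense, since $\Psi$ is merely $H^2$ and the equation holds in $\mathcal D'$. Second, the constants and signs in the kinetic double commutator must be tracked so that precisely $-\Delta^2\rho+4N\partial^i\partial^j(\cdots)$ appears. For this, I would first verify the kinetic identity for $N=1$ with smooth $\Psi$, namely $\partial_t^2\rho=-\Delta^2\rho+4\partial^i\partial^j(\overline{\partial^i\psi}\partial^j\psi)$ up to the potential terms, and then extend by density in $H^2$ using the continuity of both sides in $C^0([0,T];H^2)$.
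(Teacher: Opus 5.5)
Your argument is essentially the paper's proof in commutator language. You pair with a test function, differentiate $\langle\Psi,A\Psi\rangle=\int\partial_t\rho\,\phi$ a second time via the $H^1$--$H^{-1}$ duality, and integrate by parts separately in the kinetic, external and interaction pieces, never letting a derivative land on $V_{ee}$. Read in the form sense, your $\langle\Psi,i[\cH_V,A]\Psi\rangle$ is exactly $2\Re\langle A\Psi,-i\cH_V\Psi\rangle$, which is the paper's expression $2N\Re\int\overline{\cH_V\Psi}\,\Psi\,\Delta h+4N\Re\int\overline{\cH_V\Psi}\,\nabla_x\Psi\cdot\nabla h$.

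There is one actual error in items 1 and 2. With $A=-i\sum_j\bigl(\Delta\phi(x_j)+2\nabla\phi(x_j)\cdot\nabla_{x_j}\bigr)$ and $[\cV,\nabla_{x_j}]=-\nabla V(x_j)$, the commutator is $i[\cV,A]=-2\sum_j\nabla\phi(x_j)\cdot\nabla V(x_j)$, not $+2$. The same minus sign appears in $i[V_{ee},A]$. The correct sign gives $-2\int\rho\nabla V\cdot\nabla\phi=\langle-2K_\rho V,\phi\rangle$ directly.

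As you wrote it, $2\int\rho\nabla V\cdot\nabla\phi=\langle 2K_\rho V,\phi\rangle$, and the same slip would flip both $V_{ee}$ terms. So ``after accounting for signs'' is covering a genuine mistake, not a convention.
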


In the TDDFT literature, the term \emph{force-balance equation} is sometimes also known as the (divergence of the) \emph{local force equation} and either refers sometimes to an equation satisfied by $\partial_t j_\Psi$ instead~\cite{TchPenTheRugRub-19,TanPenLaeCsiRugRub-24,TanPenRugRub-25}. 
We will refer to \eqref{eq:partialt^2rho} as the \emph{force-balance equation}.

This equation is the main equation used both to prove the regularity of $\rho$ in \cref{lem:Psi_to_density_C^2} and to give an explicit expression for the potential $V$ to use when studying the inverse problem in \cref{thm.main.V}. 
What van Leeuwen realized in \cite{vanLeeuwen-99} is that \eqref{eq:partialt^2rho} yields an equation for $V$ that reproduces the prescribed $\rho$ if we equate the left-hand-side with $\partial_t^2 \rho$ and replace $K_{\rho_{\Psi(t)}}$ by $K_{\rho(t)}$; see also \cite{MaiTodWooBur-10}. We use a slightly different formulation than \cite{vanLeeuwen-99} and also replace $\Delta^2\rho_{\Psi(t)}$ by $\Delta^2\rho (t)$, which at consistency of course yields the same. 
This is a crucial replacement, as it allows us to avoid handling fourth order derivatives of $\Psi$.

\subsection{Derivation (Proof of \texorpdfstring{\cref{prop.force.balance}}{Proposition~\ref*{prop.force.balance}})}
We give now the 
\begin{proof}[Proof of \cref{prop.force.balance}]
Recall that $\Psi\in C^1([0,T], L^2)$ whence $\rho_\Psi\in C^1([0,T], L^1)$ with
\begin{equation*}
\begin{split}
    \partial_t\rho_\Psi
  &= 2 N \Re\int \rd X\, \overline{\Psi(x,X)}\partial_t \Psi(x,X)
= 2 N \Re\int \rd X\, \overline{\Psi(x,X)} i\Delta_x\Psi(x,X)\\
&= - 2 N \nabla\cdot \Im\int \rd X\, \overline{\Psi(x,X)} \nabla_x\Psi(x,X)
=-\nabla \cdot j_{\Psi(t)}\\
\end{split}
\end{equation*}
for $j_{\Psi(t)} = 2 N \Im\int \rd X\, \overline{\Psi(x,X)} \nabla_x\Psi(x,X) = -2 N \Im\int \rd X\, \overline{\nabla_x\Psi(x,X)} \Psi(x,X)$.
As the second time-derivative formally contains higher order derivatives of $\Psi$, we cannot differentiate classically again. However, after integrating $\partial_t\rho_\Psi$ against a smooth test function $h \in C_c^\infty(\Lambda,\R)$, we have
\begin{align*}
\int \rd x \, \partial_t\rho_\Psi(x) h(x)
& = -\int \rd x \, \nabla \cdot j_{\Psi(t)} (x)h(x)
\\ & 
= -2 N \Im\int_{\Lambda^N} 
\rd x \, \rd X \, 
\overline{\nabla_x\Psi(x,X)}\Psi(x,X)\nabla h(x).
\end{align*}
This is now a differentiable function in time as can be seen by viewing it as an $L^2$-pairing between $\nabla_x\Psi$ and $\Psi(x,X)\nabla h(x)$. Indeed, recall from \cref{lem.SE.solvable} that $\Psi\in C^0([0,T];H^2) \cap C^1([0,T]; L^2)$ and so the second function $\Psi(x,X)\nabla h(x)$  has the same regularity. The first satisfies  $\nabla_x\Psi\in C^0([0,T];H^1) \cap C^1([0,T]; H^{-1})$, but as its time-derivative is integrated against an $H^2$ function, the pairing is well-defined. Thus,
\begin{align*}
  & \int \rd x  \, \partial_t^2\rho_\Psi(x) h(x)
  \\*
   & \quad 
   =-2 N \Im\int\overline{\nabla_x\partial_t\Psi(x,X)}\Psi(x,X)\nabla h(x)
   -2 N \Im\int\overline{\nabla_x\Psi(x,X)}\partial_t\Psi(x,X)\nabla h(x)\\
    & \quad = 2 N \Im\int\overline{\partial_t\Psi(x,X)}\nabla_x\cdot(\Psi(x,X)\nabla h(x))
      -2 N \Im\int\overline{\nabla_x\Psi(x,X)}\partial_t\Psi(x,X)\nabla h(x)\\
  & \quad = 2N \Im\int 
  \overline{\partial_t\Psi(x,X)}\Psi(x,X) 
  \Delta h(x) 
  +4 N \Im\int
  \overline{\partial_t\Psi(x,X)}\nabla_x \Psi(x,X) 
  \nabla h(x)\\
    & \quad = 2N \Re\int 
  \overline{ \cH_V\Psi(x,X)}\Psi(x,X) 
  \Delta h(x) 
  +4 N \Re\int
  \overline{ \cH_V\Psi(x,X)}\nabla_x \Psi(x,X) 
  \nabla h(x).
\end{align*}

Note that the parts of $V$ and $V_{ee}$ not involving $x$ as well as the Laplacian with respect to all other coordinates cancel out; take for instance 
\begin{multline*}
  2N \Re\int 
  \overline{ V(x_2)\Psi(x,X)}\Psi(x,X) 
  \Delta h(x) 
  +4 N \Re\int
  \overline{ V(x_2)\Psi(x,X)}\nabla_x \Psi(x,X) 
  \nabla h(x)
  \\
    =  2N \Re\int   V(x_2)
  \abs{\Psi(x,X)}^2 
  \Delta h(x) 
  +2 N \Re\int  V(x_2)
   \nabla_x\abs{\Psi(x,X)}^2 
  \nabla h(x) =0.
\end{multline*}
Hence,
\begin{equation*}
\begin{split}
  & \int \rd x \,  \partial_t^2\rho_\Psi(x) h(x)\\
    & \quad = 2N \Re\int 
  \overline{ -\Delta_x\Psi(x,X)}\Psi(x,X) 
  \Delta h(x) 
  +4 N \Re\int
  \overline{ -\Delta_x\Psi(x,X)}\nabla_x \Psi(x,X) 
  \nabla h(x)\\
  &\qquad +2N \Re\int 
  V(x) |\Psi(x,X)|^2
  \Delta h(x) 
  +4 N \Re\int
  V(x) \overline{ \Psi(x,X)}
  \nabla_x \Psi(x,X) 
  \nabla h(x)\\
   &\qquad +2N(N-1) 
   \int
   %\Re\int 
  V_{ee}(x-y)
  |\Psi(x,y,Y)|^2
  \Delta h(x) \\
  &\qquad +4 N (N-1)\Re\int
  V_{ee}(x-y)\overline{ \Psi(x,y,Y)}\nabla_x \Psi(x,y,Y) 
  \nabla h(x) . 
\end{split}
\end{equation*}
First, we rewrite the term in the first line by integration by parts.
It equals
\begin{align*}
  &-2N \Re \sum_{j,k=1}^D\int \left(
  \partial^j\partial^j\overline{ \Psi(x,X)}\Psi(x,X) 
  \partial^k\partial^k h(x) 
  +2 \partial^j\partial^j
  \overline{ \Psi(x,X)} \partial^k \Psi(x,X) 
  \partial^k h(x)
  \right)\\
    &\quad =-2N \Re \sum_{j,k=1}^D\int \left(
  \frac{1}{2}
  \abs{ \Psi(x,X)}^2 
  \partial^j\partial^j\partial^k\partial^k h(x) 
  -\abs{\partial^j \Psi(x,X)}^2 
  \partial^k\partial^k h(x) 
  \right)\\*
    &\qquad +2N \Re \sum_{j,k=1}^D\int \left(
 2 \partial^j
  \overline{ \Psi(x,X)} \partial^k \Psi(x,X) 
  \partial^j \partial^k h(x)
  +
  \partial^k\abs{\partial^j \Psi(x,X)}^2 
   \partial^k h(x)
  \right)\\
  &\quad = -\int \rd x \,  \rho_\Psi(x) \Delta^2 h(x) 
  +4N  \sum_{j,k=1}^D\int 
 \partial^j
  \overline{ \Psi(x,X)} \partial^k \Psi(x,X) 
  \partial^j \partial^k h(x).
\end{align*}
Next, we note that the term involving $V$ is
\begin{multline*}
  2N \int V(x)
  \abs{ \Psi(x,X)}^2 
  \Delta h(x) 
  +2 N \int
  V(x)\nabla_x \abs{ \Psi(x,X)}^2 
  \nabla h(x)\\
  =
  -2 \int 
  \nabla V(x) \rho_\Psi(x) 
  \nabla h(x)
  =-2\int K_{\rho_\Psi} V(x) h(x), 
\end{multline*}
where $K_{\rho_\Psi} f 
    = -\nabla\cdot (\rho_\Psi \nabla f) = -\rho_\Psi\Delta f - \nabla \rho_\Psi \cdot \nabla f$ is as defined in \eqref{eqn.def.Krho}.
Similarly, the term involving $V_{ee}$ is
\begin{multline*}
    2 N(N-1) \int V_{ee}(x-y) \left( |\Psi(x,y,Y)|^2 \Delta h(x) + \nabla_x |\Psi(x,y,Y)|^2 \nabla h(x)\right)
    \\
    = 2 \iint \rd x \, \rd y \, V_{ee}(x-y) \left( \rho^{(2)}_\Psi(x,y) \Delta h(x) +  \nabla \rho^{(2)}_\Psi(x,y) \nabla h(x)\right). 
\end{multline*}
This concludes the proof.
\end{proof}

\subsection{Bounding the nonlinear terms}
For the analysis of the force-balance equation \eqref{eq:partialt^2rho}, we first give a bound on the term $Q_\Psi$, showing that this distribution is in fact an analytic function in our setting. It will be convenient to replace one of the two occurrences of $\Psi$ by a possibly different function (recall $\rho_\Psi^{(2)} = N(N-1) \int_{\Lambda^{N-2}} \rd Y \, |\Psi(x,y,Y)|^2$). 
More concretely, we prove

\begin{lemma}[{The bilinear $Q_\Psi$-term}]
\label{lemma:boundspreliminaryTi}
Let $\Psi_1,\Psi_2\in \cB_\sigma$ and define, using $ \partial^i =  \partial_{x^i}$, for $i\in \{1,\ldots,D\}$, 
\begin{align}
    Q_{\Psi_1,\Psi_2}(x) 
    & = 4N\sum_{i,j=1}^D \partial^i\partial^j\int \rd X \,\overline{\partial^i \Psi_1(x,X)} \partial^j\Psi_2(x,X)
    \nn \\* & \quad 
	+ 2N(N-1)\Delta_x  \iint \rd y \, \rd Y \, \overline{\Psi_1(x,y,Y)}\Psi_2(x,y,Y) V_{ee}(x-y)
    \nn \\* & \quad 
    -2 N(N-1) \sum_{j=1}^D \partial^j \iint \rd y \, \rd Y \,  \partial^j \left(\overline{\Psi_1(x,y,Y)}  \Psi_2(x,y,Y)\right)    V_{ee}(x-y).
    \label{eq:T_Psi_chi}
\end{align}
(Note that $Q_{\Psi,\Psi} = Q_\Psi$.) 
Then, for any $s\geq 0$, 
\begin{equation*}
  \norm{(1-\Delta)^{s/2-1} Q_{\Psi_1,\Psi_2}}_{\sigma} 
  \leq C \norm{\pscal{\cP}^{s+1}\Psi_1}_{\sigma}\norm{\pscal{\cP}\Psi_2}_{\sigma}
  +C \norm{\pscal{\cP}\Psi_1}_{\sigma}\norm{\pscal{\cP}^{s+1}\Psi_2}_{\sigma}
  .
\end{equation*}
\end{lemma}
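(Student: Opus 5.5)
We split $Q_{\Psi_1,\Psi_2}=Q^{(1)}+Q^{(2)}+Q^{(3)}$ according to the three lines of \eqref{eq:T_Psi_chi}. In each term the outer derivatives ($\partial^i\partial^j$, $\Delta_x$, or $\partial^j$) act on a one-body function of $x$ alone. In Fourier they give a factor at most $|p|^2$, or $|p|$ for $Q^{(3)}$, and this is absorbed by $(1-\Delta)^{-1}$. It therefore suffices to bound, in the norm $\norm{(1-\Delta)^{s/2}\,\cdot\,}_\sigma$, the inner one-body objects. These are
\[
g_{ij}(x)=\int \rd X\,\overline{\partial^i\Psi_1(x,X)}\,\partial^j\Psi_2(x,X),
\qquad
h(x)=\iint \overline{\Psi_1}\Psi_2\,V_{ee}(x-y),
\]
together with $\iint \partial^j(\overline{\Psi_1}\Psi_2)V_{ee}$, for which we only need the weaker weight $(1-\Delta)^{s/2-1/2}$.

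For $g_{ij}$ we repeat the proof of \cref{lem:Psi_to_density_B_sigma} with $\Psi_1,\Psi_2$ replaced by $\partial^i_{x_1}\Psi_1$ and $\partial^j_{x_1}\Psi_2$. We split $\pscal{p}^{2+s}\lesssim \pscal{P}^{2+s}+\pscal{p+P}^{2+s}$ and use the triangle inequality for $e^{\sigma|\cdot|}$. In the Cauchy--Schwarz step we insert $\pscal{p+P}^2/\pscal{p+P}^2$ so that the sum in $p$ converges for $D\le 3$. This gives a bound of the form
\[
\norm{\tfrac{\pscal{\cP}^{2+s}}{1+\cT}\nabla_{x_1}\Psi_1}_\sigma\,\norm{\tfrac{\pscal{\cP}^{2}}{1+\cT}\nabla_{x_1}\Psi_2}_\sigma+(1\leftrightarrow2).
\]
The point is that $|p_1|\pscal{\cP}^2/(1+\cT)\lesssim\pscal{\cP}$ is \emph{not} true, and only $\pscal{\cP}^2/(1+\cT)\le C$ holds. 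So we rebalance. We spend one power of $\pscal{P}$ differently: use $\pscal{p}^{s+1}$ with an extra $\pscal{p}$ taken from the $(1-\Delta)^{-1}$ budget. We bound $|p_1|\pscal{\cP}/(1+\cT)^{1/2}$-type multipliers by $\pscal{\cP}\cdot(1+\cT)^{1/2}/(1+\cT)^{1/2}$, and note that the $H^2$ weight in $\norm{\cdot}_\sigma$ supplies $(1+\cT)$. Concretely the target is
\[
\norm{(1-\Delta)^{s/2}g_{ij}}_{\sigma}\lesssim \norm{\pscal{\cP}^{s+1}\Psi_1}_\sigma\norm{\pscal{\cP}\Psi_2}_\sigma+(1\leftrightarrow 2),
\]
where the net weight $\pscal{p}^{s}$ rather than $\pscal{p}^{s+2}$ leaves a full $(1+\cT)$ to absorb $|p_1|^2$ across the two factors. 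In the Cauchy--Schwarz step we keep $\pscal{\vec p}^2$ and $\pscal{\vec k}^2$ from the norms: one copy of $\pscal{\vec k}$ controls $|k_1|$, and the other is used, as in \cref{lemma:multiplication}, for summability in $p$ via $\pscal{p+K}^{-2}$, paired with an extra $\pscal{p}^{-2}$ obtained from $(1-\Delta)^{-1}$.

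For the $V_{ee}$ terms we go to Fourier in the pair $(x,y)$. The $x$-Fourier coefficient of $h$ is a convolution of $\hat V_{ee}$ with the $(p,q)$-coefficient of $F(x,y)=\int\rd Y\,\overline{\Psi_1}\Psi_2$, evaluated on the diagonal. We bound
\[
\abs{\hat h(p)}\le\sum_{k}\abs{\hat V_{ee}(k)}\,\abs{\hat F(p-k,k)},
\]
and apply Cauchy--Schwarz in $k$ with $\hat V_{ee}\in\ell^2$. This requires $\sum_k\pscal{\ldots}^{-2}$-type control of $\hat F(p-k,k)$, which costs derivatives in the $y$-variable (i.e.\ $x_2$). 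These are supplied by the $(1+\cT)$ weight, an analogue of Hardy/Sobolev saying that $V_{ee}\in L^2$ is $\cT$-bounded. The analytic weight is handled via $e^{\sigma|p|}\le e^{\sigma|P|}e^{\sigma|p+P|}$ exactly as before, since $\hat V_{ee}$ carries no weight: the weight is put on the total momenta of $\Psi_1,\Psi_2$. The $\partial^j$ inside the third term falls on one of the $\Psi$'s and is handled as for $g_{ij}$, using the extra inverse derivative available there.

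The main obstacle is this bookkeeping. The total budget of derivatives is $\pscal{p}^{s}$ on the output (after $(1-\Delta)^{-1}$) plus two $H^2$ weights on the inputs. We must distribute it so that all non-center-of-mass derivatives ($|p_1|$, $|k_1|$, and the derivatives needed against $V_{ee}$) are paid for by the $(1+\cT)$ factors, while only powers of $\pscal{\cP}$ appear beyond the $\norm{\cdot}_\sigma$ norm, and the residual sums in $p$ and $k$ stay convergent for $D\le3$. We would first check the case $s=0$ carefully, and then get general $s$ from the splitting $\pscal{p}^s\lesssim\pscal{P}^s+\pscal{p+P}^s$.
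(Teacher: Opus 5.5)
Your first step for the $\partial^i\partial^j$-term matches the paper's. \cref{lem:Psi_to_density_B_sigma} applied to $\partial^i\Psi_1,\partial^j\Psi_2$ bounds $\norm{(1-\Delta)^{s/2}g_{ij}}_\sigma$ by $\norm{\frac{\pscal{\cP}^{2+s}}{1+\cT}\partial^i\Psi_1}_\sigma\norm{\frac{\pscal{\cP}^{2}}{1+\cT}\partial^j\Psi_2}_\sigma$ plus the symmetric term. You then assert that $\abs{p_1}\pscal{P}^2/(1+\abs{\vec p}^2)\lesssim\pscal{P}$ fails, but it holds. By Cauchy--Schwarz, $\pscal{P}^2\leq N(1+\abs{\vec p}^2)$, and $\abs{p_1}\leq(1+\abs{\vec p}^2)^{1/2}$, so $\abs{p_1}\pscal{P}\leq\sqrt{N}(1+\abs{\vec p}^2)$. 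In other words, $\frac{\pscal{\cP}}{1+\cT}\partial^i$ is a bounded Fourier multiplier. Pulling $\pscal{\cP}^{1+s}$, resp.\ $\pscal{\cP}$, out of the two factors gives your target bound for $g_{ij}$ at once, and this is exactly how the paper concludes. The ``rebalancing'' you propose instead is not sound. The factor $(1-\Delta)^{-1}$ is entirely used up by the outer $\partial^i\partial^j$: the weight landing on $g_{ij}$ is $\pscal{p}^{s+2}$, not $\pscal{p}^{s}$, because $\norm{\cdot}_\sigma$ already contains the $H^2$ weight. So there is no spare $\pscal{p}^{-2}$ to pair with $\pscal{p+K}^{-2}$.

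For the $V_{ee}$-terms the proposal stops short of a proof. You sketch a Cauchy--Schwarz in $k$ against $\hat V_{ee}\in\ell^2$ and leave the distribution of derivatives as ``the main obstacle''. The missing idea is to split $V_{ee}=V_{ee}^+-V_{ee}^-$ and put one factor $\sqrt{V_{ee}^\pm(x_1-x_2)}$ on each wavefunction. Then, for example, $2N(N-1)\iint\overline{\Psi_1}\Psi_2V_{ee}^+(x-y)=2(N-1)\rho_{\sqrt{V_{ee}^+}\Psi_1,\sqrt{V_{ee}^+}\Psi_2}$ is again a generalized density, and \cref{lem:Psi_to_density_B_sigma} applies verbatim. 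This multiplication operator is translation invariant, so it commutes with $e^{\sigma\abs{\cP}}\pscal{\cP}^m$. The form bound $V_{ee}^+\leq C\norm{V_{ee}}_{L^2}(1-\Delta)$ (H\"older and $H^1\subset L^4$ for $D\leq 3$) then costs only $\sqrt{1+\cT}$; for instance $\norm{\frac{\pscal{\cP}^{2}}{1+\cT}\sqrt{V_{ee}^+}\Psi_2}_\sigma\leq C\norm{\sqrt{1+\cT}\,e^{\sigma\abs{\cP}}\pscal{\cP}^2\Psi_2}_{L^2}\leq C\norm{\pscal{\cP}\Psi_2}_\sigma$.

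For the third term, ``handled as for $g_{ij}$'' is not enough. Take $A=\sqrt{V_{ee}^+}\Psi_1$ and $B=\sqrt{V_{ee}^+}\partial^i\Psi_2$. Every term of \cref{lem:Psi_to_density_B_sigma} puts at least $\frac{\pscal{\cP}^2}{1+\cT}$ on $B$. Since $B$ already costs $\sqrt{1+\cT}$ and $\abs{p_1}$, this is only bounded by $\norm{\pscal{\cP}^2\Psi_2}_\sigma$, one power of $\pscal{\cP}$ more than claimed. For $s=0$ this cannot be recovered by interpolation, and that is the case used later to get $\norm{K_\rho^{-1}Q_{\Psi_1,\Psi_2}}_{\sigma'}\leq C\nnorm{\Psi_1}_{\sigma'}\nnorm{\Psi_2}_{\sigma'}$. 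The paper instead redoes the density estimate with the splitting $\pscal{P}\leq\pscal{p+P}$ versus $\pscal{P}>\pscal{p+P}$, keeping the gain $\pscal{p+P}^{-1}$ in the second region. That way $B$ only ever carries $\frac{\pscal{\cP}^{s+1}}{1+\cT}$ or $\frac{\pscal{\cP}}{1+\cT}$.
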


\begin{proof}
Define 
\begin{align*}
    q_1^{ij}(\Psi_1, \Psi_2) 
    & = 4N \int \rd X \, \partial^i \overline{\Psi_1(x,X)} \partial^j \Psi_2(x,X), 
    \\
    q_2(\Psi_1, \Psi_2) & = {2N(N-1)}\iint \rd y \,\rd Y  \,  \overline{\Psi_1(x,y,Y)} \Psi_2(x,y,Y) V_{ee}(x-y) , 
    \\
    q_3^i(\Psi_1, \Psi_2) & = \frac{{2 N(N-1)}}{\sqrt{1-\Delta_x}} \iint \rd y \, \rd Y \, \overline{\Psi_1(x,y,Y)} \partial^i \Psi_2(x,y,Y)    V_{ee}(x-y), 
\end{align*}
and note that 
\begin{multline*}
    Q_{\Psi_1,\Psi_2}
    = \sum_{i,j=1}^D \partial^i \partial^j q_1^{ij}(\Psi_1,\Psi_2) +   \Delta_x q_2(\Psi_1,\Psi_2)
    \\*
    -  \sum_{i=1}^D (1-\Delta_x)^{1/2} \partial^i \left( q_3^i(\Psi_1,\Psi_2) + \overline{q_3^i(\Psi_2,\Psi_1)}\right). 
\end{multline*}
We prove that, for $s\geq 0$ and $k=1,2,3$, 
\begin{equation*}
  \norm{(1-\Delta)^{s/2} q_k (\Psi_1, \Psi_2)}_{\sigma} 
  \leq C \norm{\pscal{\cP}^{s+1}\Psi_1}_{\sigma}\norm{\pscal{\cP}\Psi_2}_{\sigma}
  +C \norm{\pscal{\cP}\Psi_1}_{\sigma}\norm{\pscal{\cP}^{s+1}\Psi_2}_{\sigma}
  ,
\end{equation*}
where $q_1$ means any of the $q_1^{ij}$ and analogously $q_3$ refers to any $q_3^i$. 
From this the desired immediately follows.

For $q_1^{ij}$ we apply \cref{lem:Psi_to_density_B_sigma} noting that
$q_1^{ij}= 4\rho_{\partial^i\Psi_1, \partial^j \Psi_2}$ (clearly, \cref{lem:Psi_to_density_B_sigma} applies also to this setting)
so that
\begin{multline*}
\norm{(1-\Delta)^{s/2} q_1^{ij}}_\sigma
\leq
C \norm{\frac{\pscal{\cP}^{2+s}}{1+\cT}\partial^i\Psi_1}_\sigma \norm{\frac{\pscal{\cP}^2}{1+\cT}\partial^j \Psi_2}_\sigma
    \\* 
    +C \norm{\frac{\pscal{\cP}^2}{1+\cT}\partial^i\Psi_1}_\sigma \norm{\frac{\pscal{\cP}^{2+s}}{1+\cT}\partial^j \Psi_2}_\sigma.
\end{multline*}
The claim now follows immediately as $\frac{\pscal{\cP}}{1+\cT}\partial^i$ is a bounded Fourier multiplier.

For $q_2$ we split the potential into positive and negative parts $V_{ee} = V_{ee}^+ - V_{ee}^-$ and correspondingly $q_2 = q_2^+ - q_2^-$. We only consider $q_2^+$, as $q_2^-$ is handled analogously. Note that $q_2^+ = 2(N-1) \rho_{ \sqrt{V_{ee}^+}\Psi_1, \sqrt{V_{ee}^+}\Psi_2}$ so that, by \cref{lem:Psi_to_density_B_sigma}, 
\begin{multline*}
  \norm{(1-\Delta)^{s/2} q_2^+}_\sigma
\leq
C \norm{\frac{\pscal{\cP}^{2+s}}{1+\cT}\sqrt{V_{ee}^+}\Psi_1}_\sigma \norm{\frac{\pscal{\cP}^2}{1+\cT}\sqrt{V_{ee}^+}\Psi_2}_\sigma
   \\* 
   +C \norm{\frac{\pscal{\cP}^2}{1+\cT}\sqrt{V_{ee}^+}\Psi_1}_\sigma \norm{\frac{\pscal{\cP}^{2+s}}{1+\cT}\sqrt{V_{ee}^+}\Psi_2}_\sigma
.
\end{multline*}
We observe that the first term (the other one is the same up to swapping $\Psi_1$ and $\Psi_2$) satisfies
\begin{align*}
& \norm{\frac{\pscal{\cP}^{2+s}}{1+\cT}\sqrt{V_{ee}^+}\Psi_1}_\sigma
\norm{\frac{\pscal{\cP}^2}{1+\cT}\sqrt{V_{ee}^+}\Psi_2}_\sigma
   \\ & \quad 
   =
 \norm{e^{\sigma\abs{\cP}}\pscal{\cP}^{2+s}\sqrt{V_{ee}^+}\Psi_1}_{L^2} 
 \norm{e^{\sigma\abs{\cP}}\pscal{\cP}^{2}\sqrt{V_{ee}^+}\Psi_2}_{L^2}
 \\
     & \quad 
     =
 \norm{\sqrt{V_{ee}^+}e^{\sigma\abs{\cP}}\pscal{\cP}^{2+s}\Psi_1}_{L^2} 
 \norm{\sqrt{V_{ee}^+}e^{\sigma\abs{\cP}}\pscal{\cP}^{2}\Psi_2}_{L^2}\\
      &\quad 
      \leq C 
 \norm{\sqrt{1+\cT}e^{\sigma\abs{\cP}}\pscal{\cP}^{2+s}\Psi_1}_{L^2} 
 \norm{\sqrt{1+\cT}e^{\sigma\abs{\cP}}\pscal{\cP}^{2}\Psi_2}_{L^2}\\
  &\quad 
  \leq C 
 \norm{\pscal{\cP}^{1+s}\Psi_1}_{\sigma}\norm{\pscal{\cP}\Psi_2}_{\sigma} ,
\end{align*}
where we used the fact that the translation-invariant multiplication operator $\sqrt{V_{ee}^+}$ commutes with $\cP$ and then used the form bound $V_{ee}^+ \le C 
 \norm{V_{ee}}_{L^2} 
(1-\Delta)$
which follows from 
$
\pscal{f, V_{ee}^+f} \leq \norm{V_{ee}^+}_{L^2} \norm{f}_{L^4}^{2}  \leq C \norm{V_{ee}}_{L^2} \norm{ f}_{H^1}^2
$
by the Sobolev inequality \cite[Thms~8.3 and 8.5]{LieLos-01}.

For $q_3$ we again write $V_{ee} = V_{ee}^+ - V_{ee}^-$ and correspondingly $q_3 = q_3^+ - q_3^-$. 
Let $A = \sqrt{V_{ee}^+}\Psi_1$ and $B = \sqrt{V_{ee}^+}\partial^i \Psi_2$ so that $q_3^+ = 2(N-1) (1-\Delta)^{-1/2}\rho_{A,B}$
and 
\begin{equation*}
\norm{(1-\Delta)^{s/2} q_3^+}_\sigma^2 
= \sum_{p\in 2\pi\Z^D} \left(e^{\sigma\abs{p}} \pscal{p}^{s+2}\abs{\hat q_3^+ (p)}\right)^2 
= \sum_{p\in 2\pi\Z^D} \left(e^{\sigma\abs{p}} \pscal{p}^{s+1}\abs{\hat \rho_{A,B}(p)}\right)^2.
\end{equation*}
The proof now mirrors that of \cref{lem:Psi_to_density_B_sigma} with minor changes, since we wish to compensate for the presence of the additional derivative in the expression $B$. 
First, we have the bound 
\begin{multline*}
   e^{\sigma\abs{p}} \pscal{p}^{s+1}\abs{\hat \rho_{A,B}(p)} 
   \leq C\sum_{p_i\in 2\pi\Z^D: \pscal{P} \leq \pscal{p+P}}  
   e^{\sigma\abs{P}} \abs{\hat A(\vec p)}  
\pscal{p+P}^{s+1} e^{\sigma\abs{p+P}} \abs{\hat B(p+\vec p)}
\\*
+C\sum_{p_i\in 2\pi\Z^D: \pscal{P} > \pscal{p+P}}  
  e^{\sigma\abs{P}} 
\abs{\hat A(\vec p)} \pscal{P}^{s+2} \pscal{p+P}^{-1} e^{\sigma\abs{p+P}}\abs{\hat B(p+\vec p)}.
\end{multline*}
The first term corresponds to the second one in \cref{lem:Psi_to_density_B_sigma} up to replacing $s+2$ by $s+1$. 
For the second term we insert $\frac{\pscal{p+P}^2}{\pscal{p+P}^2}$ in order to obtain summability in the one free variable. Here the weight $\pscal{p+P}^{-1}$ obtained in the splitting is crucial as it turns this into a first power.
More precisely, with Cauchy--Schwarz the contribution from the first term is bounded by
\begin{equation*}
\begin{split}
  \sum_{p,p_i,k_i\in 2\pi\Z^D} 
  e^{2\sigma\abs{K}}
\abs{\hat A(\vec k)}^2
\frac{\pscal{K}^4}{\pscal{P}^4}\pscal{p+P}^{2s+2}e^{2\sigma\abs{p+P}}  \abs{\hat B(p+\vec p)}^2\\
\leq C  \norm{\frac{\pscal{\cP}^2}{1+\cT}A}_\sigma^2 \norm{\frac{\pscal{\cP}^{s+1}}{1+\cT}B}_\sigma^2
\end{split}
\end{equation*}
whereas the one from the second term is
\begin{equation*}
\begin{split}
  \sum_{p,p_i,k_i\in 2\pi\Z^D} 
  e^{2\sigma\abs{K}} \abs{\hat A(\vec k)}^2 
\pscal{K}^{2s+4}\frac{\pscal{p+P}^{2}}{\pscal{p+K}^{4}} e^{2\sigma\abs{p+P}}\abs{\hat B(p+\vec p)}^2\\
\leq C  \norm{\frac{\pscal{\cP}^{s+2}}{1+\cT}A}_\sigma^2 \norm{\frac{\pscal{\cP}}{1+\cT}B}_\sigma^2.
\end{split}
\end{equation*}
Thus, we arrive at
\begin{align*}
 \norm{(1-\Delta)^{s/2} q_3^+}_\sigma
  & \leq C \norm{\frac{\pscal{\cP}^2}{1+\cT}\sqrt{V_{ee}^+}\Psi_1}_\sigma \norm{\frac{\pscal{\cP}^{s+1}}{1+\cT} \sqrt{V_{ee}^+}\partial^i\Psi_2}_\sigma
  \\* & \quad 
  +  C\norm{\frac{\pscal{\cP}^{s+2}}{1+\cT}\sqrt{V_{ee}^+}\Psi_1}_\sigma \norm{\frac{\pscal{\cP}}{1+\cT} \sqrt{V_{ee}^+}\partial^i\Psi_2}_\sigma
  \\
   & 
   \leq C  \norm{\sqrt{V_{ee}^+}e^{\sigma\abs{\cP}}\pscal{\cP}^{2}\Psi_1}_{L^2}
 \norm{\sqrt{V_{ee}^+}e^{\sigma\abs{\cP}}\pscal{\cP}^{s+1}\partial^i\Psi_2}_{L^2}
 \\* & \quad 
  +  C \norm{\sqrt{V_{ee}^+}e^{\sigma\abs{\cP}}\pscal{\cP}^{s+2}\Psi_1}_{L^2}
 \norm{\sqrt{V_{ee}^+}e^{\sigma\abs{\cP}}\pscal{\cP}\partial^i\Psi_2}_{L^2}
 \\
 &
 \leq C \norm{V_{ee}}_{L^2}
 \norm{\pscal{\cP}\Psi_1}_{\sigma}\norm{\pscal{\cP}^{s+1}\Psi_2}_{\sigma}
\\* & \qquad 
 + C \norm{V_{ee}}_{L^2}
 \norm{\pscal{\cP}^{s+1}\Psi_1}_{\sigma}\norm{\pscal{\cP}\Psi_2}_{\sigma}
\end{align*}
similarly to the $q_2$ case, and where we used that $\frac{\pscal{\cP}}{\sqrt{1+\cT}}$ and $\frac{\partial^i}{\sqrt{1+\cT}}$ are bounded Fourier multipliers. The term $q_3^-$ is handled in the same way. 
We conclude the desired.
\end{proof}

\subsection{Time-regularity of \texorpdfstring{$\rho$}{rho} (Proof of \texorpdfstring{\cref{lem:Psi_to_density_C^2}}{Lemma~\ref*{lem:Psi_to_density_C^2}})}
\label{sec.proof.rho.C2}
With \eqref{eq:partialt^2rho} and the bounds in \cref{lemma:boundspreliminaryTi} we can now give the 

\begin{proof}[Proof of \cref{lem:Psi_to_density_C^2}]
Recall that we want to show that $\rho_{\Psi(t)}\in C^2([0,T];\cA_{\sigma'})$ for fixed $0<\sigma' < \sigma$.
We begin with
\begin{align*}
    \partial_t \rho_{\Psi(t)} 
    =- \nabla \cdot j_{\Psi(t)} 
   & = - 2 N \sum_{k=1}^D \partial^k \Im\int \rd X\, \overline{\Psi(t,x,X)} \partial^k\Psi(t,x,X)\\
    &= - 2 \Im \sum_{k=1}^D \partial^k \rho_{\Psi(t),\partial^k\Psi(t)}, 
\end{align*}
where we used the continuity equation from \cref{prop.force.balance} and recall $\partial^k = \partial_{x^k}$. With \cref{lem:Psi_to_density_B_sigma} and using $\tilde \sigma =\frac{\sigma+\sigma'}{2}$ it follows immediately
\begin{align*}
\norm{\partial_t \rho_{\Psi(t)}}_{\sigma'}
&\leq  \frac{C}{\sigma-\tilde\sigma} \sum_{k=1}^D\norm{\rho_{\Psi(t),\partial^k\Psi(t)}}_{\tilde\sigma}
\leq  \frac{C}{\sigma-\tilde\sigma} \sum_{k=1}^D\norm{\frac{\pscal{\cP}^{2}}{1+\cT}\Psi(t)}_{\tilde\sigma} \norm{\frac{\pscal{\cP}^2}{1+\cT}\partial^k\Psi(t)}_{\tilde\sigma}\\
&\leq  \frac{C}{(\sigma-\sigma')^2} \norm{\Psi(t)}_\sigma^2, 
\end{align*}
where we also used \cref{lem.bdd.gradient} in the last inequality.

Next, we consider the second derivative.
Recalling the formula in \eqref{eq:partialt^2rho} for $\partial_t^2\rho_\Psi$ we bound the terms on the right-hand-side in $\cA_{\sigma'}$ as follows.
We have
$$
\norm{\Delta^2\rho_\Psi}_{\sigma'} \leq \frac{C}{(\sigma-\sigma')^4} \norm{\rho_\Psi}_{\sigma}
\leq \frac{C}{(\sigma-\sigma')^4} \norm{\Psi}_{\sigma}^2, 
$$
where we used \cref{lem.bdd.gradient} and \cref{lem:Psi_to_density_B_sigma}.
Next, using also the product rule (\cref{lemma:multiplication})
and using again $\tilde\sigma = (\sigma' + \sigma)/2$, 
\begin{multline*} 
\norm{K_{\rho_\Psi} V}_{\sigma'}
\leq \frac{C}{\sigma-\tilde \sigma} \norm{\rho_\Psi}_{\tilde \sigma} \norm{ \nabla V}_{\tilde \sigma}
\leq \frac{C}{(\sigma-\tilde \sigma)(\tilde \sigma-\sigma')} \norm{\rho_\Psi}_{\tilde \sigma} \norm{V}_{\sigma}
\\
\leq \frac{C}{(\sigma-\sigma')^2} \norm{V}_{\sigma} \norm{\Psi}_{\sigma}^2.
\end{multline*}
We bound $Q_\Psi$ using \cref{lemma:boundspreliminaryTi} with $\Psi_1=\Psi_2=\Psi$. 
This gives (using again also \cref{lem.bdd.gradient})
\begin{equation*}
  \norm{Q_{\Psi}}_{\sigma'} 
  \leq C \norm{\pscal{\cP}^{3}\Psi}_{\sigma'}\norm{\pscal{\cP}\Psi}_{\sigma'} 
  \leq\frac{C}{(\sigma-\sigma')^4} \norm{\Psi}_\sigma^2. 
\end{equation*}
Putting it all together, we find
$$
\norm{\partial_t^2 \rho_\Psi}_{\sigma'} \leq \frac{C}{(\sigma-\sigma')^{4}} (\norm{V}_{\sigma}+1)
    \norm{\Psi}_{\sigma}^2
    < \infty. 
$$
To see the continuity in time we consider $\partial_t^2\rho_\Psi(t) - \partial_t^2\rho_\Psi(t')$. Writing this using \eqref{eq:partialt^2rho} we arrive at 
\begin{align*}
    \partial_t^2\rho_\Psi(t) - \partial_t^2\rho_\Psi(t')
    & = -\Delta^2(\rho_\Psi(t) - \rho_\Psi(t')) - 2K_{\rho_\Psi(t)}V(t) + 2 K_{\rho_\Psi(t')}V(t')
\\ & \quad 
    + \Re Q_{\Psi(t) + \Psi(t'), \Psi(t) - \Psi(t')}
\end{align*}
where $Q_{\Psi_1,\Psi_2}$ is as in \eqref{eq:T_Psi_chi}. Here we used that $Q_{\Psi,\Psi}=Q_\Psi$ and $\overline{Q_{\Psi_1,\Psi_2}} = Q_{\Psi_2,\Psi_1}$. 
From \cref{lemma:boundspreliminaryTi} the continuity in time follows by a similar computation as the one above using that $\rho, V$, and $\Psi$ are all continuous in time (with values in $\cA_\sigma, \cA_\sigma$, and $\cB_\sigma$, respectively).
\end{proof}

\section{The inverse problem (Proof of \texorpdfstring{\cref{thm.main.V}}{Theorem~\ref*{thm.main.V}})}
In this section we give the proof of the inverse problem, \cref{thm.main.V}. 
We first reinterpret \eqref{eq:partialt^2rho} as an equation for $V$ in terms of $\rho$ and $\Psi$. 
As we are looking for the potential $V$ such that the evolution reproduces a given density $\rho$, we can replace all occurrences of $\rho_\Psi$ with $\rho$ and obtain
\begin{equation}\label{eq:KrhoV}
    K_\rho V = \frac{1}{2}(-\partial_t^2 \rho - \Delta^2 \rho + Q_\Psi),
\end{equation}
where we recall that $K_\rho f = -\nabla\cdot(\rho\nabla f)$ and $Q_\Psi = Q_{\Psi,\Psi}$ is given in \eqref{eqn.def.cT.Psi} (and \eqref{eq:T_Psi_chi}). 
Assuming that $K_\rho$ is invertible (we prove this in \cref{lem.Krho} below;  note that the right-hand-side of \eqref{eq:KrhoV} has zero integral) we thus define the nonlinear potential as 
\begin{equation}\label{eq:VPsi}
    V_\Psi(t) = \frac{1}{2} K_{\rho(t)}^{-1}\left(-\partial_t^2\rho(t) - \Delta^2\rho(t) + Q_{\Psi}\right). 
\end{equation}
The nonlinear Schrödinger equation of interest then reads 
\begin{equation}\label{eq:SEnonlinearV}
    i\partial_t \Psi (t) 
    = \cH_{V_{\Psi(t)}(t)}\Psi(t), 
    \quad \Psi(0) = \Psi_0 . 
\end{equation}
The key step in proving \cref{thm.main.V} is proving that the nonlinear Schrödinger equation \eqref{eq:SEnonlinearV} is solvable. We prove

\begin{proposition}[The nonlinear Schrödinger equation]
\label{prop.solve.NLS}
Let $\rho\in C^2([0,T];\cA_{\sigma})$  with $\int \rho =N$ and $\rho(t)\geq c > 0$ and $\Psi_0\in\cB_\sigma$. 
Then, there exists $0 < T'\leq T$ and $0 < \sigma'\leq \sigma$ 
such that the Schrödinger equation \eqref{eq:SEnonlinearV} admits a unique solution $\Psi\in C^0([0,T'];\cB_{\sigma'})$. 
\end{proposition}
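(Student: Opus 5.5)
We will solve \eqref{eq:SEnonlinearV} by a Banach fixed-point argument in a Baouendi--Goulaouic type space, as announced in the ideas of proof. Fix $\sigma_0\le\sigma$ and $T_0\le T$ to be chosen, set $\sigma(t)=\sigma_0(1-t/T_0)$, and consider the Banach space $X$ of $\Psi\in C^0([0,T_0);L^2_a)$ such that $\Psi(t)\in\cB_{\sigma(t)}$ with norm
\begin{equation*}
\nnorm{\Psi}:=\sup_{0\le t<T_0}\Bigl(1-\frac{t}{T_0}\Bigr)^{\gamma}\norm{\Psi(t)}_{\sigma(t)},
\end{equation*}
for a fixed exponent $0<\gamma<1$, say $\gamma=1/2$. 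We write the equation in Duhamel form,
\begin{equation*}
\Psi(t)=\Phi[\Psi](t):=e^{-it\cH_0}\Psi_0-i\int_0^t e^{-i(t-s)\cH_0}\,\cV_{\Psi(s)}(s)\Psi(s)\,\rd s,
\end{equation*}
where $\cV_\Psi=\sum_j V_\Psi(x_j)$. The key point is that $\cH_0$ commutes with $e^{\sigma|\cP|}$, and by \eqref{eq:compare_cH0_cT} the group $e^{-it\cH_0}$ is uniformly bounded on every $\cB_\sigma$. The free term therefore lies in $X$, with norm controlled by $\norm{\Psi_0}_{\sigma_0}$.

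The first step is to invert $K_\rho$ analytically. We will prove (this is the pending Lemma~\ref{lem.Krho}) that for $\rho\in\cA_\sigma$ with $\rho\ge c$, the operator $K_\rho$ is invertible from mean-zero functions in $(1-\Delta)\cA_{\sigma'}$ onto mean-zero functions of $\cA_{\sigma'}$ for $\sigma'$ small, with $\norm{K_\rho^{-1}g}_{\sigma'}\le C\norm{(1-\Delta)^{-1}g}_{\sigma'}$. The plan is to write $K_\rho=\rho(-\Delta)(1+R)$ with $R=(-\Delta)^{-1}\bigl(\rho^{-1}\nabla\rho\cdot\nabla\bigr)$. Lax--Milgram gives invertibility in $H^s$. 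For the analytic norm, we commute $e^{\sigma'\sqrt{-\Delta}}$ through $K_\rho$: the commutator costs $\sigma'$ times a factor involving $\norm{\rho}_\sigma$ and $\norm{1/\rho}_\sigma$ (using \cref{lemma:multiplication} and analyticity of $1/\rho$). This commutator is absorbed for $\sigma'$ small, uniformly in $t$ since $\rho\in C^0([0,T];\cA_\sigma)$.

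Combining this with \cref{lemma:boundspreliminaryTi} (case $s=0$) and the fact that $\partial_t^2\rho+\Delta^2\rho$ is bounded in $(1-\Delta)\cA_{\sigma'}$ gives
\begin{equation*}
\norm{V_{\Psi}}_{\tau}\le C\bigl(1+\norm{\pscal{\cP}\Psi}_{\tau}^2\bigr),\qquad \norm{V_{\Psi_1}-V_{\Psi_2}}_\tau\le C\bigl(\norm{\pscal{\cP}\Psi_1}_\tau+\norm{\pscal{\cP}\Psi_2}_\tau\bigr)\norm{\pscal{\cP}(\Psi_1-\Psi_2)}_\tau
\end{equation*}
for all $\tau\le\sigma_0$, using bilinearity of $Q_{\Psi_1,\Psi_2}$ for the difference. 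This is exactly the loss of one total-momentum derivative. Next, \cref{lemma:multiplication} with $s=0$ gives $\norm{\cV_\Psi\Psi}_\tau\le C\norm{V_\Psi}_\tau\norm{\Psi}_\tau$.

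The main step, and the expected obstacle, is the Cauchy--Kovalevskaya estimate for the Duhamel integral. For $s<t$ we bound $\norm{\pscal{\cP}\Psi(s)}_{\tau}$ with $\tau$ chosen halfway between $\sigma(t)$ and $\sigma(s)$, using \cref{lem.bdd.gradient}:
\begin{equation*}
\norm{\pscal{\cP}\Psi(s)}_\tau\lesssim\frac{\norm{\Psi(s)}_{\sigma(s)}}{\sigma(s)-\sigma(t)}\lesssim \frac{T_0}{\sigma_0(t-s)}\Bigl(1-\frac s{T_0}\Bigr)^{-\gamma}\nnorm{\Psi}.
\end{equation*}
The cubic nonlinearity $\cV_\Psi\Psi$ contains two such factors, which would produce a nonintegrable $(t-s)^{-2}$. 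The plan to avoid this is to spend the derivative loss once: each of the two $\pscal{\cP}\Psi$ factors is bounded with only half of the gap, while the third factor, $\Psi$ itself, is bounded without loss. This still gives $(t-s)^{-2}$, so instead we should run the fixed point on the pair $(\Psi,\pscal{\cP}\Psi)$. Alternatively, and more cleanly, we place $\pscal{\cP}^{1/2}$-type weights in the norm, or use the standard Nirenberg--Nishida trick of measuring $\norm{\Psi(s)}_{\sigma'}$ with the weight $(\sigma(s)-\sigma')$. What we will actually aim to prove is the key integral inequality
\begin{equation*}
\int_0^t\frac{(1-s/T_0)^{-\gamma}}{\sigma(s)-\sigma(t)}\,\rd s\le \frac{C\,T_0}{\sigma_0}\Bigl(1-\frac t{T_0}\Bigr)^{-\gamma},
\end{equation*}
which holds when the inner estimate uses a single factor $1/(\sigma(s)-\sigma(t))$ with the inner point shifted toward $\sigma(s)$. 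If the quadratic dependence spoils this, we will first work on a ball $\nnorm{\Psi}\le R$ and in a norm involving $\sup_t\norm{\Psi(t)}_{\sigma(t)}$ together with a second weighted component controlling $\pscal{\cP}\Psi$ in $\cB_{\sigma(t)-\delta}$, so that only one loss appears per step.

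With this estimate, $\Phi$ maps a ball of $X$ into itself and is a contraction once $T_0/\sigma_0$ is small, which holds for $T_0$ small. Banach's theorem then gives a unique fixed point. Restricting to $[0,T']$ with $T'=T_0/2$ and $\sigma'=\sigma_0/2$ yields $\Psi\in C^0([0,T'];\cB_{\sigma'})$; continuity in time follows from the Duhamel formula. Uniqueness in $C^0([0,T'];\cB_{\sigma'})$ follows by rerunning the contraction with $\sigma_0=\sigma'$ for any second solution.
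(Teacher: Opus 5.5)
You identify the right obstacle, but the proposal does not get past it, and the one concrete estimate you commit to is false. Since $\sigma(s)-\sigma(t)=\sigma_0(t-s)/T_0$, the left-hand side of your ``key integral inequality'' is $\frac{T_0}{\sigma_0}\int_0^t(1-s/T_0)^{-\gamma}(t-s)^{-1}\,\rd s=+\infty$. The problem is structural. Your norm records $\Psi(s)$ only at the moving boundary index $\sigma(s)$, with a weight depending on $s$ alone. The best loss factor \cref{lem.bdd.gradient} then gives at the output index $\sigma(t)$ is $(\sigma(s)-\sigma(t))^{-1}$, which is non-integrable at $s=t$ already for a \emph{single} derivative loss. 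The paper, following Baouendi--Goulaouic, instead takes the supremum over all interior indices $\sigma'<\sigma(1-t/T')$ of $\nnorm{\partial_t\chi(t)}_{\sigma'}$ times the weight $\frac{\sigma-\sigma'}{\sigma}\sqrt{1-\frac{t\sigma}{T'(\sigma-\sigma')}}$, which vanishes at the boundary. With the intermediate index $\tau(s)=\tfrac12\bigl(\sigma(1-s/T')+\sigma'\bigr)$, the singularity becomes $\bigl(1-\frac{s\sigma}{T'(\sigma-\sigma')}\bigr)^{-3/2}$. Its integral over $[0,t]$ is bounded by $CT'$ times the inverse of that weight (\cref{lemma:BGnormbounds}). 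You mention a Nishida-type weight only in passing and never use it.

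Second, the quadratic loss is never resolved. You list remedies (the pair $(\Psi,\pscal{\cP}\Psi)$, $\pscal{\cP}^{1/2}$-weights, an extra component in $\cB_{\sigma(t)-\delta}$) without showing that any of them produces a single loss. The pair idea is the right one, and the paper works throughout in $\nnorm{\Psi}_\sigma=\norm{\pscal{\cP}\Psi}_\sigma$. But it linearizes the loss only because of two facts absent from your proposal:
\begin{itemize}
\item the case $s=1$ of \cref{lemma:boundspreliminaryTi}, namely $\norm{(1-\Delta)^{-1/2}Q_{\Psi_1,\Psi_2}}_\sigma\le C\norm{\pscal{\cP}^2\Psi_1}_\sigma\norm{\pscal{\cP}\Psi_2}_\sigma+C\norm{\pscal{\cP}\Psi_1}_\sigma\norm{\pscal{\cP}^2\Psi_2}_\sigma$, in which the additional derivative lands on one factor only;
\item the boundedness of $(1-\Delta)^{1/2}K_\rho^{-1}(1-\Delta)^{1/2}$ on $\cA_{\sigma'}^\perp$, which needs a commutator estimate for $[K_\rho^{-1},\partial_j]$ on top of the $K_\rho^{-1}(1-\Delta)$ bound you plan to prove.
\end{itemize}
Together they yield $\nnorm{V_\Psi}_{\sigma'}\le C+C\nnorm{\Psi}_{\sigma'}\nnorm{\pscal{\cP}\Psi}_{\sigma'}$ and the Lipschitz bound \eqref{eqn.VPsi.nonlinear.Lipschitz}, in which exactly one factor carries a $\pscal{\cP}$ beyond the $\nnorm{\cdot}$-scale. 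Without them, adding a component for $\pscal{\cP}\Psi$ merely reproduces the two-factor loss one level up.

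Your uniqueness step inherits this gap. Before ``rerunning the contraction'', you must show that a second solution in $C^0([0,T'];\cB_{\sigma'})$ lies in the ball where the contraction holds. This requires first shrinking $\sigma'$ so that $\nnorm{\Psi_k(t)}_{\sigma'}$ is bounded, and then estimating $\partial_t(e^{it\cH_0}\Psi_k)$ through the equation.
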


The main idea in the proof of \cref{prop.solve.NLS} is that, after factoring out the evolution given by the linear (translation-invariant) term $\cH_0$, the resulting equation morally has a loss of one derivative and can be solved using a standard nonlinear Cauchy--Kovalevskaya theorem \cite{Nishida-77,Nirenberg-72,BaoGou-77,BaoGou-77b}.
We explain the details in due time.

The proof of \cref{prop.solve.NLS} will occupy most of the rest of the paper. At the end (in \cref{sec.final.steps.proof.thm.main}), we explain how to prove \cref{thm.main.V} using \cref{prop.solve.NLS}. 
First, however, we will discuss our assumptions on $\rho$.

\subsection{The reciprocal of the density is analytic}
Let $\rho \in C^2([0,T];\cA_\sigma)$ be non-vanishing, i.e., as in the statement of \cref{thm.main.V} or \cref{prop.solve.NLS}. 
We claim that, up to reducing the value of $\sigma$, 
we can additionally assume that 
\begin{equation}
    \sup_{0\leq t \leq T} 
    \left\{\norm{\sqrt{\rho(t)}^{-1}}_{\sigma}
    +\norm{\Delta^2\rho(t)}_{\sigma}
    \right\}
    < \infty. \label{eq.ass.rho} 
\end{equation}
Proving this and related bounds is the content   of this subsection. 
The bound on $\Delta^2\rho$ follows from \cref{lem.bdd.gradient}. 
The bound on $1/\sqrt{\rho}$ is the content of the following lemma.

\begin{lemma}[Analyticity of $1/\sqrt\rho$]
\label{lemma.1/rho.analytic}
Let $\rho\in C^0([0,T]; \cA_\sigma)$ satisfy $\rho(t) \geq c > 0$. 
Then, there exists some $\sigma'>0$ such that $1/\sqrt{\rho} \in C^0([0,T];\cA_{\sigma'})$.  
\end{lemma}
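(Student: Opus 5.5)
The plan is to reduce to a power series in the algebra $\cA_{\sigma'}$, using the product rule of \cref{lemma:multiplication} (for $N=1$, $\norm{fg}_\sigma\leq C_0\norm{f}_\sigma\norm{g}_\sigma$). Fix $t$ and write $\rho(t)=m(1+g)$, where $m>0$ is a constant and $g=\rho(t)/m-1$. We then expand
\[
\rho^{-1/2}=m^{-1/2}\sum_{n\ge0}\binom{-1/2}{n}g^n .
\]
This series converges in $\cA_{\sigma'}$ as soon as $C_0\norm{g}_{\sigma'}<1$, because the coefficients $\binom{-1/2}{n}$ are bounded in absolute value by $1$ and $\norm{g^n}_{\sigma'}\leq C_0^{n-1}\norm{g}_{\sigma'}^n$. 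The difficulty is that $\norm{g}_{\sigma'}$ need not be small for any $m$ if $\rho$ oscillates. So we do not expand around a constant. Instead, we expand around a smooth approximation $\rho_\eps$ of $\rho$ whose reciprocal square root we already control.

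The concrete choice is the low-frequency truncation $\rho_L=\mathbb{1}(|p|\le L)\hat\rho$, which is a trigonometric polynomial. Since $\rho\in C^0([0,T];\cA_\sigma)\subset C^0([0,T];H^2)\subset C^0([0,T]\times\Lambda)$, we can pick $L$ so large that $\rho_L(t)\geq c/2$ for all $t$. This uses compactness of $[0,T]$ in $\cA_\sigma$ and the fact that $\norm{\rho-\rho_L}_{L^\infty}\le C\norm{\rho-\rho_L}_{H^2}\to0$ uniformly in $t$.

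Next we show that $h_t:=\rho_L(t)^{-1/2}$ lies in $\cA_{\sigma_1}$ for some $\sigma_1>0$, uniformly in $t$. The function $\rho_L(t)$ extends to an entire function on $\C^D$, with derivatives bounded uniformly in $t$. Hence $\Re\rho_L(t,x+iy)\geq c/4$ on the strip $|y|\le\delta$, and $\rho_L^{-1/2}$ (principal branch) is holomorphic and bounded there. Shifting contours then gives exponential decay of its Fourier coefficients, $|\hat h_t(p)|\le Ce^{-\delta|p|}$. This places $h_t$ in $\cA_{\sigma_1}$ for any $\sigma_1<\delta$, with a bound uniform in $t$.

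Now we write
\[
\rho=\rho_L(1+g),\qquad g=h_t^2(\rho-\rho_L).
\]
By the product rule, $\norm{g}_{\sigma_1}\le C_0^2\norm{h_t}_{\sigma_1}^2\norm{\rho-\rho_L}_{\sigma_1}$. The main obstacle is the size of the constant $\norm{h_t}_{\sigma_1}$, which grows with $L$. The remedy is to take $\sigma_1$ strictly less than $\sigma$ and then use the extra decay of the tail:
\[
\norm{\rho-\rho_L}_{\sigma_1}\le e^{-(\sigma-\sigma_1)L}\norm{\rho}_\sigma .
\]
The problem is that $\delta$, and hence $\sigma_1$, depends on $L$. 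To avoid this circularity we fix $L$ first, using only the continuity argument above, then fix $\delta$ and $\sigma_1$. The remaining smallness comes from a second cut at a higher frequency $L'\ge L$. We use the same $h_t$, that is, the one built from $\rho_L$, but now set
\[
g=h_t^2(\rho-\rho_L)=h_t^2(\rho_{L'}-\rho_L)+h_t^2(\rho-\rho_{L'}).
\]
With this decomposition, however, the first piece is not small. So instead we run the whole argument with $L$ replaced by a large $L'$, and check that the strip width $\delta$ can be chosen independently of $L'$. This works because $\rho_{L'}$ converges to $\rho$ in $\cA_{\sigma}$, and $\rho$ itself extends holomorphically to $|y|<\sigma/\sqrt D$ by \eqref{eq:simple_estim_Asigma}, with $\Re\rho\ge c/2$ for $|y|\le\delta_0$. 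Hence $\rho_{L'}$ converges uniformly on that strip, $\Re\rho_{L'}\ge c/4$ there for all large $L'$, and both $\delta$ and the bound on $\norm{h_t}_{\sigma_1}$ are uniform in $L'$ and $t$. Choosing $L'$ large then makes $C_0\norm{g}_{\sigma_1}<1$.

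With this, the binomial series converges, and $\rho^{-1/2}=h_t\,(1+g)^{-1/2}\in\cA_{\sigma_1}$, with a bound uniform in $t$. Continuity in $t$ follows because each building block ($\rho_{L'}$, $h_t$, $g$) is continuous in $t$ with values in $\cA_{\sigma_1}$, and the series converges uniformly in $t$. We take $\sigma'=\sigma_1$.

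This holomorphic-extension argument is in fact direct on its own: $\rho(t)$ extends to a strip with $\Re\rho\ge c/2$, so $\rho^{-1/2}$ is bounded and holomorphic there, and its Fourier coefficients decay exponentially. If that is cleaner, I would simply present it, with the uniformity in $t$ coming from equicontinuity, and drop the truncation step altogether.
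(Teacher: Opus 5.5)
Your proof is correct. The alternative in your last paragraph is exactly the paper's argument, while your main route is a detour around it.

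In \cref{lemma.1/f.analytic} the paper extends $f$ to $\Lambda_\sigma=\Lambda+i\{|y|\le\sigma\}$ through its absolutely convergent Fourier series, with $\|\tilde f\|_{L^\infty(\Lambda_\sigma)}\le C\norm{f}_\sigma$. It then proves $\norm{f}_{\sigma'}\le C(\sigma-\sigma')^{-2-D/2}\|\tilde f\|_{L^\infty(\Lambda_\sigma)}$ by shifting the contour in the Fourier integral. Finally it uses \cref{lem.bdd.gradient} to get $|\tilde f(z)|\ge\inf_\Lambda f-\frac{C\tau}{\sigma-\tau}\norm{f}_\sigma$ on $\Lambda_\tau$. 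The uniformity in $t$ that you attribute to ``equicontinuity'' is obtained concretely by choosing $\sigma'$ in terms of $c$ and $\sup_t\norm{\rho(t)}_\sigma<\infty$.

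Your main route (cutoff $\rho_{L'}$, factorization $\rho=\rho_{L'}(1+g)$, binomial series in the Banach algebra $\cA_{\sigma_1}$) is valid in its final form. Its only nontrivial input, though, is a $t$-independent strip $|y|\le\delta_0$ on which $\Re\tilde\rho(t)\ge c/2$. That input already gives $\rho^{-1/2}\in\cA_{\sigma_1}$ directly, by the same contour shift you apply to $h_t$. So the truncation and the series add nothing.

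Two small points are in your favour:
\begin{enumerate}
\item Working with $\Re\tilde\rho>0$ and the principal branch makes the square root visibly single-valued on the periodic strip. The paper treats this only briefly, since its stated bound is on $|\tilde f|$.
\item You can apply the contour estimate to $\rho(t)^{-1/2}-\rho(t')^{-1/2}$, using $\tilde\rho(t')\to\tilde\rho(t)$ uniformly on the strip. This gives continuity in time in the same space $\cA_{\sigma_1}$. The paper instead first proves an $L^\infty([0,T];\cA_{\sigma'})$ bound and then interpolates, $\norm{f}_{\sigma'/2}^2\le\norm{f}_{\sigma'}\norm{f}_{H^2}$, against $H^2$-continuity, which gives up half of $\sigma'$.
\end{enumerate}
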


The main ingredient is the time-independent statement.

\begin{lemma}[Analyticity of the reciprocal]\label{lemma.1/f.analytic}
Let $f\in \cA_\sigma$ be real-valued with $f(x) > 0$ for all $x\in \Lambda$. 
Then, there exists $\sigma' > 0$ such that $1/\sqrt{f} \in \cA_{\sigma'}$. Concretely, for $\sigma'>0$ sufficiently small, we have the bound
\begin{equation*}
    \norm{\frac{1}{\sqrt f}}_{\sigma'} \leq C (\sigma')^{-2-D/2} 
    \left(\norm{\frac{1}{f}}_{L^\infty}^{-1} - \frac{C\sigma'}{\sigma - 2\sigma'} \norm{f}_\sigma\right)^{-1/2}. 
\end{equation*}
\end{lemma}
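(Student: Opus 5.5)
We will prove \cref{lemma.1/f.analytic} by a power-series (Neumann) expansion around a constant, with the algebra property of $\cA_{\sigma'}$ from \cref{lemma:multiplication} doing most of the work. Set $m:=\norm{1/f}_{L^\infty}^{-1}=\min_\Lambda f>0$. The naive expansion $f^{-1/2}=m^{-1/2}(1+(f-m)/m)^{-1/2}$ requires $\norm{(f-m)/m}_{\sigma'}<1$, which typically fails. So we first localize: the hoped-for bound has the shape $(m-C\sigma'\norm{f}_\sigma/(\sigma-2\sigma'))^{-1/2}$, which suggests comparing $f$ with its value at a point on balls of size $\sim\sigma'$.

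\textbf{Step 1 (local Taylor bound).} Using \eqref{eq:simple_estim_Asigma}, or more precisely $\norm{\nabla f}_{L^\infty}\le C\norm{\nabla f}_{\sigma''}\le C\norm{f}_\sigma/(\sigma-\sigma'')$ from \cref{lem.bdd.gradient}, we get $|f(x)-f(y)|\le C|x-y|\,\norm{f}_\sigma/(\sigma-2\sigma')$. Hence $\Re f$ stays above $m-C\sigma'\norm{f}_\sigma/(\sigma-2\sigma')$ on a complex strip. Concretely, for $|\Im z|\le 2\sigma'$, the holomorphic extension $f(x+iy)$ obeys
\[
|f(x+iy)-f(x)|\le \sum_{\alpha\ne0}\frac{|y^\alpha|}{\alpha!}\norm{\partial^\alpha f}_{L^\infty}\le \frac{C\sigma'}{\sigma-2\sigma'}\norm{f}_\sigma ,
\]
again by \eqref{eq:simple_estim_Asigma} with $\sigma$ replaced by a geometric sum. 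Therefore $\Re f\ge \delta:=m-C\sigma'\norm{f}_\sigma/(\sigma-2\sigma')$ on the strip $S_{2\sigma'}=\{x+iy:|y|<2\sigma'\}$, and $\delta>0$ for $\sigma'$ small.

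\textbf{Step 2 (extension of $1/\sqrt f$).} On $S_{2\sigma'}$ the principal branch gives a holomorphic, periodic function $g=f^{-1/2}$ with $|g|\le\delta^{-1/2}$.

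\textbf{Step 3 (back to the Fourier norm).} Cauchy's theorem shifts contours in each coordinate, giving $|\hat g(p)|\le \delta^{-1/2}e^{-2\sigma'|p|_1}\le\delta^{-1/2}e^{-2\sigma'|p|}$. Then
\[
\norm{g}_{\sigma'}^2\le \delta^{-1}\sum_p \pscal{p}^4e^{-2\sigma'|p|}\le C\delta^{-1}(\sigma')^{-4-D},
\]
which is exactly the claimed $C(\sigma')^{-2-D/2}\delta^{-1/2}$.

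\textbf{Time-dependent version and main obstacle.} \cref{lemma.1/rho.analytic} follows from the same argument. Since $\sup_t\norm{\rho(t)}_\sigma<\infty$ and $\rho\ge c$, a single $\sigma'$ works uniformly in $t$. Continuity in time comes from $\rho(t)^{-1/2}-\rho(s)^{-1/2}=(\rho(s)-\rho(t))/(\sqrt{\rho(t)}\sqrt{\rho(s)}(\sqrt{\rho(t)}+\sqrt{\rho(s)}))$, whose holomorphic extension is controlled by $\sup_{S_{2\sigma'}}|\rho(t)-\rho(s)|\le C\norm{\rho(t)-\rho(s)}_\sigma$, and then Step 3. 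The main obstacle is Step 1: getting uniform control of the complex extension on a strip directly from the $\cA_\sigma$ norm, including the norm-equivalence between $|p|$ and $|p|_1$ (which costs factors of $\sqrt D$ absorbed into $C$). Everything else is routine.
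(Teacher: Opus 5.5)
Your proposal is correct and follows essentially the same route as the paper. You holomorphically extend $f$ to a strip of width $2\sigma'$, use the $\cA_\sigma$ derivative bounds to keep the extension at distance $\delta=\norm{1/f}_{L^\infty}^{-1}-C\sigma'\norm{f}_\sigma/(\sigma-2\sigma')$ from zero, bound $1/\sqrt{\tilde f}$ in sup norm, and shift contours to recover $e^{-2\sigma'|p|}$ Fourier decay; your opening Neumann-series remark plays no role. The differences are cosmetic: bounding $\Re\tilde f$ rather than $|\tilde f|$ neatly justifies the principal branch, and your coordinatewise contour shift needs the polystrip $\max_i|\Im z_i|\le 2\sigma'$ (where your Taylor estimate does hold) rather than the Euclidean strip you wrote, or else a shift along $-p/|p|$ as in the paper.
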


\begin{remark}\label{rmk.sqrt.f.analytic}
It follows that $\sqrt{f}, 1/f\in \cA_{\sigma'}$. Indeed, by \cref{lemma:multiplication}, we have $\norm{\sqrt{f}}_{\sigma'} \leq C \norm{f}_{\sigma'} \norm{1/\sqrt{f}}_{\sigma'}$ and $\norm{1/f}_{\sigma'}\leq C \norm{1/\sqrt{f}}_{\sigma'}^2$. 
\end{remark}

\begin{remark}[Wiener's $1/f$ theorem]
\cref{lemma.1/f.analytic} is reminiscent of a weighted version of Wiener's $1/f$ theorem (also known as the Wiener--Lévy theorem or Wiener's Tauberian theorem) that for a non-vanishing function $f:[0,1]\to \C$ whose Fourier series is absolutely convergent, the reciprocal function has an absolutely convergent Fourier series \cite{Wiener-32}, \cite[Thm.~18.21]{Rudin-87}. Weighted versions are discussed in \cite{BhaDed-03,BhaDabDed-09,Domar-1956}. 
The weighted version (convergence in $\ell^1(\omega)$ with a weight $\omega : 2\pi\Z\to [0,\infty)$) holds if and only if the weight $\omega$ satisfies the Beurling condition $\sum_{n\in 2\pi\Z} \frac{\log \omega(n)}{1+n^2} < \infty$ \cite[Thm.~2.11]{Domar-1956}, \cite{BhaDabDed-09}. 
The space $\cA_\sigma$ is a weighted $\ell^2$-space (in Fourier space) with weight $\omega(n) = e^{2\sigma |n|}(1+n^2)^2$. 
This grows exponentially, and thus does not satisfy the Beurling condition. We thus cannot hope in general to have $1/f\in \cA_\sigma$. As \cref{lemma.1/f.analytic} shows, however, we do in fact have $1/f\in \cA_{\sigma'}$ for a $\sigma' < \sigma$.

More abstract versions of Wiener's $1/f$ theorem in terms of ideals of Banach algebras can be found in \cite[Sec.~VIII.6]{Katznelson-04} and \cite[Ch.~6]{ReiSte-00}. 
(Note that the Banach space $\cA_\sigma$ is in fact a Banach algebra (it is closed under products) by \cref{lemma:multiplication}.)
\end{remark}

From \eqref{eq.ass.rho} we further have the following bounds. 

\begin{lemma}[Further bounds]\label{lemma:more_bounds_rho}
Let $\rho\in C^2([0,T];\cA_\sigma)$ satisfy \eqref{eq.ass.rho}. 
Then, 
\begin{equation*}
    \sup_{0\leq t \leq T}
    \left\{
    \norm{\frac{1}{\rho(t)}}_{L^\ii} 
    +\norm{\Delta\sqrt{\rho(t)}^{-1}}_{\sigma}
    +\norm{\sqrt{\rho(t)}}_{\sigma}
    +\norm{\Delta\sqrt{\rho(t)}}_{\sigma}
    +\norm{\nabla\rho(t)}_{\sigma}
    \right\}
    < \infty. 
\end{equation*}
\end{lemma}

We now give the proofs of the above claims.

\begin{proof}[Proof of \cref{lemma.1/rho.analytic}]
Let $\sigma' = \min\{\frac{\sigma}{4}, \frac{c\sigma}{4C \sup_{t\leq T}\norm{\rho(t)}_\sigma}\} $.    
Then for any $0\leq t\leq T$ we have
\begin{align*}
    \norm{\frac{1}{\rho(t)}}_{L^\infty}^{-1} - \frac{C \sigma'}{\sigma - 2\sigma'} \norm{\rho(t)}_\sigma
    & \geq c - \frac{2C}{\sigma} \norm{\rho(t)}_{\sigma} \frac{c\sigma}{4C \sup_{s \leq T}\norm{\rho(s)}_\sigma} \geq \frac{c}{2}. 
\end{align*}
By \cref{lemma.1/f.analytic} we thus have $1/\sqrt \rho \in L^\infty([0,T]; \cA_{\sigma'})$ as desired. The continuity in time follows by interpolation of the norms. More precisely, for any $f$ we have 
\begin{equation*}
    \norm{f}_{\sigma'/2}^2
    = \norm{e^{|\nabla|\sigma'/2}f}_{H^2}^2
    \leq 
    \norm{e^{|\nabla|\sigma'}f}_{H^2}
    \norm{f}_{H^2}. 
\end{equation*}
Taking $f=1/\sqrt{\rho(t)} - 1/\sqrt{\rho(t')}$ and noting that clearly $1/\sqrt{\rho}$ is continuous in time with values in $H^2$ (simply compute the derivatives explicitly), we see that $1/\sqrt{\rho}$ is continuous in time with values in $\cA_{\sigma'/2}$. Relabeling $\sigma'$ we conclude the desired. 
\end{proof}

\begin{proof}[Proof of \cref{lemma.1/f.analytic}]
Recall that $f\in \cA_\sigma$ is analytic. Hence, we can extend its domain of definition slightly into the complex plane by analytic continuation. More concretely, we define by Fourier inversion
\begin{equation*}
    \tilde f(z) = \sum_{k\in  2\pi\Z^D} e^{ikz} \hat f(k)
\end{equation*}
for any $z \in \Lambda_\sigma = \Lambda + i \{y\in \R^D : |y| \leq \sigma\}$. 
The sum converges absolutely for such $z$. Indeed, 
\begin{align*}
    \sum_{k\in  2\pi\Z^D} |e^{ikz} \hat f(k)| 
& 
    \leq  \sum_{k\in  2\pi\Z^D} e^{\sigma |k|} |\hat f(k)| \frac{1+k^2}{1+k^2}
    \leq C \norm{f}_\sigma \sqrt{ \sum_{k\in  2\pi\Z^D} \frac{1}{(1+k^2)^2}}
    < \infty. 
\end{align*}
This calculation further shows that $\Vert \tilde f \Vert_{L^\infty(\Lambda_\sigma)} \leq C \norm{f}_\sigma$.

We next claim that 
\begin{equation}\label{eqn.norm.extend.complex}
    \norm{f}_{\sigma'} \leq \frac{C}{(\sigma-\sigma')^{2+D/2}} \Vert{\tilde f}\Vert_{L^\infty(\Lambda_\sigma)}
\end{equation}
for any $0 < \sigma' < \sigma$.

To prove this, consider the formula for the Fourier coefficients and shift the integral slightly into the complex plane by adding an imaginary value $+i\eta$ for some $|\eta|\leq \sigma$. Noting that the segments joining the real domain $\Lambda$ and the domain $\Lambda + i\eta$ are in fact the same by periodicity and thus cancel, we see that 
\begin{equation*}
\hat f(k) = \int_\Lambda e^{-ikx} f(x) \, \rd x = \int_\Lambda \tilde f(x+i\eta) e^{-ik(x+i\eta)} \, \rd x. 
\end{equation*}
Choosing $\eta = -\sigma \frac{k}{|k|}$ we get 
\begin{equation*}
\abs{\hat f(k)} 
\leq 
|\Lambda| \Vert{\tilde f}\Vert_{L^\infty(\Lambda_\sigma)} e^{-\sigma|k|} 
= 
\Vert{\tilde f}\Vert_{L^\infty(\Lambda_\sigma)} e^{-\sigma|k|} . 
\end{equation*}
Thus, for any $0 < \sigma' < \sigma$, we have
\begin{equation*}
\norm{f}_{\sigma'}^2 
    \leq  \sum_{k\in  2\pi\Z^D} e^{-2(\sigma-\sigma')|k|} (1+k^2)^2 \Vert{\tilde f}\Vert_{L^\infty(\Lambda_\sigma)}^2
    \leq \frac{C}{(\sigma-\sigma')^{D+4}} \Vert{\tilde f}\Vert_{L^\infty(\Lambda_\sigma)}^2. 
\end{equation*}
This proves \eqref{eqn.norm.extend.complex}. 

Next, we show that the analytically extended $\sqrt{f}$ is nonvanishing.
By uniqueness of analytic continuation we have that 
$\nabla \tilde f = \tilde{\nabla f}$ and that $1/{\sqrt{\tilde f}} = \tilde{1/\sqrt{f}}$. For the complex square root, we can put the branch-cut on the negative real line. Since $f$ is bounded away from $0$, the complex square root is single-valued, at least for sufficiently small $\sigma'$.

By \cref{lem.bdd.gradient} and the bound above, we have, for any $0< \sigma' < \sigma$,
\begin{equation*}
    \bigl\Vert {\nabla \tilde f} \bigr\Vert_{L^\infty(\Lambda_{\sigma'})} 
    \leq C \norm{\nabla f}_{\sigma'}
    \leq \frac{C}{\sigma-\sigma'} \norm{f}_\sigma. 
\end{equation*}
Thus, for any $\tau < \sigma$ we have for any $z\in \Lambda_\tau$
\begin{equation*}
    \abs{\tilde f(z)} \geq \inf_{x\in \Lambda} |f(x)| - \frac{C\tau}{\sigma-\tau} \norm{f}_\sigma. 
\end{equation*}
For $\tau$ sufficiently small we thus have $\Bigl\vert{\tilde{\sqrt{f(z)}}}\Bigr\vert =  \Bigl\vert{\tilde f(z)}\Bigr\vert^{1/2} \geq \sqrt{\frac{1}{2} \inf_{x\in \Lambda} |f(x)|} > 0$ for all $z\in \Lambda_\tau$. In particular $\bigl\Vert {\tilde{1/\sqrt f}}\bigr\Vert_{L^\infty(\Lambda_\tau)} < \infty$. 
We conclude from \eqref{eqn.norm.extend.complex} that for any $\sigma' < \tau$ we have $1/\sqrt f\in \cA_{\sigma'}$ as claimed.
Choosing in particular $\sigma' = \tau/2$ we have the bound stated in the lemma.
\end{proof}

\begin{proof}[Proof of \cref{lemma:more_bounds_rho}]
Note that by Sobolev's inequality \cite[Thms.~8.3 and 8.5]{LieLos-01} and \cref{lemma:multiplication} we have
$$
\norm{\frac{1}{\rho(t)}}_{L^\ii}  \leq C \norm{\frac{1}{\rho(t)}}_{H^2} 
\leq C \norm{\frac{1}{\sqrt{\rho(t)}}}_{\sigma}^2 .
$$
The other bounds follow essentially using the Cauchy--Schwarz inequality and the product rule \cref{lemma:multiplication}.
For instance, for $\norm{\nabla\rho(t)}_{\sigma}$ we write
\begin{align*}
\norm{\nabla\rho(t)}_{\sigma}^2
&=
\sum_{p\in 2\pi\Z^D} p^2 e^{2\sigma|p|} \left(1+ p^2\right)^2 \big|\hat{\rho}(p)\big|^2\\
&\leq C
\sum_{p\in 2\pi\Z^D} (1+ \abs{p}^8) e^{2\sigma|p|} \left(1+ p^2\right)^2 \big|\hat{\rho}(p)\big|^2
\leq C \norm{\rho(t)}_{\sigma}^2+ C \norm{\Delta^2\rho(t)}_{\sigma}^2
\end{align*}
and, by explicitly computing the Laplacian,
\begin{align*}
\norm{\Delta\sqrt{\rho(t)}}_{\sigma}
&\leq C \norm{\Delta\rho(t)}_{\sigma}\norm{\sqrt{\rho(t)}^{-1}}_{\sigma}
+ C \norm{\nabla\rho(t)}_{\sigma}^2\norm{\sqrt{\rho(t)}^{-1}}_{\sigma}^3.
\end{align*}
The others follow analogously.
\end{proof}

\subsection{Properties of the operator \texorpdfstring{$K_\rho$}{Krho}}
To solve \eqref{eq:KrhoV} for $V$, we need to invert the operator $K_\rho$ given by (recall \eqref{eqn.def.Krho}) $K_\rho f = - \nabla\cdot(\rho\nabla f)$. 
We collect here some properties of the operator $K_\rho$.

First, note that, for any function $f$ we have 
$\int K_\rho f \, \rd x = -\int \nabla\cdot(\rho\nabla f) \, \rd x = 0$ by the divergence theorem. Additionally, $K_\rho(f+c) = K_\rho f$ for any constant $c$. 
Define $P^\perp = \mathbbm{1} - \ket{1}\bra{1}$ as the projection onto the space of functions orthogonal to constants. 
The above proves that $K_\rho$ commutes with $P^\perp$, i.e., it preserves the space $\cA_\sigma^\perp = P^\perp \cA_\sigma$.

We prove the following bound on the norm of the inverse essentially stating that $K_\rho^{-1}$ gains two derivatives.

\begin{lemma}[{Invertibility of $K_\rho$}]
\label{lem.Krho}
Let $\rho\in C^2([0,T];\cA_{\sigma})$ satisfy \eqref{eq.ass.rho}.
Then, $K_\rho$ is invertible on the space $\cA_\sigma^\perp$ and
\begin{equation}\label{eq:resolventKDelta}
\sup_{t\leq T}\norm{K_{\rho(t)}^{-1}(1-\Delta)}
_{\cA_\sigma^\perp \to \cA_\sigma^\perp} < \infty.
\end{equation}
As a result, 
\begin{equation*}
\sup_{t\leq T}
\norm{K_{\rho(t)}^{-1}m(-i\nabla)}_{\cA_\sigma^\perp \to \cA_\sigma^\perp} < \infty
\end{equation*}
for any polynomial $m$ of degree at most $2$.
Furthermore, we have the following continuity result on the resolvent:
\begin{multline}
\label{eq:K_resolvent_continuous}
    \norm{(K_{\rho_1}^{-1} -K_{\rho_2}^{-1})(1-\Delta)}_{\cA^\perp_\sigma\to \cA^\perp_\sigma}
    \leq  C \norm{K_{\rho_1}^{-1}(1-\Delta)}_{\cA^\perp_\sigma\to \cA^\perp_\sigma}
     \norm{K_{\rho_2}^{-1}(1-\Delta)}_{\cA^\perp_\sigma\to \cA^\perp_\sigma}
     \\
     \times\norm{(1-\Delta)^{1/2}(\rho_1-\rho_2)}_{\sigma}
    .
\end{multline}
\end{lemma}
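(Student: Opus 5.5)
The idea is to conjugate $K_\rho$ by $\sqrt\rho$ and reduce it to a Schrödinger-type operator that is a perturbation of the Laplacian. A direct computation gives
\[
K_\rho f=-\nabla\cdot(\rho\nabla f)=\sqrt\rho\,\bigl(-\Delta+W\bigr)\bigl(\sqrt\rho f\bigr),\qquad W:=\frac{\Delta\sqrt\rho}{\sqrt\rho}.
\]
By \eqref{eq.ass.rho}, \cref{lemma:more_bounds_rho} and the product rule (\cref{lemma:multiplication}), $W$, $\sqrt\rho$ and $1/\sqrt\rho$ all lie in $\cA_\sigma$, uniformly in $t\le T$. Hence $K_\rho=\sqrt\rho\,L\,\sqrt\rho$ with $L=-\Delta+W$.

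The operator $L$ is nonnegative, with kernel spanned by $\sqrt\rho$, since $\langle g,Lg\rangle=\int\rho|\nabla(g/\sqrt\rho)|^2$. Because $\rho\ge c$, the Poincaré inequality gives a uniform spectral gap on $L^2$. Its size is at least of order $c/\|\rho\|_{L^\ii}$, controlled by \cref{lemma:more_bounds_rho}.

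The plan has three steps.

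\textbf{Step 1: invertibility on $H^2$-type spaces.} Show that $K_\rho:\{f\in H^2:\int f=0\}\to\{g\in L^2:\int g=0\}$ is bijective, with $\|f\|_{H^2}\le C\|K_\rho f\|_{L^2}$. This follows from the Lax--Milgram theorem on $\dot H^1$ together with elliptic regularity. Alternatively it follows from the conjugated form, as the map $\sqrt\rho\,\cdot$ is an isomorphism of $H^2$.

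\textbf{Step 2: the analytic weight.} This is the main obstacle. I would not prove it by abstract Wiener-type arguments. Instead I would use the Lax--Milgram/energy structure in the weighted space directly.

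Write $K_\rho=-\bar\rho\Delta+B$, where $\bar\rho=\int\rho/|\Lambda|=N$ and
\[
B f=-\nabla\cdot\bigl((\rho-\bar\rho)\nabla f\bigr).
\]
Then $K_\rho^{-1}(1-\Delta)=(1+\bar\rho^{-1}B(-\Delta)^{-1})^{-1}\bar\rho^{-1}(1-\Delta)(-\Delta)^{-1}$ on mean-zero functions. A Neumann series would require $\|\rho-\bar\rho\|$ to be small, which fails in general. So smallness must be extracted differently.

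I would use the freedom in $\sigma$. The equation $K_\rho V=g$ is equivalent to $V=G(g)$, and a perturbation argument would handle large $\rho-\bar\rho$ only after a Fredholm-type splitting. Concretely, split $\rho-\bar\rho=r_{\rm low}+r_{\rm high}$ in Fourier space, where $r_{\rm high}$ has small $\cA_\sigma$-norm. The multiplier bounds in the product rule then control the high part by Neumann series.

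For the low-frequency part, $\rho_{\rm low}=\bar\rho+r_{\rm low}$ is a trigonometric polynomial. Its operator $K_{\rho_{\rm low}}$ is a finite-band Toeplitz operator in Fourier space. To invert it in the weighted space I would conjugate by $e^{\sigma\eta\cdot x}$ for complex shifts $\eta$. Conjugation by such exponential weights maps $K_{\rho_{\rm low}}$ to $K_{\rho_{\rm low}}$ plus first-order terms of size $O(\sigma)$ with bounded coefficients, because $\rho_{\rm low}$ extends analytically.

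This reduces the weighted estimate to the $H^2$ estimate of Step 1, up to lower-order perturbations. Those are absorbed using the gain of two derivatives, provided $\sigma$ is small. Here one may have to reduce $\sigma$, which the standing convention of this section allows; I would note this explicitly.

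\textbf{Step 3: consequences.} Uniformity in $t$ follows because all constants depend only on the quantities in \eqref{eq.ass.rho} and \cref{lemma:more_bounds_rho}. The bound for $K_\rho^{-1}m(-i\nabla)$ with $\deg m\le 2$ follows since $m(-i\nabla)(1-\Delta)^{-1}$ is a bounded Fourier multiplier commuting with the weight and preserving mean zero. Finally, \eqref{eq:K_resolvent_continuous} follows from the resolvent identity
\[
K_{\rho_1}^{-1}-K_{\rho_2}^{-1}=K_{\rho_1}^{-1}(K_{\rho_2}-K_{\rho_1})K_{\rho_2}^{-1}.
\]
Here $(K_{\rho_2}-K_{\rho_1})f=\nabla\cdot((\rho_1-\rho_2)\nabla f)$. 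Writing this as $(1-\Delta)^{1/2}$ times a bounded piece, and applying the product rule with one derivative on $\rho_1-\rho_2$, gives
\[
\|(1-\Delta)^{-1}(K_{\rho_2}-K_{\rho_1})(1-\Delta)^{-1}\|\le C\|(1-\Delta)^{1/2}(\rho_1-\rho_2)\|_\sigma,
\]
which yields \eqref{eq:K_resolvent_continuous}.
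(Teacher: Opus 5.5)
\textbf{There is a genuine gap in Step~2, which you rightly call the main obstacle.} The step is not carried out, and the mechanism you sketch would not prove the lemma as stated.

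Conjugating by $e^{\sigma\eta\cdot x}$ does not preserve periodic functions (for non-imaginary $\eta$), and it puts the weight on the wrong side of Fourier duality. The weight $e^{\sigma|p|}$ defining $\cA_\sigma$ corresponds instead to complex translations $x\mapsto x+iy$. These turn $K_\rho$ into $K_{\rho(\cdot+iy)}$, that is, $K_\rho$ plus a \emph{second}-order term with $O(\sigma)$ coefficient, not a first-order one.

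To invert this by coercivity or by a Neumann series you need $\Re\rho(x+iy)\geq c/2$, or $|\rho(x+iy)-\rho(x)|$ small, for all $|y|\leq\sigma$. That forces $\sigma$ to be small in terms of $c$ and $\nabla\rho$. In addition, for $D\geq 2$ the weight $e^{\sigma|p|}$ cannot be recovered from finitely many directional shifts without losing part of $\sigma$. So at best you get the bound on $\cA_{\sigma'}^\perp$ for some $\sigma'<\sigma$.

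The lemma asserts it on $\cA_\sigma^\perp$ under \eqref{eq.ass.rho} alone. Through $\rho^{-1/2}\in\cA_\sigma$, that hypothesis controls the non-vanishing of $\rho$ on the complex strip, but not the sign of $\Re\rho$ there. The later use in \cref{lem.VPsi.bdds} also needs the bound on every $\cA_{\sigma'}^\perp$, $\sigma'<\sigma$, with one constant. Splitting the coefficient $\rho$ into low and high Fourier modes does not help, because the difficulty sits entirely in the low part.

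\textbf{The missing idea is already in your first display.} With your $W=\Delta\sqrt\rho/\sqrt\rho$ you have $K_\rho=\sqrt\rho(1-\Delta)\sqrt\rho+\sqrt\rho(W-1)\sqrt\rho$. So $K_\rho f=(1-\Delta)g$ gives
\[
\sqrt\rho f=(1-\Delta)^{-1}\rho^{-1/2}(1-\Delta)g-(1-\Delta)^{-1}\bigl((W-1)\sqrt\rho f\bigr).
\]
The first term is bounded in $\cA_\sigma$ by the product rule, after commuting $(1-\Delta)$ past $\rho^{-1/2}$.

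The second term is a zeroth-order remainder, and you should split its \emph{output} at frequency $R$:
\begin{itemize}
\item Above $R$, the factor $(1-\Delta)^{-1}$ gains $(1+R^2)^{-1}$. For $R$ large this piece is at most $\tfrac12\norm{\sqrt\rho f}_\sigma$ and can be absorbed.
\item Below $R$, the analytic weight costs only $e^{\sigma R}$. This piece is therefore bounded by $Ce^{\sigma R}\norm{f}_{L^2}$.
\item Finally, $\norm{f}_{L^2}\leq C\norm{g}_\sigma$ follows from the unweighted coercivity $K_\rho\geq 4\pi^2(\min\rho)P^\perp$ and Poincaré.
\end{itemize}
Then $\norm{f}_\sigma\leq C\norm{\rho^{-1/2}}_\sigma\norm{\sqrt\rho f}_\sigma$. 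This works at the given $\sigma$ with no smallness. The constants depend only on the quantities in \eqref{eq.ass.rho} and \cref{lemma:more_bounds_rho}, so they are uniform in $t$ and in $\sigma'\leq\sigma$.

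\textbf{A smaller point in Step~3.} The two-sided bound you display does not yield \eqref{eq:K_resolvent_continuous}, because it leaves the unbounded factor $(1-\Delta)K_{\rho_2}^{-1}(1-\Delta)$. What you need is the one-sided bound $\norm{(1-\Delta)^{-1}\nabla\cdot((\rho_1-\rho_2)\nabla\,\cdot\,)}_{\sigma\to\sigma}\leq C\norm{(1-\Delta)^{1/2}(\rho_1-\rho_2)}_\sigma$. Your product-rule idea does give this, via $r\nabla f=\nabla(rf)-(\nabla r)f$.
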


\begin{proof}
For the sake of simplicity of notation, we suppress from the notation the time-dependence. 

Observe that $K_\rho$ is self-adjoint on $L^2$ and gapped as (recall that on the torus $\Lambda=[0,1]^D$, the Laplacian has a spectral gap $4\pi^2$)
\begin{equation}\label{eqn.Krho.gapped}
K_\rho \geq \left(\min_{x\in \Lambda} \rho(x)\right) (-\Delta) \geq 4\pi^2 \left(\min_{x\in \Lambda}  \rho(x)\right)   P^\perp.
\end{equation}
The existence of the inverse as a bounded operator on $P^\perp L^2$ is thus immediate.

Note that we are working on the one-particle space here for which the norm in fact satisfies the product rule (cf.~\cref{lemma:multiplication}). 
We first show \eqref{eq:resolventKDelta}.
Note that
\begin{equation*}
  K_\rho 
  = (-\sqrt{\rho}\Delta + (\Delta \sqrt{\rho}))\sqrt{\rho}
  = \sqrt{\rho}(1-\Delta) \sqrt{\rho} 
     + ((\Delta\sqrt{\rho}) - \sqrt\rho) \sqrt{\rho} 
     .
\end{equation*}
Suppose $K_\rho f= (1-\Delta) g$ for $f$ with $\int f =0 = \int g$.
Then
\begin{equation*}
\begin{split}
\sqrt\rho f = (1-\Delta)^{-1}\frac{(1-\Delta) g}{\sqrt\rho} - (1-\Delta)^{-1} (\Delta \sqrt{\rho}-\sqrt{\rho}) f, 
\end{split}
\end{equation*}
such that
\begin{equation}\label{eqn.bdd.sqrt.rho.f.Kinv}
  \norm{\sqrt\rho f}_{\sigma}
  \leq
  \norm{(1-\Delta)^{-1}\frac{(1-\Delta) g}{\sqrt\rho}}_{\sigma} 
  	+\norm{(1-\Delta)^{-1} (\Delta \sqrt{\rho}-\sqrt{\rho}) f}_{\sigma}.
\end{equation}
The first term in \eqref{eqn.bdd.sqrt.rho.f.Kinv} is bounded as
\begin{equation}
\begin{split}
  \norm{(1-\Delta)^{-1}\frac{(1-\Delta) g}{\sqrt\rho}}_{\sigma} 
  &\leq
  \norm{(1-\Delta)^{-1}\frac{1}{\sqrt\rho}(1-\Delta) }_{\sigma\to \sigma} \norm{g}_{\sigma}\\
    &\leq C
  \left(\norm{\frac{1}{\sqrt\rho}}_{\sigma} + \norm{\nabla\frac{1}{\sqrt\rho}}_{\sigma}+\norm{\Delta\frac{1}{\sqrt\rho}}_{\sigma}
  \right)\norm{g}_{\sigma} ,
\end{split}
\label{eqn.bdd.Kinv.first.term}
\end{equation}
where we used \cref{lemma:multiplication} in the first step and the formula
$$(1-\Delta)^{-1}\frac1{\sqrt\rho}(1-\Delta)=\frac1{\sqrt\rho}+\frac1{1-\Delta} \left(-\Delta \frac1{\sqrt\rho}\right)+\frac{2\nabla}{1-\Delta}\cdot\left(\nabla \frac1{\sqrt\rho}\right).$$
For the second term in \eqref{eqn.bdd.sqrt.rho.f.Kinv} we introduce a splitting into high and low momenta.
\begin{align*}
 &\norm{(1-\Delta)^{-1} (\Delta \sqrt{\rho}-\sqrt{\rho}) f}_{\sigma}
 \\
  &\quad \leq 
  	\norm{\frac{\mathbbm{1}(-\Delta\leq R^2)}{1-\Delta} (\Delta \sqrt{\rho}-\sqrt{\rho}) f}_{\sigma}
  	+\norm{\frac{\mathbbm{1}(-\Delta> R^2)}{1-\Delta} (\Delta \sqrt{\rho}-\sqrt{\rho}) f}_{\sigma} \\
   &\quad \leq  e^{\sigma R} \norm{(\Delta \sqrt{\rho}-\sqrt{\rho}) f}_{L^2}
  	+(R^2+1)^{-1}\norm{\frac{\Delta \sqrt{\rho}-\sqrt{\rho}}{\sqrt{\rho}} \sqrt\rho f}_{\sigma}\\
  &\quad 
  \leq e^{\sigma R}
  	\norm{\Delta \sqrt{\rho}-\sqrt{\rho}}_{L^\infty}\norm{ f}_{L^2}
  	+\tilde C(R^2+1)^{-1}\norm{\frac{\Delta \sqrt{\rho}-\sqrt{\rho}}{\sqrt{\rho}} }_{\sigma} \norm{\sqrt\rho f}_{\sigma}, 
\end{align*}
where $\tilde C$ is the constant from \cref{lemma:multiplication}. 
Here we used that the Fourier multipliers act explicitly.
Choosing $R^2 = 2\tilde C \norm{\frac{\Delta \sqrt{\rho}-\sqrt{\rho}}{\sqrt{\rho}}}_{\sigma}$
we conclude the bound 
\begin{equation}\label{eqn.bdd.Kinv.second.term}
    \norm{(1-\Delta)^{-1} (\Delta \sqrt{\rho}-\sqrt{\rho}) f}_{\sigma}
    \leq C \norm{f}_{L^2} + \frac{1}{2} \norm{\sqrt{\rho}f}_\sigma. 
\end{equation}
Note that $\frac{1}{2} \norm{\sqrt{\rho}f}_\sigma$ on the right hand side has to be regularized in order to be absorbed in the left hand side; this can be done via a momentum cutoff which then has to be removed similarly to earlier justifications of such computations. We omit the details.
Using the Poincaré inequality~\cite[Thm.~8.11]{LieLos-01}, \eqref{eqn.Krho.gapped}, and $K_\rho f = (1-\Delta)g$ we have 
\begin{multline*}
    \norm{f}_{L^2}^2 
    \leq C \norm{\nabla f}_{L^2}^2 
    \leq \frac{C}{\min_x \rho(x)} \pscal{f, K_\rho f}
    \leq C \norm{f}_{L^2} \norm{(1-\Delta)g}_{L^2}
    \\ 
    \leq \frac{1}{2}\norm{f}_{L^2}^2 + C \norm{(1-\Delta)g}_{L^2}^2
    \leq \frac{1}{2}\norm{f}_{L^2}^2 + C \norm{g}_{\sigma}^2.
\end{multline*}
Combining with \eqref{eqn.bdd.sqrt.rho.f.Kinv}, \eqref{eqn.bdd.Kinv.first.term}, and \eqref{eqn.bdd.Kinv.second.term} we find 
\begin{equation*}
    \norm{\sqrt\rho f}_\sigma \leq C \norm{g}_\sigma. 
\end{equation*}
Thus, by \cref{lemma:multiplication}, we have 
$\norm{f}_\sigma \leq C \norm{\sqrt{\rho}^{-1}}_\sigma\norm{\sqrt\rho f}_\sigma < \infty$ as desired.

Next, the second bound is an immediate consequence of \eqref{eq:resolventKDelta} noting that for any such polynomial $m$ we have $\norm{(1-\Delta)^{-1}m(-i\nabla)}_{\sigma\to \sigma}<\infty$.

Lastly, we consider the effect of perturbing the density and compute
\begin{align*}
   & 
    \norm{(K_{\rho_1}^{-1} -K_{\rho_2}^{-1})(1-\Delta)}_{\cA^\perp_\sigma\to \cA^\perp_\sigma}
    =
    \norm{K_{\rho_1}^{-1} (K_{\rho_1} -K_{\rho_2})K_{\rho_2}^{-1}(1-\Delta)}_{\cA^\perp_\sigma\to \cA^\perp_\sigma}\\
        & \quad  \leq 
    \norm{K_{\rho_1}^{-1}\nabla \cdot (\rho_1-\rho_2)\nabla}_{\cA^\perp_\sigma \to \cA^\perp_\sigma}
    \norm{K_{\rho_2}^{-1}(1-\Delta)}_{\cA^\perp_\sigma\to \cA^\perp_\sigma}\\
& \quad \leq C
     \norm{K_{\rho_1}^{-1}(1-\Delta)}_{\cA^\perp_\sigma\to \cA^\perp_\sigma}
     \left(\norm{\rho_1-\rho_2}_{\sigma} +\norm{\nabla(\rho_1-\rho_2)}_{\sigma}\right)
    \norm{K_{\rho_2}^{-1}(1-\Delta)}_{\cA^\perp_\sigma\to \cA^\perp_\sigma}
    .
\end{align*}
Here we used that by \cref{lemma:multiplication}, the multiplication with $\rho_1-\rho_2$ is a bounded operator (on $\cA_\sigma$) with norm at most the norm of $\rho_1-\rho_2$.
\end{proof}

\subsection{Duhamel form and (space-time) norms}
To study the equation \eqref{eq:SEnonlinearV} we write it instead in Duhamel form. 
We look for solutions $\Psi\in C^0([0,T'];\cB_{\sigma'})$. 
From \cref{lem.VPsi.bdds} below, it follows that, in particular, 
$V_\Psi\in C^0([0,T'],C^2(\Lambda))$, and thus that $\Psi$ solves the nonlinear equation~\eqref{eq:SEnonlinearV} if and only if it solves the Duhamel formulation
\begin{equation}
\label{eqn.NLS.V.Duhamel}
    \Psi(t) = e^{-it\cH_0} \Psi_0 -i \int_0^t e^{-i(t-s)\cH_0} \sum_{j=1}^N V_{\Psi(s)}(s,x_j) \Psi(s) \, \rd s. 
\end{equation}
Factoring out explicitly the evolution governed by $\cH_0$, we rewrite it in terms of $\tilde\Psi=e^{it\cH_0}\Psi$ and obtain 
\begin{equation}\label{eqn.NLS.V.Duhamel.tildePsi}
    \tilde\Psi(t) = \Psi_0 -i \int_0^t e^{is\cH_0} \sum_{j=1}^N V_{e^{-is\cH_0}\tilde\Psi(s)}(s,x_j) e^{-is\cH_0}\tilde \Psi(s) \, \rd s. 
\end{equation}
This is the equation we will study by a fixed-point method. 
We follow the approach of \cite{BaoGou-77,BaoGou-77b} and show that the integral term in \eqref{eqn.NLS.V.Duhamel.tildePsi} is a contraction in an appropriate norm.
We next define the relevant norms.

As mentioned above, the key step in the proof of \cref{prop.solve.NLS} is the fact the nonlinear term morally loses one derivative. To state this we define the norm 
\begin{definition}
For any $\sigma\geq 0$ and any $\Psi\in L^2_a(\Lambda^N)$  we define $\nnorm{\Psi}_\sigma = \norm{\pscal{\cP}\Psi}_\sigma$. 
\end{definition}

Clearly, as in \cref{lem.bdd.gradient}, this norm is analytic with respect to the total momentum,
	\begin{equation*}
  \nnorm{\abs{\cP} \Psi}_{\sigma'} \leq \frac{1}{\sigma-\sigma'} \nnorm{\Psi}_{\sigma}.
\end{equation*}

\begin{remark}[``Linearizing'' the nonlinearity]
The reason for using this norm is that while the terms in \cref{lemma:boundspreliminaryTi} are bounded in terms of a first power of the total momentum acting on the wavefunction (in the case $s=0$), two such terms appear on the right-hand-side, thus yielding a loss $1/(\sigma-\sigma')^2$ on the scale of spaces $\cB_\sigma$. 
Adding the additional weight $\pscal{\cP}$ turns this into a linear loss; this can be seen  more precisely in \cref{lem.VPsi.bdds} below.
This modification is essentially the same as doubling the number of variables by turning an equation containing $\abs{\nabla f}^2$ into an equation for the pair $g=(f, \nabla f)$ where the gradient $\nabla g$ now appears linearly. This is a standard technique in the study of nonlinear PDEs~\cite[Sec.~3]{Nirenberg-72}.
\end{remark}

We then define the norm on trajectories as follows. 

\begin{definition}[{Baouendi--Goulaouic space~\cite{BaoGou-77,BaoGou-77b}}]
\label{def.BG.space}
For any $0 < T' \leq T$ and $\sigma > 0$ we define the norm
\begin{equation*}
\nnorm{\chi}_{T',\sigma}:=\sup_{\substack{0\leq\sigma'<\sigma\\ 0\leq t\leq T'\frac{\sigma-\sigma'}{\sigma}}}\left\{\nnorm{\partial_t \chi}_{\sigma'}\frac{\sigma-\sigma'}{\sigma}\sqrt{1-\frac{t\sigma}{T'(\sigma-\sigma')}}\right\}
\end{equation*}
and the associated Banach space 
\begin{equation*}
    \cB_{T',\sigma} = \{\chi \in C^1([0,T']; H^2) : \chi(0) = 0, \nnorm{\chi}_{T',\sigma} < \infty\}. 
\end{equation*}
\end{definition}

This norm dominates the $\nnorm{\cdot}_{\sigma'}$-norms for appropriate $\sigma'$ as follows.

\begin{lemma}[{cf.~\cite[Lems.~1 and 2]{BaoGou-77}}]\label{lemma:BGnormbounds}
Let $\chi \in C^0([0,T'];H^2)$ with $\chi(0)=0$. Then,  
\begin{align}
\sup_{0\leq t \leq T'}\nnorm{\chi(t)}_{\sigma_{T'}(t)} 
& \leq CT'\nnorm{\chi}_{T',\sigma}, 
\label{eq:BG_norm1}
\\\label{eq:BGnormPchi}
\sup_{\substack{0\leq\sigma'<\sigma\\ 0\leq t\leq T'\frac{\sigma-\sigma'}{\sigma}}}
\frac{\sigma-\sigma'}{\sigma}\sqrt{1-\frac{t\sigma}{T'(\sigma-\sigma')}}
\nnorm{\pscal{\cP}\chi(t)}_{\sigma'}
& \leq CT'\nnorm{\chi}_{T',\sigma} 
, 
\end{align}
where $\sigma_{T'}(t) = \sigma(1-t/T')$. 
\end{lemma}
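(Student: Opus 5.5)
The plan is to write $\chi(t)=\int_0^t\partial_s\chi(s)\,\rd s$, which uses $\chi(0)=0$, and then bound the integrand pointwise in time using the definition of $\nnorm{\cdot}_{T',\sigma}$ with a time-dependent choice of the regularity index. For \eqref{eq:BG_norm1}, fix $t\in[0,T']$ and the target index $\sigma_{T'}(t)=\sigma(1-t/T')$. For each $s\in[0,t]$ we have $\sigma_{T'}(t)\leq\sigma_{T'}(s)$, and we choose an intermediate index $\sigma'(s)\in[\sigma_{T'}(t),\sigma)$ at which the norm is evaluated. This requires $s\leq T'(\sigma-\sigma'(s))/\sigma$, that is $\sigma'(s)\leq\sigma_{T'}(s)$. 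Since the norms $\nnorm{\cdot}_{\sigma'}$ are increasing in $\sigma'$, the definition gives
\begin{equation*}
\nnorm{\partial_s\chi(s)}_{\sigma_{T'}(t)}\leq\nnorm{\partial_s\chi(s)}_{\sigma'}\leq \nnorm{\chi}_{T',\sigma}\,\frac{\sigma}{\sigma-\sigma'}\Big(1-\frac{s\sigma}{T'(\sigma-\sigma')}\Big)^{-1/2}.
\end{equation*}
The simplest choice is $\sigma'=\sigma_{T'}(t)$ itself. Then $\frac{\sigma}{\sigma-\sigma'}=T'/t$ and the square root becomes $(1-s/t)^{-1/2}$, which is integrable, with $\int_0^t(1-s/t)^{-1/2}\,\rd s=2t$. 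Integrating over $s$ therefore gives $\nnorm{\chi(t)}_{\sigma_{T'}(t)}\leq 2T'\nnorm{\chi}_{T',\sigma}$, which is \eqref{eq:BG_norm1}. To make passing the norm inside the integral legitimate, I would argue with Fourier coefficients, or use the cutoff $\cP_M$ as in \cref{lem.eigenstate.analytic} and let $M\to\infty$ by monotone convergence.

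For \eqref{eq:BGnormPchi}, fix $\sigma'<\sigma$ and $t\leq T'(\sigma-\sigma')/\sigma$, and set $\tau:=\sigma-t\sigma/T'$, so that $\sigma'\leq\tau$. I would again write $\chi(t)=\int_0^t\partial_s\chi\,\rd s$, but now take an $s$-dependent intermediate index $\sigma''(s)\in(\sigma',\sigma_{T'}(s))$, roughly the midpoint $\sigma''=\tfrac12(\sigma'+\sigma_{T'}(s))$. The extra factor $\pscal{\cP}$ is then absorbed using \cref{lem.bdd.gradient}, in the form $\nnorm{\pscal{\cP}\phi}_{\sigma'}\leq C(1+(\sigma''-\sigma')^{-1})\nnorm{\phi}_{\sigma''}$. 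With $\delta:=\sigma-\sigma'$ and $a(s):=\sigma_{T'}(s)-\sigma'=\delta-s\sigma/T'$, we have $\sigma''-\sigma'=a/2$. We also have $\sigma-\sigma''\geq\delta/2$, and $1-\frac{s\sigma}{T'(\sigma-\sigma'')}\geq c\,a/\delta$.

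This yields the estimate
\begin{equation*}
\nnorm{\pscal{\cP}\partial_s\chi(s)}_{\sigma'}\leq C\nnorm{\chi}_{T',\sigma}\,\frac{1}{a(s)}\cdot\frac{\sigma}{\delta}\Big(\frac{\delta}{a(s)}\Big)^{1/2}.
\end{equation*}
Integrating $a(s)^{-3/2}$ over $s\in[0,t]$ and substituting $\rd a=-(\sigma/T')\rd s$ gives at most $C\frac{T'}{\sigma}a(t)^{-1/2}$. Multiplying by the prefactor $\frac{\delta}{\sigma}\sqrt{1-\frac{t\sigma}{T'\delta}}=\frac{\delta}{\sigma}\sqrt{a(t)/\delta}$ on the left of \eqref{eq:BGnormPchi}, all powers of $a(t)$ and $\delta$ cancel. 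What remains is a bound $CT'\sigma^{-1}\nnorm{\chi}_{T',\sigma}$, up to the factor $1$ in $\pscal{\cP}$, which is handled by the same argument with the easier bound from \eqref{eq:BG_norm1}.

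The only delicate step is the bookkeeping in the second estimate. The index $\sigma''$ must lie strictly between $\sigma'$ and $\sigma_{T'}(s)$, so that both the derivative loss $1/(\sigma''-\sigma')$ and the weight blow-up $(1-\cdot)^{-1/2}$ stay controlled by $a(s)$. One must also check that the resulting singularity $a^{-3/2}$ integrates to exactly the power $a(t)^{-1/2}$ that the weight compensates. I also expect the constant to carry a factor $1/\sigma$. This is harmless since $\sigma$ is fixed, and the statement presumably absorbs it into $C$.
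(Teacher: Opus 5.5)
Your proposal is correct and follows the paper's proof essentially step for step. For \eqref{eq:BG_norm1} you make the same choice $\sigma'=\sigma_{T'}(t)$, which gives $2T'$. For \eqref{eq:BGnormPchi} you use the same midpoint index $\tfrac12(\sigma_{T'}(s)+\sigma')$ together with \cref{lem.bdd.gradient}, and the resulting $a(s)^{-3/2}$ singularity integrates to exactly the power compensated by the weight, leaving $CT'/\sigma$ (the paper gets $8T'/\sigma$). Your separate handling of the constant part of $\pscal{\cP}$ via \eqref{eq:BG_norm1} fixes a point the paper glosses over, but it does not change the argument.
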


Note that the norm on the left-hand-side of \eqref{eq:BG_norm1} can be expressed as
\begin{equation}\label{eqn.sigma.t.sup.norm.order}
\sup_{0\leq t \leq T'}\nnorm{\chi(t)}_{\sigma_{T'}(t)} 
=
\sup_{0 \leq t \leq T'} \sup_{0 < \sigma' \leq \sigma(1-t/T')} \nnorm{\chi(t)}_{\sigma'} 
=
\sup_{0 < \sigma' < \sigma} \sup_{0 \leq t \leq T' \frac{\sigma - \sigma'}{\sigma}} \nnorm{\chi(t)}_{\sigma'}
.
\end{equation}

\begin{proof}
\Cref{eq:BG_norm1} is exactly \cite[Lem.~1]{BaoGou-77}. We give a proof here for convenience of the reader. 

For a $\chi$ with $\chi(0)=0$ we have $\chi(t) = \int_0^t \partial_s \chi(s) \, \rd s $. Then, with $\sigma_{T'}(t) = \sigma(1-t/T')$ we have 
\begin{align*}
\nnorm{\chi(t)}_{\sigma_{T'}(t)}
& \leq \int_0^t \nnorm{\partial_s\chi(s)}_{\sigma_{T'}(t)} \rd s\\
& \leq \nnorm{\chi}_{T',\sigma} \frac{\sigma}{\sigma-\sigma_{T'}(t)}
\int_0^t 
\frac1{\sqrt{1-\frac{s\sigma}{T'(\sigma-\sigma_{T'}(t))}}}\rd s
\\ & 
= \nnorm{\chi}_{T',\sigma} \frac{T'}{t}\int_0^t\frac1{\sqrt{1-\frac{s}{t}}}\rd s
=T'\nnorm{\chi}_{T',\sigma} \int_0^1\frac1{\sqrt{1-s}}\,\rd s
=2T'\nnorm{\chi}_{T',\sigma},
\end{align*}
since $\frac{\sigma-\sigma_{T'}(t)}{\sigma}=\frac{t}{T'}$.
Recalling \eqref{eqn.sigma.t.sup.norm.order}, we conclude the desired. 

Next, to prove \eqref{eq:BGnormPchi}, we compute, for $0 \leq \sigma' < \sigma(1-t/T')$ with a $\tau(s) > \sigma'$ to be chosen, using \cref{lem.bdd.gradient} 
\begin{equation}\label{eq:BGbound.P.chi}
\begin{split}
\nnorm{\pscal{\cP}\chi(t)}_{\sigma'}
& \leq \int_0^t \nnorm{\pscal{\cP}\partial_s\chi(s)}_{\sigma'} \rd s
    \leq \int_0^t \frac{\nnorm{\partial_s\chi(s)}_{\tau(s)}}{\tau(s)-{\sigma'}} \rd s\\
& \leq \int_0^t \frac{\sigma \, \rd s}{(\tau(s) - \sigma')(\sigma - \tau(s))\sqrt{1 - \frac{s\sigma}{ T'(\sigma - \tau(s))}}}
    \nnorm{\chi}_{T',\sigma}.
    \end{split}
\end{equation}
We choose 
$$\tau(s)=\frac{\sigma(1-s/T')+\sigma'}{2}=\frac{\sigma_{T'}(s)+\sigma'}{2}$$
Note that $\tau(t) = \frac{\sigma(1-t/T') + \sigma'}{2} > \sigma'$, since we consider only $\sigma' < \sigma(1-t/T')$. 
Hence, 
\begin{align*}
T'(\sigma-\tau(s)) - s\sigma & = \frac{T'}{2}(\sigma_{T'}(s)-\sigma'),
&
\tau(s)-\sigma' & =\frac{\sigma_{T'}(s)-\sigma'}{2},
\\
\sigma-\tau(s) & =\frac{\sigma(1+s/T')-\sigma'}{2}.    
\end{align*}

Then, we may bound the integral as 
\begin{align*}
    & \int_0^t \frac{\sigma \, \rd s}{(\tau(s) - \sigma')(\sigma - \tau(s))\sqrt{1 - \frac{s\sigma}{T'(\sigma - \tau(s))}}}
    \\ & \quad 
    = 
    \sqrt{T'}\int_0^t \frac{\sigma \, \rd s}{(\tau(s) - \sigma')\sqrt{\sigma - \tau(s)}\sqrt{T'(\sigma - \tau(s)) - s\sigma}}
    \\ & \quad 
    = 4 \int_0^t\frac{\sigma\,\rd s}{(\sigma(1-s/T')-\sigma')^{3/2}\sqrt{\sigma(1+s/T')-\sigma'}}
    \\ & \quad 
    = 4\int_0^t\frac{\sigma\,\rd s}{(\sigma-\sigma')^2\left(1-\frac{s\sigma}{T'(\sigma-\sigma')}\right)^{3/2}\left(1+\frac{s\sigma}{T'(\sigma-\sigma')}\right)^{1/2}}
    \\ & \quad 
    \leq 
    \frac{4T'}{\sigma-\sigma'} 
    \int_0^{\frac{t\sigma}{T'(\sigma-\sigma')}} \frac{\rd s}{(1-s)^{3/2}}
    \\ & \quad 
    \leq \frac{8T'}{(\sigma-\sigma')\left(1-\frac{t\sigma}{T'(\sigma-\sigma')}\right)^{1/2}},
\end{align*}
where we changed variables $s \to \frac{T'(\sigma-\sigma')}{\sigma}s$. We also used that $\sqrt{1+s}\geq 1$ and
$$\int_0^t \frac{\rd s}{(1-s)^{3/2}} = \frac{2 - 2(1-t)^{1/2}}{(1-t)^{1/2}} \leq \frac{2}{(1-t)^{1/2}}.$$ 
Inserting in \eqref{eq:BGbound.P.chi}, multiplying with  $\frac{\sigma-\sigma'}{\sigma}\sqrt{1-\frac{t\sigma}{T'(\sigma-\sigma')}}$, which precisely cancels the singularity, and taking the supremum concludes the proof of \eqref{eq:BGnormPchi}.
\end{proof}

\subsection{Bounds on the nonlinearity}
Next, we bound the nonlinear term in \eqref{eqn.NLS.V.Duhamel.tildePsi}. Concretely, we bound for any $T'$ the map $\cG :\cB_{T',\sigma}\to \cB_{T',\sigma}$ defined by 
\begin{equation}\label{eq:defG}
    \cG[\chi](t) 
    = 
     -i \int_0^t e^{is\cH_0} \sum_{j=1}^N V_{e^{-is\cH_0}(\Psi_0+\chi(s))}(s,x_j) e^{-is\cH_0}(\Psi_0+\chi(s)) \, \rd s, 
\end{equation}
where we recall the definition of $V_\Psi$ in \eqref{eq:VPsi}. 
We start with the bound on the unitary group $e^{it\cH_0}$.

\begin{lemma}[Unitary group]
\label{lemma:Vunitary.norm.sigma} 
We have 
 \begin{equation*}
  \norm{e^{-it\cH_0}\Psi}_{\sigma}
  \leq
  C \norm{\Psi}_{\sigma}
  ,
\end{equation*}
where the constant $C$ is independent of $\sigma$. 
\end{lemma}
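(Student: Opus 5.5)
We need to prove $\norm{e^{-it\cH_0}\Psi}_\sigma\leq C\norm{\Psi}_\sigma$ with $C$ independent of $\sigma$ (and of $t$). The plan is to separate the analytic weight from the $H^2$ weight. The weight $e^{\sigma|\cP|}$ commutes with $\cH_0$, and the $H^2$ part is controlled by comparing $1+\cT$ with $C+\cH_0$ through \eqref{eq:compare_cH0_cT}.

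First recall that $\cH_0=\cT+\sum_{j<k}V_{ee}(x_j-x_k)$ is translation-invariant, so it commutes with $\cP$ in the sense of unbounded operators. This is the fact already used in the proof of \cref{lem.eigenstate.analytic}. Hence $e^{-it\cH_0}$ commutes with every bounded function of $\cP$, in particular with the truncations $e^{\sigma\cP_M}$, $\cP_M=\min\{|\cP|,M\}$. It also commutes with $(C+\cH_0)$ on its domain $D(\cT)=H^2\cap L^2_a$.

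The main estimate is the chain
\begin{align*}
\norm{(1+\cT)e^{\sigma\cP_M}e^{-it\cH_0}\Psi}_{L^2}
&\leq c^{-1}\norm{(C+\cH_0)e^{\sigma\cP_M}e^{-it\cH_0}\Psi}_{L^2}\\
&= c^{-1}\norm{e^{-it\cH_0}(C+\cH_0)e^{\sigma\cP_M}\Psi}_{L^2}\\
&= c^{-1}\norm{(C+\cH_0)e^{\sigma\cP_M}\Psi}_{L^2}\\
&\leq c^{-2}\norm{(1+\cT)e^{\sigma\cP_M}\Psi}_{L^2}.
\end{align*}
The first and last inequalities are \eqref{eq:compare_cH0_cT}. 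The middle equalities use the commutation and unitarity. For this step to make sense we need $e^{\sigma\cP_M}\Psi\in D(\cT)$, which holds because $e^{\sigma\cP_M}$ is a bounded Fourier multiplier preserving $H^2$.

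Finally, let $M\to\infty$. The right-hand side increases to $c^{-2}\norm{\Psi}_\sigma$ by monotone convergence in Fourier variables. The left-hand side increases to $\norm{e^{-it\cH_0}\Psi}_\sigma$, which gives the claim with $C=c^{-2}$. This constant depends only on $V_{ee},N,D$, and not on $\sigma$ or $t$.

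The only delicate point is the commutation of $\cH_0$ with $e^{\sigma\cP_M}$ in the domain sense. I would handle it by noting that $V_{ee}(x_j-x_k)$ commutes with translations $e^{ia\cdot\cP}$. Hence it commutes with functions of $\cP$ via the spectral theorem, exactly as cited from \cite[Thm.~4.41]{Lewin-Spectral}. No serious obstacle is expected.
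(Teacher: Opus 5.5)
Your proposal is correct and follows the paper's proof. It commutes $e^{-it\cH_0}$ with the analytic weight and with $C+\cH_0$ via translation invariance, then sandwiches with \eqref{eq:compare_cH0_cT} to get a constant $c^{-2}$ independent of $\sigma$ and $t$. The only difference is that you use the truncated weight $e^{\sigma\cP_M}$ and pass to $M\to\infty$ by monotone convergence, as in \cref{lem.eigenstate.analytic}, whereas the paper commutes directly with the unbounded $e^{\sigma|\cP|}$ on its domain $\cB_\sigma$; your variant avoids those domain questions.
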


\begin{proof}
This follows from the fact that $\cH_0$ is a self-adjoint operator with domain $H^2$, and that $\cP$ commutes with $\cH_0$. We give some details, for the convenience of the reader. Recall that two unbounded operators $A,B$ commute when $f(A)g(B)=g(B)f(A)$ for any bounded functions $f,g$. Then $g(B)$ preserves the domain of $f(A)$ for any locally bounded function $f$ and globally bounded function $g$, with the same relation $f(A)g(B)=g(B)f(A)$ on the domain of $f(A)$, by~\cite[Thm.~4.41]{Lewin-Spectral}.

In our case, we have $e^{i a \cdot \cP} \cH_0 e^{-i a \cdot \cP} = \cH_0$ for any $a\in \Lambda$ by translation-invariance, on the domain $D(\cH_0)=H^2$. By \cite[Thm.~4.41]{Lewin-Spectral}, this means that $\cP$ and $\cH_0$ commute. We then obtain $e^{-it\cH_0}e^{\sigma|\cP|} = e^{\sigma|\cP|}e^{-it\cH_0}$ on $D(e^{\sigma|\cP|})$ and $(C+\cH_0)e^{\sigma|\cP|}e^{-it\cH_0} = e^{-it\cH_0}(C+\cH_0)e^{\sigma|\cP|}$ on $D((C+\cH_0)e^{\sigma|\cP|})=\cB_\sigma$. Thus, from $\cH_0$ having domain $H^2$ we have the bounds
\begin{multline*}
\norm{e^{-it\cH_0}\Psi}_\sigma 
=
\norm{(1+\cT) e^{\sigma|\cP|} e^{-it\cH_0} \Psi }_{L^2} 
\leq C \norm{(C+\cH_0)  e^{\sigma|\cP|} e^{-it\cH_0} \Psi}_{L^2}
\\
= C \norm{ e^{-it\cH_0} (C+\cH_0)  e^{\sigma|\cP|} \Psi}_{L^2}
\leq C \norm{(1+\cT)  e^{\sigma|\cP|} \Psi}_{L^2}
\leq C \norm{\Psi}_\sigma
\end{multline*}
as desired. 
\end{proof}

\begin{lemma}[The nonlinear potential]
\label{lem.VPsi.bdds}
Let $\Psi_1, \Psi_2 \in \cB_\sigma$, let $\rho\in C^2([0,T];\cA_{\sigma})$ with $\int\rho=N$ satisfy \eqref{eq.ass.rho}, and let $V_{\Psi_1}(t)$ be as defined in \eqref{eq:VPsi}. 
Then, for any $0 \leq \sigma' < \sigma$, 
\begin{multline}    
\label{eqn.VPsi.nonlinear.Lipschitz}
    \nnorm{V_{\Psi_1} (t,x_1) \Psi_1 - V_{\Psi_2} (t,x_1)\Psi_2}_{\sigma'} 
    \\
    \leq 
    C \Bigl(1 + 
        \left(\nnorm{\Psi_1}_{\sigma'} + \nnorm{\Psi_2}_{\sigma'}\right)
        \left(\nnorm{\pscal{\cP}\Psi_1}_{\sigma'} + \nnorm{\pscal{\cP}\Psi_2}_{\sigma'}\right) 
    \Bigr)
    \nnorm{\Psi_1 - \Psi_2}_{\sigma'}
    \\
    + C \left(\nnorm{\Psi_1}_{\sigma'} + \nnorm{\Psi_2}_{\sigma'}\right)^2 \nnorm{\pscal{\cP}(\Psi_1 - \Psi_2)}_{\sigma'}.
\end{multline}
Moreover, we have the continuity bound on $V_\Psi$ 
(which we state in the $\norm{\cdot}_{\sigma'}$-norms, as this will only be used in the proof of \cref{thm.main.V} in \cref{sec.final.steps.proof.thm.main} below)
\begin{align}
    \norm{V_{\Psi_1}(t)}_{\sigma'}
    & \leq C + C \nnorm{\Psi_1}_{\sigma'}^2
    ,
    \label{eqn.VPsi.bdd}
    \\
    \norm{V_{\Psi_1}(t) - V_{\Psi_2}(t')}_{\sigma'}
    & \leq C \left(\nnorm{\Psi_1}_{\sigma'} + \nnorm{\Psi_2}_{\sigma'}\right) \nnorm{\Psi_1 - \Psi_2}_{\sigma'}
    \nn \\* &
    \quad + C \bigl(1 + \nnorm{\Psi_1}_{\sigma'}^2 + \nnorm{\Psi_2}_{\sigma'}^2\bigr)
    \nn \\* & \qquad 
    \times\left(\norm{(1+\Delta^2)(\rho(t) - \rho(t'))}_{\sigma'} +\norm{ \partial_t^2(\rho(t) - \rho(t'))}_{\sigma'} \right)
    .
    \label{eqn.VPsi.cts.in.time.bound}
\end{align}
\end{lemma}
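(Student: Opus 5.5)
The plan is to first estimate $V_\Psi$ itself in $\cA_{\sigma'}$ and then transfer this to the product $V_\Psi(x_1)\Psi$ in the $\nnorm{\cdot}_{\sigma'}$ norm. I will work at fixed $\sigma'$ throughout and never lose analyticity width. The derivative loss is instead recorded by the extra factors $\pscal{\cP}$.

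\textbf{Step 1: bound on $V_\Psi$.} By \eqref{eq:VPsi},
\[
V_{\Psi}(t)=\tfrac12 K_{\rho(t)}^{-1}(1-\Delta)\,(1-\Delta)^{-1}\bigl(-\partial_t^2\rho-\Delta^2\rho+Q_\Psi\bigr).
\]
The right-hand side has zero integral, so \cref{lem.Krho} applies on $\cA^\perp_{\sigma'}$. Since the hypotheses imply \eqref{eq.ass.rho} also at $\sigma'\leq\sigma$, the operator $K_{\rho}^{-1}(1-\Delta)$ is bounded uniformly in $t$. This gives
\[
\norm{V_\Psi}_{\sigma'}\le C\bigl(\norm{\partial_t^2\rho}_{\sigma'}+\norm{\Delta^2\rho}_{\sigma'}+\norm{(1-\Delta)^{-1}Q_\Psi}_{\sigma'}\bigr).
\]
The $\rho$-terms are bounded by $\rho\in C^2([0,T];\cA_\sigma)$ and \eqref{eq.ass.rho}. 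For the last term I use \cref{lemma:boundspreliminaryTi} with $s=0$, which gives $C\nnorm{\Psi}_{\sigma'}^2$ because $\nnorm{\cdot}=\norm{\pscal{\cP}\cdot}$. This proves \eqref{eqn.VPsi.bdd}.

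\textbf{Step 2: Lipschitz and continuity bounds on $V_\Psi$.} For \eqref{eqn.VPsi.cts.in.time.bound} I split
\[
V_{\Psi_1}(t)-V_{\Psi_2}(t')=\bigl[V_{\Psi_1}(t)-V_{\Psi_2}(t)\bigr]+\bigl[V_{\Psi_2}(t)-V_{\Psi_2}(t')\bigr].
\]
For the first bracket I use bilinearity, $Q_{\Psi_1}-Q_{\Psi_2}=\Re Q_{\Psi_1+\Psi_2,\Psi_1-\Psi_2}$, together with \cref{lemma:boundspreliminaryTi} ($s=0$). For the second bracket I write
\[
K_{\rho(t)}^{-1}F(t)-K_{\rho(t')}^{-1}F(t')=(K_{\rho(t)}^{-1}-K_{\rho(t')}^{-1})F(t)+K_{\rho(t')}^{-1}(F(t)-F(t')).
\]
I then use the resolvent estimate \eqref{eq:K_resolvent_continuous}, where $\norm{(1-\Delta)^{1/2}(\rho(t)-\rho(t'))}_{\sigma'}$ is controlled by $\norm{(1+\Delta^2)(\rho(t)-\rho(t'))}_{\sigma'}$. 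The term $\norm{(1-\Delta)^{-1}F(t)}_{\sigma'}$ is bounded as in Step 1 by $C(1+\nnorm{\Psi_2}^2_{\sigma'})$.

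\textbf{Step 3: the product estimate \eqref{eqn.VPsi.nonlinear.Lipschitz}.} I write
\[
V_{\Psi_1}\Psi_1-V_{\Psi_2}\Psi_2=V_{\Psi_1}(\Psi_1-\Psi_2)+(V_{\Psi_1}-V_{\Psi_2})\Psi_2
\]
and apply \cref{lemma:multiplication} with $s=1$, noting that $\nnorm{f\Psi}_{\sigma'}=\norm{\pscal{\cP}(f\Psi)}_{\sigma'}$. This gives
\[
\nnorm{f(x_1)\Psi}_{\sigma'}\le C\norm{f}_{\sigma'}\nnorm{\Psi}_{\sigma'}+C\norm{(1-\Delta)^{1/2}f}_{\sigma'}\norm{\Psi}_{\sigma'}.
\]
So I need a bound on $(1-\Delta)^{1/2}V_\Psi$, that is, one more derivative. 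To get it I invoke \cref{lemma:boundspreliminaryTi} with $s=1$. This yields $\norm{(1-\Delta)^{-1/2}Q_{\Psi_1,\Psi_2}}_{\sigma'}$ bounded by terms of the form $\nnorm{\pscal{\cP}\Psi_1}_{\sigma'}\nnorm{\Psi_2}_{\sigma'}$, which produces exactly the mixed terms in \eqref{eqn.VPsi.nonlinear.Lipschitz}. The bound also needs that $(1-\Delta)^{1/2}$ essentially commutes with $K_\rho^{-1}(1-\Delta)$ on $\cA_{\sigma'}^\perp$, i.e.\ a version of \eqref{eq:resolventKDelta} with one extra half-derivative. I plan to obtain this by writing $K_\rho=\sqrt\rho(1-\Delta)\sqrt\rho+\text{l.o.t.}$ exactly as in the proof of \cref{lem.Krho}. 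The commutator $[(1-\Delta)^{1/2},\sqrt\rho^{\pm1}]$ is bounded on $\cA_{\sigma'}$ by \cref{lemma:multiplication} and \cref{lemma:more_bounds_rho}.

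\textbf{Main obstacle.} The main obstacle is this half-derivative gain for $K_\rho^{-1}$ on $\cA_{\sigma'}^\perp$. One also has to check that each $\pscal{\cP}$ in \cref{lemma:boundspreliminaryTi} is counted against the $\nnorm{\cdot}$ weight, so that only a single extra $\pscal{\cP}$ falls on the difference or on one factor. The rest is bookkeeping with bilinearity, and the resulting constants depend only on the data quantities in \eqref{eq.ass.rho}.
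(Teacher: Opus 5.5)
Your proposal is correct and follows the paper's proof. Both derive \eqref{eqn.VPsi.bdd} and \eqref{eqn.VPsi.cts.in.time.bound} from \cref{lem.Krho}, \eqref{eq:K_resolvent_continuous} and \cref{lemma:boundspreliminaryTi} with $s=0$. Both then obtain \eqref{eqn.VPsi.nonlinear.Lipschitz} from three ingredients: boundedness of $(1-\Delta)^{1/2}K_\rho^{-1}(1-\Delta)^{1/2}$ on $\cA^\perp_{\sigma'}$, \cref{lemma:boundspreliminaryTi} with $s=1$, and the product rule. The only variation is how you get the half-derivative bound: you use the $\sqrt\rho(1-\Delta)\sqrt\rho$ factorization and commutators with $(1-\Delta)^{1/2}$, which requires the proof of \cref{lemma:multiplication} and not just its statement. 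The paper instead uses the equivalence $\norm{(1-\Delta)^{1/2}\cdot}_{\sigma'}\sim\norm{\cdot}_{\sigma'}+\sum_j\norm{\partial_j\cdot}_{\sigma'}$ and the identity $[K_\rho^{-1},\partial_j]=K_\rho^{-1}[\partial_j,K_\rho]K_\rho^{-1}$, whose inner commutator is a plain multiplication operator; both routes work.
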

\begin{proof}
As a helpful intermediate bound, we first consider $K_\rho^{-1} Q_{\Psi_1,\Psi_2}$, where $Q_{\Psi_1,\Psi_2}$ is as given in \eqref{eq:T_Psi_chi}; note that $Q_{\Psi_1,\Psi_2}\in \cA_{\sigma'}^\perp$ by \cref{lemma:boundspreliminaryTi}. 
First, we establish a bound on $K_\rho^{-1}(1-\Delta)$ as an operator in the $\nnorm{\cdot}_{\sigma'}$-norm. 
For brevity of notation we write $\norm{\cdot}_{\sigma'\to\sigma'}$ also for the operator norm on $\cA_{\sigma'}^\perp$. 
Note that for any $f\in \cA_{\sigma'}^\perp$
\begin{align*}
    \nnorm{K_\rho^{-1} (1-\Delta)f}_{\sigma'} 
    & = \norm{(1-\Delta)^{1/2}K_\rho^{-1}(1-\Delta)^{1/2}(1-\Delta)^{1/2}f}_{\sigma'} 
    \\ & \leq \norm{(1-\Delta)^{1/2}K_\rho^{-1}(1-\Delta)^{1/2}}_{\sigma' \to \sigma'} \nnorm{f}_{\sigma'} 
\end{align*}
That is, it suffices to bound the operator $(1-\Delta)^{1/2}K_\rho^{-1}(1-\Delta)^{1/2}$ as an operator on $\cA_{\sigma'}^\perp$. 
For this we note that $\norm{(1-\Delta)^{1/2}\cdot}_{\sigma'} $ and $\norm{\cdot}_{\sigma'} +\sum_{j=1}^D \norm{\partial_j\cdot}_{\sigma'} $ 
are equivalent norms. Thus,
\begin{align*}
&\norm{(1-\Delta)^{1/2}K_\rho^{-1}(1-\Delta)^{1/2}}_{\sigma' \to \sigma'}\\
&\quad 
\leq C \norm{K_\rho^{-1}(1-\Delta)^{1/2}}_{\sigma' \to \sigma'}
+C\sum_j \norm{\partial_j K_\rho^{-1}(1-\Delta)^{1/2}}_{\sigma' \to \sigma'}\\
&\quad \leq C \norm{K_\rho^{-1}(1-\Delta)^{1/2}}_{\sigma' \to \sigma'}
+C \sum_j\norm{K_\rho^{-1}\partial_j (1-\Delta)^{1/2}}_{\sigma' \to \sigma'}
\\* & \qquad 
+C \sum_j\norm{[K_\rho^{-1},\partial_j] (1-\Delta)^{1/2}}_{\sigma' \to \sigma'}\\
&\quad \leq C \norm{K_\rho^{-1} (1-\Delta)}_{\sigma' \to \sigma'}
+C \sum_j\norm{K_\rho^{-1}\nabla\cdot [\rho,\partial_j] \nabla K_\rho^{-1}(1-\Delta)^{1/2}}_{\sigma' \to \sigma'}
.
\end{align*}
The first term is bounded due to \cref{lem.Krho}. After moving the gradient to the left in the second term we obtain that it is bounded by
\begin{align*}
&\left(
\norm{K_\rho^{-1}\Delta}_{\sigma' \to \sigma'}
\sum_j\norm{ \partial_j \rho}_{\sigma'}
+\sum_{j,k}\norm{K_\rho^{-1}\partial_k}_{\sigma' \to \sigma'}
\norm{\partial_j \partial_k \rho}_{\sigma'}
\right)
\norm{K_\rho^{-1}(1-\Delta)^{1/2}}_{\sigma' \to \sigma'}. 
\end{align*}
Here we expanded the commutators to see that they are in fact multiplication operators. Thus, by \cref{lemma:multiplication}, they are bounded by $\norm{(1-\Delta)\rho}_{\sigma'}$.
We conclude that   
\begin{equation*}
     \nnorm{K_\rho^{-1} (1-\Delta)f}_{\sigma'} \leq C \nnorm{f}_{\sigma'}. 
\end{equation*}
With this bound and \cref{lemma:boundspreliminaryTi} we conclude
\begin{align}
    \nnorm{K_\rho^{-1} Q_{\Psi_1,\Psi_2}}_{\sigma'} 
    & \leq C \nnorm{(1-\Delta)^{-1}Q_{\Psi_1,\Psi_2}}_{\sigma'}
    \nn \\ & 
    \leq C \left(\nnorm{\pscal{\cP}\Psi_1}_{\sigma'} \nnorm{\Psi_2}_{\sigma'} + \nnorm{\Psi_1}_{\sigma'} \nnorm{\pscal{\cP}\Psi_2}_{\sigma'}\right). 
\label{eqn.bdd.Krho.inv.Q}
\end{align}
Then, we write 
\begin{equation*}
    V_{\Psi_1} (t)\Psi_1 - V_{\Psi_2}(t)\Psi_2
    =
   ( V_{\Psi_1}(t)- V_{\Psi_2}(t)) \Psi_1 + V_{\Psi_2}(t) (\Psi_1 - \Psi_2).
\end{equation*}
Recalling the definition of $V_\Psi(t)$ in \eqref{eq:VPsi} we have 
\begin{equation*}
    V_{\Psi_1}(t) - V_{\Psi_2}(t) = \frac{1}{2} K_{\rho(t)}^{-1}(Q_{\Psi_1} - Q_{\Psi_2})
    = \frac{1}{2} \Re K_{\rho(t)}^{-1} Q_{\Psi_1 + \Psi_2, \Psi_1 - \Psi_2}. 
\end{equation*}
Using \eqref{eqn.bdd.Krho.inv.Q} we thus conclude the bound 
\begin{multline*}
    \nnorm{V_{\Psi_1} (t)- V_{\Psi_2}(t)}_{\sigma'} 
    \leq C \nnorm{\Psi_1 + \Psi_2}_{\sigma'} \nnorm{\pscal{\cP}(\Psi_1 - \Psi_2)}_{\sigma'}
    \\
    + C \nnorm{\pscal{\cP}(\Psi_1 + \Psi_2)}_{\sigma'} \nnorm{\Psi_1 - \Psi_2}_{\sigma'}. 
\end{multline*}
Similarly, recalling \eqref{eq.ass.rho} and again using \eqref{eqn.bdd.Krho.inv.Q}, we have 
\begin{align}
    \nnorm{V_{\Psi_2}}_{\sigma'} 
    & \leq
    \frac{1}{2}\nnorm{K_{\rho}^{-1}(\partial_t^2\rho + \Delta^2\rho)}_{\sigma'} + \frac{1}{2}\nnorm{K_\rho^{-1}Q_{\Psi_2,\Psi_2}}_{\sigma'}
    \leq 
    C + C \nnorm{\Psi_2}_{\sigma'} \nnorm{\pscal{\cP}\Psi_2}_{\sigma'}. 
    \label{eqn.bdd.VPsi.|||.norm}
\end{align}
Finally, by \cref{lemma:multiplication} (which immediately implies a product rule for $\nnorm{\cdot}_{\sigma'}$) we thus conclude the proof of \eqref{eqn.VPsi.nonlinear.Lipschitz}.

Next, to prove \eqref{eqn.VPsi.bdd} and \eqref{eqn.VPsi.cts.in.time.bound} we note, using \cref{lemma:boundspreliminaryTi}, similarly as \eqref{eqn.bdd.Krho.inv.Q}, that
\begin{align}
    \norm{K_\rho^{-1}Q_{\Psi_1,\Psi_2}}_{\sigma'}
    & \leq \norm{K_\rho^{-1}(1-\Delta)}_{\sigma'\to\sigma'} \norm{(1-\Delta)^{-1}Q_{\Psi_1,\Psi_2}}_{\sigma'}
    \leq C \nnorm{\Psi_1}_{\sigma'} \nnorm{\Psi_2}_{\sigma'}. 
    \label{eqn.bdd.Krho.||-norm}
\end{align}
With this, \eqref{eqn.VPsi.bdd} follows similarly as \eqref{eqn.bdd.VPsi.|||.norm} as 
\begin{equation*}
    \norm{V_{\Psi_1}(t)}_{\sigma'} 
    \leq
    \frac{1}{2}\norm{K_{\rho}^{-1}(\partial_t^2\rho + \Delta^2\rho)}_{\sigma'} + \frac{1}{2}\norm{K_\rho^{-1}Q_{\Psi_1,\Psi_1}}_{\sigma'}
    \leq 
    C + C \nnorm{\Psi_1}_{\sigma'}^2. 
\end{equation*}

Finally, to prove \eqref{eqn.VPsi.cts.in.time.bound} we write 
\begin{equation}\label{eqn.split.VPsi.cts}
    \norm{V_{\Psi_1}(t)-V_{\Psi_2}(t')}_{\sigma'}
    \leq 
\norm{V_{\Psi_1}(t)-V_{\Psi_2}(t)}_{\sigma'}
+\norm{V_{\Psi_2}(t)-V_{\Psi_2}(t')}_{\sigma'}.
\end{equation}
Using \eqref{eqn.bdd.Krho.||-norm}, the first term in \eqref{eqn.split.VPsi.cts} is bounded as 
\begin{equation*}
    \norm{V_{\Psi_1}(t)-V_{\Psi_2}(t)}_{\sigma'}
    \leq C \nnorm{(\Psi_1 + \Psi_2)}_{\sigma'} \nnorm{\Psi_1- \Psi_2}_{\sigma'}.
\end{equation*}
For the second term in \eqref{eqn.split.VPsi.cts} we write 
\begin{align*}
    V_{\Psi_2}(t)-V_{\Psi_2}(t')
    & = \frac{1}{2} K_{\rho(t)}^{-1}\left(-\partial_t^2\rho(t) - \Delta^2\rho(t) + \partial_t^2\rho(t') + \Delta^2\rho(t')\right)
    \\ &  
    \quad 
     + \frac{1}{2} \left(K_{\rho(t)}^{-1} - K_{\rho(t')}^{-1}\right)\left(-\partial_t^2\rho(t') - \Delta^2\rho(t') + Q_{\Psi_2}\right)
\end{align*}
By \cref{lem.Krho} we bound the first term here by 
\begin{multline*}
    \norm{K_{\rho(t)}^{-1}\left(\partial_t^2\rho(t) + \Delta^2\rho(t) - \partial_t^2\rho(t') - \Delta^2\rho(t')\right)}_{\sigma'}
    \\ 
    \leq C \norm{\partial_t^2\rho(t) + \Delta^2\rho(t) - \partial_t^2\rho(t') - \Delta^2\rho(t')}_{\sigma'}. 
\end{multline*}
To bound the second term, we use the continuity bound in \eqref{eq:K_resolvent_continuous} of \cref{lem.Krho}. 
By bounding the $Q_{\Psi_2}$-term as above using \cref{lemma:boundspreliminaryTi} we get 
\begin{align*}
    & \norm{\left(K_{\rho(t)}^{-1} - K_{\rho(t')}^{-1}\right)\left(-\partial_t^2\rho(t') - \Delta^2\rho(t') + Q_{\Psi_2}\right)}_{\sigma'}
    \\ & \quad 
    \leq C \norm{(1-\Delta)^{1/2}(\rho(t) - \rho(t'))}_{\sigma'} 
     \norm{ \partial_t^2 \rho(t') + \Delta^2 \rho(t')}_{\sigma'}
     \\ & \qquad 
     + C \norm{(1-\Delta)^{1/2}(\rho(t) - \rho(t'))}_{\sigma'} \norm{(1-\Delta)^{-1}Q_{\Psi_2}}_{\sigma'}
     \\ & \quad 
     \leq C \left(1 + \nnorm{\Psi_2}_{\sigma'}^2\right) \norm{(1-\Delta)^{1/2}(\rho(t) - \rho(t'))}_{\sigma'}. 
\end{align*}
Combining this with the above bounds we conclude the proof of \eqref{eqn.VPsi.cts.in.time.bound}. 
\end{proof}

\begin{lemma}[Contraction estimate]
\label{lem.VPsi.nonlinear.Lipschitz}
Let $\pscal{\cP}\Psi_0\in \cB_\sigma$, let $\rho\in C^2([0,T];\cA_\sigma)$ with $\int \rho = N$ satisfy \eqref{eq.ass.rho}, and let $\cG$ be as defined in \eqref{eq:defG}.
Then, there exists a constant $C>0$ such that for all $0 < T' < T$
we have
\begin{equation*}
    \nnorm{\cG[\chi_1] - \cG[\chi_2]}_{T',\sigma}
    \leq 
    C T' \left(1+\nnorm{\chi_1}_{T',\sigma}  T' + \nnorm{\chi_2}_{T',\sigma} T'\right)^2
    \nnorm{\chi_1 - \chi_2}_{T',\sigma}        
    .
\end{equation*}
\end{lemma}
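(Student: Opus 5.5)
The plan is to work directly from the definition of the Baouendi--Goulaouic norm. Since $\cG[\chi](0)=0$ and
\[
\partial_t \cG[\chi](t) = -i e^{it\cH_0}\sum_{j=1}^N V_{\Psi_\chi(t)}(t,x_j)\Psi_\chi(t),\qquad \Psi_\chi(t):=e^{-it\cH_0}(\Psi_0+\chi(t)),
\]
it suffices to bound, for each $0\le\sigma'<\sigma$ and $0\le t\le T'(\sigma-\sigma')/\sigma$, the quantity $\nnorm{\partial_t(\cG[\chi_1]-\cG[\chi_2])(t)}_{\sigma'}$, multiplied by the weight $\frac{\sigma-\sigma'}{\sigma}\sqrt{1-\frac{t\sigma}{T'(\sigma-\sigma')}}$. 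There is no time integral here: the integral defining $\cG$ is undone by the $\partial_t$ in the norm. The resulting factor $T'$ will instead come from \cref{lemma:BGnormbounds}.

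First I remove the unitary groups. Since $\pscal{\cP}$ commutes with $\cH_0$, \cref{lemma:Vunitary.norm.sigma} gives the same bound for the $\nnorm{\cdot}_{\sigma'}$ norm. Hence $e^{\pm it\cH_0}$ may be dropped at the cost of a constant, both on the outside and inside $\Psi_\chi$. By symmetry of the $N$ terms it is enough to treat $j=1$. I then apply the Lipschitz estimate \eqref{eqn.VPsi.nonlinear.Lipschitz} of \cref{lem.VPsi.bdds}, with $\Psi_k=\Psi_{\chi_k}(t)$ at regularity $\sigma'$. This bounds the integrand by
\[
C\bigl(1+A\,B\bigr)\nnorm{\chi_1-\chi_2}_{\sigma'}+C A^2\nnorm{\pscal{\cP}(\chi_1-\chi_2)}_{\sigma'},
\]
where $A=\sum_k\nnorm{\Psi_0+\chi_k(t)}_{\sigma'}$ and $B=\sum_k\nnorm{\pscal{\cP}(\Psi_0+\chi_k(t))}_{\sigma'}$. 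Here $\Psi_{\chi_1}-\Psi_{\chi_2}=e^{-it\cH_0}(\chi_1-\chi_2)$, so $\Psi_0$ cancels in the differences.

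Next I control each factor through \cref{lemma:BGnormbounds}. For $(\sigma',t)$ in the admissible region we have $\sigma'\le\sigma_{T'}(t)$. Then \eqref{eq:BG_norm1} gives $\nnorm{\chi_k(t)}_{\sigma'}\le CT'\nnorm{\chi_k}_{T',\sigma}$, and also $\nnorm{\chi_1-\chi_2}_{\sigma'}\le CT'\nnorm{\chi_1-\chi_2}_{T',\sigma}$. Combined with $\nnorm{\Psi_0}_{\sigma}<\infty$ (the hypothesis $\pscal{\cP}\Psi_0\in\cB_\sigma$), this yields $A\le C(1+T'\nnorm{\chi_1}+T'\nnorm{\chi_2})$. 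The terms carrying an extra $\pscal{\cP}$ are the dangerous ones. By \eqref{eq:BGnormPchi}, each of $\nnorm{\pscal{\cP}\chi_k(t)}_{\sigma'}$ and $\nnorm{\pscal{\cP}(\chi_1-\chi_2)(t)}_{\sigma'}$ is bounded by $CT'\nnorm{\cdot}_{T',\sigma}$ times the inverse of the weight $\frac{\sigma-\sigma'}{\sigma}\sqrt{1-\frac{t\sigma}{T'(\sigma-\sigma')}}$. The remaining piece $\nnorm{\pscal{\cP}\Psi_0}_{\sigma'}\le\nnorm{\pscal{\cP}\Psi_0}_\sigma$ is a constant.

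The main point to check is that \emph{exactly one} factor with $\pscal{\cP}$ appears in each term. This is what the structure of \eqref{eqn.VPsi.nonlinear.Lipschitz} guarantees: it comes from the $\nnorm{\cdot}$-weight ``linearizing'' the nonlinearity. Consequently the singular inverse weight appears at most once per term and is cancelled by the weight in the definition of $\nnorm{\cdot}_{T',\sigma}$; the weight itself is at most $1$ and absorbs the harmless terms. Collecting terms, every contribution is of the form $CT'\,(1+T'\nnorm{\chi_1}+T'\nnorm{\chi_2})^2\nnorm{\chi_1-\chi_2}_{T',\sigma}$. In the term $AB\,\nnorm{\chi_1-\chi_2}$, the factor $T'$ is supplied by $\nnorm{\chi_1-\chi_2}$, and the factor $T'$ coming from $B$ is absorbed into $(1+T'\nnorm{\chi_k})$, since $T'<T$ is bounded. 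Taking the supremum over $(\sigma',t)$ concludes the estimate. The constant depends on $T$, $\sigma$, $\Psi_0$ and $\rho$, but not on $T'$.
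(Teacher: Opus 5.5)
Your proposal is correct and follows the paper's proof essentially step by step: remove the unitary groups with \cref{lemma:Vunitary.norm.sigma}, apply \eqref{eqn.VPsi.nonlinear.Lipschitz}, bound the unweighted factors by \eqref{eq:BG_norm1}, and let the weight cancel the single $\pscal{\cP}$-factor in each term via \eqref{eq:BGnormPchi}. One small correction: the hypothesis $\pscal{\cP}\Psi_0\in\cB_\sigma$ gives only $\nnorm{\Psi_0}_\sigma<\infty$, not $\nnorm{\pscal{\cP}\Psi_0}_\sigma<\infty$, so $\nnorm{\pscal{\cP}\Psi_0}_{\sigma'}$ is not a constant. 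Instead, keep it under the weight (as the paper does) and use \cref{lem.bdd.gradient} to get $\frac{\sigma-\sigma'}{\sigma}\nnorm{\pscal{\cP}\Psi_0}_{\sigma'}\leq(1+\sigma^{-1})\nnorm{\Psi_0}_\sigma$.
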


\begin{proof}
Recall \cref{def.BG.space}; thus, we need to bound
$$
\nnorm{\cG[\chi_1] - \cG[\chi_2]}_{T',\sigma}=
\sup_{\substack{0\leq\sigma'<\sigma\\ 0\leq t\leq T'\frac{\sigma-\sigma'}{\sigma}}}
\nnorm{\partial_t\left(\cG[\chi_1] - \cG[\chi_2]\right)}_{\sigma'}\frac{\sigma-\sigma'}{\sigma}\sqrt{1-\frac{t\sigma}{T'(\sigma-\sigma')}}.$$
By the definition of $\cG$ in \eqref{eq:defG}, this follows from controlling
\begin{equation*}
    \begin{split}
\sup_{\substack{0\leq\sigma'<\sigma\\ 0\leq t\leq T'\frac{\sigma-\sigma'}{\sigma}}}
& \nnorm{V_{\Psi_1(t)} (t) \Psi_1(t) - V_{\Psi_2(t)} (t)\Psi_2(t)}_{\sigma'} 
\frac{\sigma-\sigma'}{\sigma}\sqrt{1-\frac{t\sigma}{T'(\sigma-\sigma')}}
,
    \end{split}
\end{equation*}
where we denote $\Psi_k(t)=e^{-it\cH_0}(\Psi_0+\chi_k(t))$ for brevity and
where we drop the $x_j$ in $V$ for simplicity, this just amounts to a factor $N$. Note that we also used \cref{lemma:Vunitary.norm.sigma} to bound the first appearance of the unitary group.
From \eqref{eqn.VPsi.nonlinear.Lipschitz} we know
\begin{align*}    
    & \nnorm{V_{\Psi_1(t)} (t) \Psi_1(t) - V_{\Psi_2(t)} (t)\Psi_2(t)}_{\sigma'} 
    \\ & \quad 
    \leq 
    C \Bigl(1 + 
        \left(\nnorm{\Psi_1(t)}_{\sigma'} + \nnorm{\Psi_2(t)}_{\sigma'}\right)
        \left(\nnorm{\pscal{\cP}\Psi_1(t)}_{\sigma'} + \nnorm{\pscal{\cP}\Psi_2(t)}_{\sigma'}\right) 
    \Bigr)
    \\ & \qquad \quad \times 
    \nnorm{\Psi_1 (t)- \Psi_2(t)}_{\sigma'}
    \\ & \qquad 
    + C \left(\nnorm{\Psi_1(t)}_{\sigma'} + \nnorm{\Psi_2(t)}_{\sigma'}\right)^2 \nnorm{\pscal{\cP}(\Psi_1(t) - \Psi_2(t))}_{\sigma'}.
\end{align*}
Note that by \cref{lemma:Vunitary.norm.sigma} and 
\eqref{eq:BG_norm1},
$$
\nnorm{\Psi_k(t)}_{\sigma'} 
\leq C \nnorm{\Psi_0+\chi_k(t) }_{\sigma'} 
\leq C (\nnorm{\Psi_0}_{\sigma}  +T'\nnorm{\chi_k}_{T',\sigma} ).
$$
Similarly,
$$
\nnorm{\Psi_1(t)-\Psi_2(t)}_{\sigma'} 
\leq C T'\nnorm{\chi_1-\chi_2}_{T',\sigma} .
$$
Combining the above, we obtain
\begin{align*}
& \nnorm{\cG[\chi_1] - \cG[\chi_2]}_{T',\sigma} 
\\ & \quad 
\leq CN T' \nnorm{\chi_1-\chi_2}_{T',\sigma}
\sup_{\substack{0\leq\sigma'<\sigma\\ 0\leq t\leq T'\frac{\sigma-\sigma'}{\sigma}}}
\frac{\sigma-\sigma'}{\sigma}\sqrt{1-\frac{t\sigma}{T'(\sigma-\sigma')}}
\Biggl[ 
    1 + 
        \Bigl(\nnorm{\Psi_0}_{\sigma}  
        \\ & \qquad \quad 
        + T'\nnorm{\chi_1}_{T',\sigma}  
        + T'\nnorm{\chi_2}_{T',\sigma} 
\Bigr)
         \Bigl(\nnorm{\pscal{\cP}\Psi_0}_{\sigma'}+\nnorm{\pscal{\cP} \chi_1(t)}_{\sigma'} + \nnorm{\pscal{\cP}\chi_2(t)}_{\sigma'}
         \Bigr) 
\Biggr]
   \\
    & \qquad +CN 
 \left(\nnorm{\Psi_0}_\sigma + 
 \nnorm{\chi_1}_{T',\sigma} T'  + \nnorm{\chi_2}_{T',\sigma} T'
\right)^2
\\ & \qquad \quad \times 
\sup_{\substack{0\leq\sigma'<\sigma\\ 0\leq t\leq T'\frac{\sigma-\sigma'}{\sigma}}} 
\frac{\sigma-\sigma'}{\sigma}\sqrt{1-\frac{t\sigma}{T'(\sigma-\sigma')}}
\nnorm{\pscal{\cP}(\chi_1(t)-\chi_2(t))}_{\sigma'}. 
\end{align*}
Now, using \eqref{eq:BGnormPchi} yields the desired bound.
\end{proof}

\subsection{The Banach fixed-point argument (Proof of \texorpdfstring{\cref{prop.solve.NLS}}{Proposition~\ref*{prop.solve.NLS}})}
\label{sec.Banach.fixed.point}

We first establish existence and uniqueness of solutions in the (affine) Baouendi--Goulaouic space $\Psi_0 + \cB_{T',\sigma}$ (recall \cref{def.BG.space})
for $T'$ small enough. Subsequently, we lift the existence and uniqueness to the space $C^0([0,T''];\cB_{\sigma'})$ and hence prove \cref{prop.solve.NLS}. 
The existence and uniqueness in the (affine) Baouendi--Goulaouic space is stated as follows.

\begin{lemma}[Solutions in the Baouendi--Goulaouic space]
\label{lem.existence.uniqueness.BG.space}
Let $\rho\in C^2([0,T];\cA_{\sigma})$ with $\int \rho = N$ satisfy \eqref{eq.ass.rho} and let $\pscal{\cP}\Psi_0\in\cB_\sigma$. 
Then, for any $R>0$ sufficiently large and any $T'>0$ sufficiently small (depending on $R$), there exists a unique solution $\tilde \Psi \in \Psi_0 + \cB_{T',\sigma}$ with 
$\snnorm{\tilde \Psi - \Psi_0}_{T',\sigma} \leq R$
to the equation \eqref{eqn.NLS.V.Duhamel.tildePsi}. 
\end{lemma}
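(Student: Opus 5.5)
The plan is to apply the Banach fixed-point theorem to the map $\cG$ of \eqref{eq:defG} on the closed ball $\mathcal{X}_R=\{\chi\in\cB_{T',\sigma}: \snnorm{\chi}_{T',\sigma}\leq R\}$. Writing $\tilde\Psi=\Psi_0+\chi$, equation \eqref{eqn.NLS.V.Duhamel.tildePsi} is exactly the fixed-point equation $\chi=\cG[\chi]$. The space $\cB_{T',\sigma}$ is a Banach space: it is a weighted sup-norm space of $\partial_t\chi$ with the constraint $\chi(0)=0$, and completeness follows from completeness of the spaces $\cB_{\sigma'}$ together with \cref{lemma:BGnormbounds}. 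The ball $\mathcal{X}_R$ is closed in it. We therefore only need two things: $\cG$ maps $\mathcal{X}_R$ into itself, and $\cG$ is a strict contraction there.

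\emph{Self-mapping.} We write $\cG[\chi]=\cG[0]+(\cG[\chi]-\cG[0])$.

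For the first term, $\partial_t\cG[0](t)=-ie^{it\cH_0}\sum_j V_{\Psi(t)}(t,x_j)\Psi(t)$ with $\Psi(t)=e^{-it\cH_0}\Psi_0$. Applying \cref{lemma:Vunitary.norm.sigma}, the product rule, and the bound \eqref{eqn.bdd.VPsi.|||.norm} on $\nnorm{V_\Psi}_{\sigma'}$ gives
\[
\nnorm{\partial_t\cG[0](t)}_{\sigma'}\leq C\bigl(1+\nnorm{\Psi_0}_{\sigma}\nnorm{\pscal{\cP}\Psi_0}_{\sigma}\bigr)\nnorm{\Psi_0}_\sigma=:C_0 ,
\]
uniformly in $\sigma'\leq\sigma$ and $t$. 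Here the hypothesis $\pscal{\cP}\Psi_0\in\cB_\sigma$ is used. Since the weight in \cref{def.BG.space} is at most $1$, this yields $\snnorm{\cG[0]}_{T',\sigma}\leq C_0$, independently of $T'$.

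For the second term, \cref{lem.VPsi.nonlinear.Lipschitz} with $\chi_2=0$ gives
\[
\snnorm{\cG[\chi]-\cG[0]}_{T',\sigma}\leq CT'(1+RT')^2R .
\]
We first fix $R\geq 2C_0$, and then take $T'$ small enough that $CT'(1+RT')^2\leq\tfrac12$. This gives $\snnorm{\cG[\chi]}_{T',\sigma}\leq C_0+R/2\leq R$.

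\emph{Contraction.} With the same choice of $T'$, \cref{lem.VPsi.nonlinear.Lipschitz} directly gives $\snnorm{\cG[\chi_1]-\cG[\chi_2]}_{T',\sigma}\leq \tfrac12\snnorm{\chi_1-\chi_2}_{T',\sigma}$ on $\mathcal{X}_R$. Banach's theorem then yields existence and uniqueness of the fixed point in $\mathcal{X}_R$. This is precisely the claimed unique $\tilde\Psi\in\Psi_0+\cB_{T',\sigma}$ with $\snnorm{\tilde\Psi-\Psi_0}_{T',\sigma}\leq R$.

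\emph{Main obstacle.} The delicate point is checking that $\cG$ actually lands in $\cB_{T',\sigma}$, and not only that the norm bound holds. Concretely, we need $\cG[\chi]\in C^1([0,T'];H^2)$ with $\cG[\chi](0)=0$. Vanishing at $0$ is immediate from the integral form. For $C^1$ regularity, we need the integrand $s\mapsto e^{is\cH_0}\sum_j V_{\Psi(s)}(s,x_j)\Psi(s)$ to be continuous in $H^2$. This would follow from:
\begin{itemize}
\item the continuity estimate \eqref{eqn.VPsi.cts.in.time.bound} for $V_\Psi$, which uses $\rho\in C^2$;
\item strong continuity of $e^{is\cH_0}$ on $H^2$;
\item continuity of $\chi$ with values in $\cB_{\sigma'}$ for $\sigma'$ small, which \cref{lemma:BGnormbounds} provides near each time.
\end{itemize}
A further subtlety is that \cref{lemma:BGnormbounds} only controls $\chi(t)$ in $\sigma_{T'}(t)$-norms that degenerate as $t\to T'$. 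At $\sigma'=0$ the weight still equals $\sqrt{1-t/T'}$, so continuity in $H^2$ must be argued on $[0,T')$ and extended to $T'$, or the time interval slightly shrunk. I expect this bookkeeping to be the main source of care rather than any new estimate.
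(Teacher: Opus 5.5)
Your overall scheme matches the paper's: a Banach fixed point for $\cG$ on the $R$-ball of $\cB_{T',\sigma}$, the splitting $\cG[\chi]=\cG[0]+(\cG[\chi]-\cG[0])$, and \cref{lem.VPsi.nonlinear.Lipschitz}. However, your bound on $\cG[0]$ fails as written.

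The hypothesis $\pscal{\cP}\Psi_0\in\cB_\sigma$ only says $\nnorm{\Psi_0}_\sigma<\infty$. Your constant $C_0$ contains $\nnorm{\pscal{\cP}\Psi_0}_\sigma=\norm{\pscal{\cP}^2\Psi_0}_\sigma$, which need not be finite. This factor is exactly the one-derivative loss of the nonlinearity, visible in the bound $\nnorm{V_\Psi}_{\sigma'}\le C+C\nnorm{\Psi}_{\sigma'}\nnorm{\pscal{\cP}\Psi}_{\sigma'}$ from the proof of \cref{lem.VPsi.bdds}. So $\partial_t\cG[0](t)$ is not bounded uniformly in $\sigma'<\sigma$. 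What you do have, by \cref{lem.bdd.gradient} and $\pscal{\cP}\le 1+|\cP|$, is
\[
\nnorm{\partial_t\cG[0](t)}_{\sigma'}\le C\nnorm{\Psi_0}_{\sigma'}\bigl(1+\nnorm{\Psi_0}_{\sigma'}\nnorm{\pscal{\cP}\Psi_0}_{\sigma'}\bigr)\le C\Bigl(1+\frac{1}{\sigma-\sigma'}\Bigr)\bigl(1+\nnorm{\Psi_0}_\sigma^3\bigr).
\]
The missing point is that this blow-up as $\sigma'\to\sigma$ is absorbed by the factor $\frac{\sigma-\sigma'}{\sigma}$ in the weight of \cref{def.BG.space}. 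It is not enough that the weight is at most $1$. Using that factor gives $\snnorm{\cG[0]}_{T',\sigma}\le C(1+\sigma^{-1})(1+\nnorm{\Psi_0}_\sigma^3)$, still independent of $T'$, which is how the paper argues. As written, your version proves the lemma only under the stronger assumption $\pscal{\cP}^2\Psi_0\in\cB_\sigma$. With the corrected value in place of $C_0$, the rest of your argument goes through.

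Two smaller remarks.
\begin{enumerate}
\item In the contraction step, for $\chi_1,\chi_2$ in the $R$-ball the factor from \cref{lem.VPsi.nonlinear.Lipschitz} is $CT'(1+2RT')^2$, not $CT'(1+RT')^2$. Choose $T'$ with $CT'(1+2RT')^2\le\frac12$; this also covers the self-mapping condition.
\item The endpoint issue you flag is real. At $\sigma'=0$ the norm only controls $\nnorm{\partial_t\chi(t)}_0$ up to a factor $(1-t/T')^{-1/2}$, so membership in $C^1([0,T'];H^2)$ is unclear. But this concerns how $\cB_{T',\sigma}$ is defined rather than the estimates, and the paper does not address it either. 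The substantive content of the lemma is the absorption of the derivative loss described above.
\end{enumerate}
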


\begin{proof}
We use a fixed-point method using the map $\cG$ defined in \eqref{eq:defG}. We show that, for any $R>0$ large enough and $T'>0$ small enough (dependent on $R$), the map $\cG$ is a contraction on the ball of radius $R$ in $\cB_{T',\sigma}$. 
By the Banach fixed-point theorem this proves the desired.

Thus, we need to prove that, for appropriate $R,T'$, (1) that the map $\cG$ preserves the ball of radius $R$ and (2) that $\cG$ is a contraction. To prove that $\cG$ preserves the ball of radius $R$ we first bound $\nnorm{\cG[0]}_{T',\sigma}$. 
Recalling \cref{def.BG.space} of the $\nnorm{\cdot}_{T',\sigma}$-norm, we thus consider (using \cref{lemma:Vunitary.norm.sigma} to bound the unitary group) $\nnorm{V_{e^{-it\cH_0}\Psi_0}e^{-it\cH_0}\Psi_0}_{\sigma'}$. From \cref{lem.VPsi.bdds,lemma:Vunitary.norm.sigma,lem.bdd.gradient} we have
\begin{equation*}
    \nnorm{V_{e^{-it\cH_0}\Psi_0}e^{-it\cH_0}\Psi_0}_{\sigma'}
    \leq C \nnorm{\Psi_0}_{\sigma'} + C \nnorm{\Psi_0}_{\sigma'}^2 \nnorm{\pscal{\cP}\Psi_0}_{\sigma'}
    \leq \frac{C}{\sigma - \sigma'} \left(1 + \nnorm{\Psi_0}_{\sigma}^3\right). 
\end{equation*}
Thus, 
\begin{equation*}
    M := \nnorm{\cG[0]}_{T',\sigma} \leq \frac{C}{\sigma} \left(1 + \nnorm{\Psi_0}_{\sigma}^3\right) < \infty. 
\end{equation*}

Suppose now that $\nnorm{\chi_1}_{T',\sigma} \leq R$. Then, by \cref{lem.VPsi.nonlinear.Lipschitz} (labeling the constant $C_1$), we have
\begin{align*}
    \nnorm{\cG[\chi_1]}_{T',\sigma} 
    & \leq \nnorm{\cG[0]}_{T',\sigma} + \nnorm{\cG[\chi_1] - \cG[0]}_{T',\sigma} 
    \leq M + C_1 T' \left(1 + T'R\right)^2 R . 
\end{align*}
Next, suppose that $\nnorm{\chi_k}_{T',\sigma} \leq R$ for $k=1,2$. Then, by \cref{lem.VPsi.nonlinear.Lipschitz} we have 
\begin{equation*}
    \nnorm{\cG[\chi_1] - \cG[\chi_2]}_{T',\sigma} 
    \leq C_1 T' (1+2RT')^2 \nnorm{\chi_1 - \chi_2}_{T',\sigma}. 
\end{equation*}
We conclude that $\cG$ preserves the ball of radius $R$ and is a contraction on this set if
\begin{align*}
    M + C_1 T' \left(1 + T'R\right)^2 R \leq R
    \qquad &\textnormal{and} \qquad 
    C_1 T' (1+2RT')^2 < 1. 
\end{align*}
Clearly, for any
\begin{align*}
    R > M \qquad &\textnormal{and} \qquad 
    T' < \min \left\{\frac{1}{R}, \frac{R-M}{4C_1R}, \frac{1}{9C_1}\right\}
\end{align*}
these desiderata are met and we conclude the existence and uniqueness of solutions by the Banach fixed-point theorem. 
\end{proof}

With this lemma we now give the 

\begin{proof}[Proof of \cref{prop.solve.NLS}]
Note first that the conditions $\Psi_0\in \cB_\sigma$, $\rho \in C^2([0,T];\cA_\sigma)$, and $\rho \geq c>0$ imply, using \cref{lemma.1/rho.analytic,lem.bdd.gradient}, that $\pscal{\cP}\Psi_0\in \cB_\sigma$ and that $\rho$ satisfies \eqref{eq.ass.rho}, for some smaller value of $\sigma$. 
We nonetheless continue to denote this reduced analyticity parameter by $\sigma$. 

\smallskip
\paragraph*{\underline{Existence}:}
Applying \cref{lem.existence.uniqueness.BG.space} we conclude the existence of a (unique) solution in the (affine) Baouendi--Goulaouic space $\tilde\Psi\in \Psi_0 + \cB_{T',\sigma}$ to \eqref{eqn.NLS.V.Duhamel.tildePsi}. We claim that $\Psi(t) = e^{-it\cH_0}\tilde\Psi(t)$ is a solution to \eqref{eqn.NLS.V.Duhamel} of the desired regularity. It is clearly a solution. It remains to check the regularity.

First, we note that for $T'' = T'/2$ we have for any $t\leq T''$, by \cref{lemma:Vunitary.norm.sigma} and \eqref{eq:BG_norm1} 
\begin{equation*}
    \norm{\Psi(t)}_{\sigma/2} 
    \leq \nnorm{\Psi(t)}_{\sigma/2}
    \leq C \snnorm{\tilde\Psi(t)}_{\sigma/2} 
    \leq C \nnorm{\Psi_0}_{\sigma/2} + \snnorm{\tilde\Psi - \Psi_0}_{T',\sigma} < \infty. 
\end{equation*}
Then, we claim that $\Psi$ is continuous in time in $\cB_{\sigma/4}$. 
This follows by interpolating the norm as 
$$
\norm{\Psi}_{\sigma/4}^2
= \norm{e^{ \abs{\cP}\sigma/4}\Psi}_{H^2}^2
\leq \norm{e^{ \abs{\cP}\sigma/2}\Psi}_{H^2}\norm{\Psi}_{H^2}
=\norm{\Psi}_{\sigma/2}\norm{\Psi}_{H^2}.
$$
The continuity in $\cB_{\sigma/4}$ then follows from that in $\cB_0 = H^2$ and the boundedness in $\cB_{\sigma/2}$. 
We conclude the existence of solutions in the space $C^0([0,T'/2];\cB_{\sigma/4})$. 
Up to renaming $T'$ and defining $\sigma' = \sigma /4$ this is the existence claimed in the proposition.

\smallskip
\paragraph*{\underline{Uniqueness}:}
Next, we consider uniqueness. 
Suppose $\Psi_1,\Psi_2\in C^0([0,T']; \cB_{\sigma'})$ are two different solutions to \eqref{eqn.NLS.V.Duhamel} with $\Psi_1(0)=\Psi_2(0)=\Psi_0$. Let $t_0 = \inf\{t\geq 0 : \Psi_1(t)\ne \Psi_2(t)\}$ be the first time the two functions differ and assume without loss of generality that $t_0=0$ (otherwise we just start the evolution at $t_0$ and note that $\Psi_1(t_0)=\Psi_2(t_0)$ by continuity from below).
By reducing the value of $\sigma'$ we may assume that $\nnorm{\Psi_k(t)}_{\sigma'}\leq C$ uniformly in $0\leq t \leq T'$ (by \cref{lem.bdd.gradient}).

We bound the difference 
\begin{equation*}
\tilde\Psi_1(t) - \tilde\Psi_2(t) 
    = \left(e^{it\cH_0}\Psi_1(t) - \Psi_0\right) 
    - \left(e^{it\cH_0}\Psi_2(t) - \Psi_0\right)
\end{equation*}
in the $\nnorm{\cdot}_{ T'', \sigma'}$-norm using \cref{lem.VPsi.nonlinear.Lipschitz} for a sufficiently small $T''$. To apply this, we thus need to bound 
$\snnorm{\tilde\Psi_k - \Psi_0}_{T'',\sigma'}$. 
Here we use that $\tilde\Psi_k$ solves \eqref{eqn.NLS.V.Duhamel.tildePsi}. Recalling \cref{lemma:Vunitary.norm.sigma,lem.VPsi.bdds,lem.bdd.gradient} we have 
\begin{align*}
    \snnorm{\partial_t(\tilde\Psi_k - \Psi_0)}_{\sigma''} 
    & = \nnorm{e^{it\cH_0} \sum_{j=1}^N V_{e^{-it\cH_0}\tilde\Psi_k(t)}(t,x_j) e^{-it\cH_0} \tilde\Psi_k(t)}_{\sigma''}
    \\ & 
    \leq \frac{C}{\sigma' - \sigma''} \left(1 + \nnorm{\Psi_k(t)}_{\sigma'}^3\right)
    \leq \frac{C}{\sigma' - \sigma''}. 
\end{align*}
Thus, recalling \cref{def.BG.space}, we have $\snnorm{\tilde\Psi_k - \Psi_0}_{T'',\sigma'} \leq C/\sigma' =: C_1$. 
By \cref{lem.VPsi.nonlinear.Lipschitz} (labeling the constant as $C_2$), we conclude the bound
\begin{equation*}
    \snnorm{\tilde \Psi_1 - \tilde \Psi_2}_{T'',\sigma'} 
    \leq C_2 T'' (1 + 2C_1T'')^2 \snnorm{\tilde \Psi_1 - \tilde \Psi_2}_{T'',\sigma'}. 
\end{equation*}
For some $T''>0$ sufficiently small we have $C_2 T'' (1 + 2C_1T'')^2 < 1$ and thus, for all $t\leq T''$, that $\tilde \Psi_1(t) = \tilde \Psi_2(t)$ contradicting the assumption that $\Psi_1 \ne \Psi_2$. 
We conclude the uniqueness as claimed.
\end{proof}

\subsection{Final steps in the proof of \texorpdfstring{\cref{thm.main.V}}{Theorem~\ref*{thm.main.V}}}
\label{sec.final.steps.proof.thm.main}

Finally, we explain how to prove \cref{thm.main.V} using \cref{prop.solve.NLS}.

\begin{proof}[Proof of \cref{thm.main.V}]

We use both the existence and uniqueness parts of \cref{prop.solve.NLS} for solutions $\Psi$ to the nonlinear equation \eqref{eq:SEnonlinearV} to prove the existence and uniqueness statements of $V$ in \cref{thm.main.V}.

\smallskip
\paragraph*{\underline{Existence}:}
From \cref{prop.solve.NLS} we obtain a (unique) solution $\Psi\in C^0([0,T'];\cB_{\sigma'})$ to the nonlinear equation \eqref{eq:SEnonlinearV} for some time $T'\leq T$. We define then the potential $V(t) = V_{\Psi(t)}(t)$. What is left to show is that $V(t)$ has the desired regularity, i.e., $V\in C^0([0,T''];\cA_{\sigma''})$ for appropriate $T'', \sigma''$, and in fact correctly reproduces $\rho(t)$.

To prove the regularity in time we use 
\eqref{eqn.VPsi.bdd} and \eqref{eqn.VPsi.cts.in.time.bound} of \cref{lem.VPsi.bdds}. 
First, by \eqref{eqn.VPsi.bdd} and \cref{lem.bdd.gradient} we have that $\norm{V(t)}_{\sigma''} \leq C$ for any fixed $\sigma'' < \sigma'$ uniformly in $t\in [0,T']$. 
To obtain the continuity in time we first note that $\Psi$ is in fact continuous in time in the $\nnorm{\cdot}_{\sigma''}$-norm by an interpolation argument as in the proof of \cref{prop.solve.NLS} above. 
The continuity in time of $V$ (in the $\norm{\cdot}_{\sigma''}$-norm) then follows from \eqref{eqn.VPsi.cts.in.time.bound} using the continuity in time of $\rho, \partial_t^2\rho, \Delta^2\rho$, and $\Psi$. 
That is, $V\in C^0([0,T'];\cA_{\sigma''})$ for any $\sigma'' < \sigma'$.

Next, we check that the density is correctly reproduced.
To this end, first note that, since $\Psi$ is the solution to the Schrödinger equation with external potential $V\in C^0([0,T'];\cA_{\sigma''})$, we have the force-balance equation \eqref{eq:partialt^2rho} $\partial_t^2 \rho_\Psi = -\Delta^2\rho_\Psi - 2 K_{\rho_\Psi} V + Q_\Psi$, see \cref{prop.force.balance}.
By \cref{thm.density.is.analytic}
we in fact have $\rho_\Psi\in C^2([0,T'];\cA_{\sigma'''})$ 
for any $\sigma''' < \sigma''$. 
Next, by construction, we have $V=V_\Psi = \frac{1}{2}K_{\rho}^{-1}(-\partial_t^2\rho - \Delta^2\rho + Q_\Psi)$. Solving for $Q_\Psi$, recalling $K_\rho f = -\nabla\cdot(\rho\nabla f)$, and inserting in the force-balance equation above we find 
that the difference $h=\rho_\Psi - \rho \in C^2([0,T'], \cA_{\sigma'''})$ satisfies the equation
\begin{equation}\label{eqn.wave.density.difference}
\partial_t^2 h  = -\Delta^2 h + 2\nabla V\cdot \nabla h + 2h\Delta V , 
\end{equation}
where the initial conditions are $\partial_th(0)=h(0)=0$ by the conditions $\rho(0) = \rho_{\Psi_0}$ and $\partial_t \rho(0) = - \nabla j_{\Psi_0}$. 
This is a wave-like equation. 
Clearly, $h(t)=0$ for all $t\in[0,T']$ is a solution to \eqref{eqn.wave.density.difference}. Our aim is to show that it is the unique solution. This is rather standard. We give a simple proof using an energy method.  

Define the energy $E(t) = \pscal{h,(1+\Delta^2)h} + \|\partial_t h\|^2_{L^2} \geq 0$. Since $h$ is real-analytic in space and twice continuously differentiable in time, clearly $E$ is continuously differentiable in time. Moreover, $E(0)=0$. 
Then, using that $h$ solves \eqref{eqn.wave.density.difference},
\begin{align*}
    \frac{\rd}{\dt}E(t) 
    & = 2 \Re \pscal{\partial_t h, (1+\Delta^2) h} + 2 \Re \pscal{\partial_t h, \partial_t^2 h}
    \\ & 
    = 2 \Re \pscal{\partial_t h, h + 2\nabla V\cdot \nabla h + 2h\Delta V }
    \\ &
    \leq 
     C\norm{\partial_t h}_{L^2}^2 + C\norm{h}_{L^2}^2
    + C \norm{\nabla V}_{L^\infty}^2 \norm{\nabla h}_{L^2}^2 + C \norm{\Delta V}_{L^\infty}^2 \norm{h}_{L^2}^2
    \leq C E(t). 
\end{align*}
By Grönwall's lemma we conclude that $E(t) \leq e^{Ct}E(0) = 0$
and thus that $h(t)=0$, i.e., that $\rho_{\Psi(t)}=\rho(t)$, for all $t\in [0,T']$ as desired.

\smallskip
\paragraph*{\underline{Uniqueness}:}
Lastly, we verify the uniqueness of $V$. Assume by contradiction that there is another $\tilde V\in C^0([0,T''],\cA_{\sigma''})$ for some $0< \sigma''\leq\sigma'$ and $0<T''\leq T'$ satisfying
\eqref{eq:same_integral}, which here reads as $\int \tilde V = 0$,
for all $t\leq T''$, and such that the solution $\tilde\Psi$ to Schrödinger's equation~\eqref{eq:Schrodinger} with this $\tilde V$ satisfies $\rho_{\tilde\Psi(t)}=\rho(t)$ for $t\in[0,T'']$.

Assume without loss of generality that $\tilde V$ and $V$ start to differ at $t=0$ in the sense that
$
\inf \{ t\geq 0: V(t) \ne \tilde V (t)\} = 0
$
(otherwise, we just start the following argument at the corresponding time).
Then, by \cref{thm.density.is.analytic}, $\rho_{\tilde\Psi}\in C^2([0,T''];\cA_{\sigma'''})$ for $0 < \sigma''' < \sigma''$ and by \cref{prop.force.balance} it satisfies the force balance equation~\eqref{eq:partialt^2rho}. Therefore,
\begin{equation*}
    \partial_t^2 \rho (t)  =\partial_t^2 \rho_{\tilde\Psi(t)} 
		= - \Delta^2\rho_{\tilde\Psi(t)} - 2 K_{\rho_{\tilde\Psi(t)}} \tilde V(t)
        + Q_{\tilde\Psi(t)}
        = - \Delta^2\rho(t) - 2 K_{\rho (t)}\tilde V(t)
        + Q_{\tilde\Psi(t)}
\end{equation*}
implying $\tilde V (t) = V_{\tilde\Psi (t)}(t)$ by the definition in \eqref{eq:VPsi}; note that the fact $\int \tilde V  =0$ allows to invert $K_{\rho (t)}$ here. Thus, $\tilde\Psi$ is also a solution to the nonlinear Schrödinger equation~\eqref{eq:SEnonlinearV}. But then by the uniqueness part of \cref{prop.solve.NLS}, $\tilde\Psi (t) = \Psi (t)$ on some possibly smaller time interval $[0,T''']$ and
$$
\tilde V (t) = V_{\tilde\Psi (t)}(t)
= V_{\Psi (t)}(t)
=V (t)
$$
for $t\in [0,T''']$, which is the desired contradiction.
We conclude that $V$ is unique as claimed. 
\end{proof}

\medskip

\paragraph{\textbf{AI statement.}} Large Language Models (LLMs) were used in the final stage of preparing the manuscript for proofreading. All the results of the paper are solely due to the authors.

%%%%%%%%%%%%%%%%%%%%%%%%%%%%%%%%%%%%%%%%%%%%%%%%%%%%%%%%
%%%%%%%%%%%%%%%%%%%%%%%%%%%%%%%%%%%%%%%%%%%%%%%%%%%%%%%%
\appendix
%%%%%%%%%%%%%%%%%%%%%%%%%%%%%%%%%%%%%%%%%%%%%%%%%%%%%%%%
%%%%%%%%%%%%%%%%%%%%%%%%%%%%%%%%%%%%%%%%%%%%%%%%%%%%%%%%

\section{Notation}\label{sec.notation.appendix}
We collect here some notation used in the paper.

\begin{itemize}[leftmargin=1em]
\setlength\itemsep{0.2em}
\item $\Lambda = [0,1]^D$, the $D$-dimensional unit torus, 
\item $L^2_a(\Lambda^N) = \bigwedge^N L^2(\Lambda)$, the anti-symmetric subspace of $L^2(\Lambda^N)$, 
\item $\hat f(k) = \int_\Lambda f(x) e^{-ikx} \, \rd x$ with $k\in  2\pi\Z^D$, the Fourier transform for $f\in L^1(\Lambda)$, 
\item $\hat \Psi(k_1,\ldots,k_N)  =  \idotsint_{\Lambda^N} \Psi(x_1,\ldots,x_N) e^{-i\sum_{j=1}^N k_j x_j} \,\rd x_1 \!\cdots \rd x_N$ 
with $k_1, \ldots , k_N \in  2\pi\Z^D$, the Fourier transform for $\Psi\in L^1(\Lambda^N)$, 
\item $\cH_V = \sum_{j=1}^N -\Delta_{x_j} + \sum_{j=1}^N V(x_j) + \sum_{1\leq j < k \leq N} V_{ee}(x_j-x_k)$, the $N$-body Hamiltonian,
\item $\cH_0 = \sum_{j=1}^N -\Delta_{x_j} + \sum_{1\leq j < k \leq N} V_{ee}(x_j-x_k)$, the $N$-body Hamiltonian without an external potential,
\item $\cT = \sum_{j=1}^N -\Delta_{x_j}$, the kinetic energy operator, 
\item $\cP = \sum_{j=1}^N -i\nabla_{x_j}$, the total momentum operator,
\item $\cV=\sum_{j=1}^N V(x_j)$, the external potential operator,
\item $\pscal{x} = \sqrt{1+|x|^2}$ for $x\in \R^n$, the Japanese bracket.
\end{itemize}

\printbibliography

\end{document}